\documentclass[11pt]{article}
\usepackage[letterpaper, margin=1in]{geometry}

\usepackage{lmodern}
\usepackage{microtype}

\usepackage{booktabs}

\usepackage{amsmath, amssymb, amsthm}

\theoremstyle{plain}
\newtheorem{theorem}{Theorem}[section]
\newtheorem{lemma}{Lemma}[section]
\newtheorem{proposition}{Proposition}[section]
\newtheorem{corollary}{Corollary}[section]
\newtheorem{fact}{Fact}[section]
\newtheorem{remark}{Remark}[section]
\theoremstyle{definition}
\newtheorem{definition}{Definition}[section]

\usepackage{algorithm}
\usepackage{algpseudocode}
\algrenewcommand\algorithmicrequire{\textbf{Input:}}
\algrenewcommand\algorithmicensure{\textbf{Output:}}

\usepackage{hyperref}
\hypersetup{
  colorlinks = true, 
  urlcolor = blue, 
  linkcolor = magenta, 
  citecolor = blue 
}

\makeatletter
\AddToHook{env/algorithmic/before}{\def\@currentcounter{ALG@line}}

\renewcommand{\theHALG@line}{\thealgorithm.\arabic{ALG@line}}
\makeatother

\usepackage[nameinlink]{cleveref}

\crefalias{ALG@line}{line}

\crefname{fact}{fact}{facts}
\Crefname{fact}{Fact}{Facts}

\usepackage[svgnames]{xcolor}

\usepackage{natbib}

\renewcommand{\natural}{\mathbb N}
\newcommand{\integer}{\mathbb Z}
\newcommand{\real}{\mathbb R}

\newcommand{\calA}{\mathcal{A}}

\newcommand{\calC}{\mathcal{C}}
\newcommand{\calD}{\mathcal{D}}

\newcommand{\calF}{\mathcal{F}}
\newcommand{\calG}{\mathcal{G}}
\newcommand{\calH}{\mathcal{H}}

\newcommand{\calL}{\mathcal{L}}

\newcommand{\calN}{\mathcal{N}}

\newcommand{\calP}{\mathcal{P}}

\newcommand{\calU}{\mathcal{U}}

\newcommand{\calX}{\mathcal{X}}
\newcommand{\calY}{\mathcal{Y}}

\newcommand{\indicator}[1]{\mathbf{1}\{#1\}}
\newcommand{\INDICATOR}[1]{\mathbf{1}\left\{#1\right\}}

\newcommand{\abs}[1]{\lvert #1 \rvert}
\newcommand{\ABS}[1]{\left\lvert #1 \right\rvert}

\DeclareMathOperator*{\probability}{\mathbf{Pr}}
\DeclareMathOperator*{\expectation}{\mathbf{E}}

\newcommand{\pr}[2]{\probability_{#1}[#2]}
\newcommand{\PR}[2]{\probability_{#1}\left[#2\right]}
\newcommand{\ev}[2]{\expectation_{#1}[#2]}
\newcommand{\EV}[2]{\expectation_{#1}\left[#2\right]}

\newcommand{\pup}{p_\mathrm{up}}
\newcommand{\pdown}{p_\mathrm{down}}

\DeclareMathOperator*{\E}{\mathbf{E}}

\newcommand{\Dtrain}{\calD^{\mathrm{train}}}
\newcommand{\Dtest}{\calD^{\mathrm{test}}}
\newcommand{\errtrain}{\mathrm{err}_{\mathrm{train}}}
\newcommand{\errtest}{\mathrm{err}_{\mathrm{test}}}

\title{Iterative Chow Filtering for Learning with Distribution Shift}
\author{
    Gautam Chandrasekaran\thanks{All authors were supported in part by the NSF AI Institute for Foundations of Machine Learning (IFML).} \\ \texttt{gautamc@cs.utexas.edu} \\ UT Austin
    \and
    Georgios Gkrinias\footnotemark[1] \\ \texttt{gkrinias@cs.utexas.edu} \\ UT Austin
    \and 
    Adam R. Klivans\footnotemark[1] \\ \texttt{klivans@cs.utexas.edu} \\ UT Austin 
    \and
    Konstantinos Stavropoulos\footnotemark[1] \thanks{Also supported by the 2025 Apple Scholars in AI/ML PhD fellowship.} \\ \texttt{kstavrop@utexas.edu} \\ UT Austin
    \and
    Arsen Vasilyan\footnotemark[1] \\ \texttt{arsenvasilyan@gmail.com} \\ UT Austin
}
\date{}

\begin{document}

\maketitle

\begin{abstract}
Recent work due to Goel et al.~gave the first efficient algorithms for learning with distribution shift in the challenging PQ framework.  In this setting, a learner receives labeled training examples, unlabeled test examples, and must make correct predictions on the test set but is allowed to abstain from predicting on out-of-distribution points.  Their results rely on $\calL_2$ sandwiching  approximations, a strong requirement that leads to poor bounds for several basic function classes such as DNF formulas.

Here, we show that the weaker notion of $\calL_1$ sandwiching suffices for efficient PQ learning. As a consequence, we obtain the first quasipolynomial-time PQ learning algorithm for DNFs under the uniform distribution and essentially match the guarantees known for ordinary PAC learning. More broadly, our bounds provide exponential improvements for several classes including constant depth circuits and constant degree polynomial threshold functions.

Our main technical ingredient is \emph{Iterative Chow Filtering}, a new procedure that uses low-degree Chow parameters to identify and remove test points incompatible with the training distribution.
\end{abstract}

\newpage

\section{Introduction}

A central challenge in modern machine learning is to obtain reliability in the presence of distribution shift. Arguably the most well-studied formalization of the problem is domain adaptation~\citep{ben2006analysis,blitzer2007learning,mansour2009domadapt,ben2010theory,david2010impossibility}, where a learner observes labeled examples from a training distribution, unlabeled examples from a test distribution, and must output a hypothesis with low test error (despite never observing test labels). In the simplest version (often referred to as  covariate shift), the training and test labels are generated by the same unknown ground truth function $f^*$.  Despite more than two decades of work, most known guarantees in this literature are information-theoretic and are not accompanied by efficient algorithms.

This gap has motivated a recent line of work on establishing efficient algorithms for learning with distribution shift in the TDS model due to Klivans et al. \cite{klivans2024testablelearningdistributionshift} and the more fine-grained PQ framework of Goldwasser et al.~\cite{goldwasser2020beyond}. Most relevant to our work here is PQ learning, in which a learner is allowed to abstain on test points that appear out of distribution but must be accurate on the points on which it does make predictions. Formally, a PQ learner outputs a hypothesis $h$ and a selector $s$ such that, under the test distribution, the probability that $s(x)=1$ and $h(x)\neq f^*(x)$ is small, while under the training distribution, the probability that $s(x)=0$ is small. Thus, a PQ learner must provide accurate predictions on the portion of the test distribution that remains sufficiently similar to the training distribution.

Goel et al.~\cite{goel2024tolerantalgorithmslearningarbitrary} obtained the first efficient PQ learning algorithms for a variety of function classes, but several basic cases remained open.  For example, no subexponential-time PQ algorithm was known for learning DNF formulas with respect to the uniform distribution on the hypercube, one of the most heavily studied problems in computational learning theory. This is in sharp contrast with the standard realizable PAC setting where  the seminal result of Linial, Mansour, and Nisan~\cite{lmnac0} gives a quasipolynomial-time algorithm for learning constant-depth circuits under the uniform distribution via low-degree polynomial approximations.

The obstacle in the PQ setting is that known algorithms require a stronger form of polynomial approximation. The result of~\cite{goel2024tolerantalgorithmslearningarbitrary}, while also based on polynomial approximation, relies on $\calL_2$ sandwiching: for the target function $f^*$, one must construct polynomials $\pdown,\pup$ satisfying $\pdown(x)\le f^*(x)\le \pup(x)$ pointwise, and such that the expected squared gap $(\pup(x)-\pdown(x))^2$ under the training distribution is small. For DNFs, however, the best known $\calL_2$ sandwiching bounds are exponentially worse than the corresponding bounds for ordinary polynomial approximation, with degree bounds depending polynomially on the size.

In this work, we show that $\calL_1$ sandwiching suffices for efficient PQ learning. Here, the pointwise sandwiching requirement is the same, but the approximation guarantee only asks that the expected gap $\pup(x)-\pdown(x)$ be small, rather than its square. This distinction is crucial: by the celebrated result of Braverman~\cite{10.1145/1754399.1754401}, size-$s$ constant-depth circuits have $\calL_1$ sandwiching degree $\mathrm{poly}(\log s)$, matching the qualitative behavior of standard polynomial approximation. This essentially closes the gap between the known computational complexity of PAC learning and PQ learning for DNFs under the uniform distribution, yielding the first quasipolynomial-time PQ learning algorithm in this setting.
More broadly, our result implies exponential runtime improvements for several well-studied function classes (see \Cref{table:results-l1-sandwiching}). 

Our main technical ingredient is a new outlier-filtering procedure, which we call \emph{Iterative Chow Filtering} (ICF). The procedure combines ideas from filtering methods in robust learning and learning with distribution shift with the classical notion of Chow parameters, allowing us to convert $\calL_1$ sandwiching into the selective, pointwise coverage guarantees required by PQ learning.

\subsection{Our Results}

Our main result is a new computationally efficient algorithm for PQ learning that applies to any class admitting low-degree $\calL_1$ sandwiching polynomials. In \Cref{table:results-l1-sandwiching}, we present a number of concrete applications of our main result by combining known sandwiching-degree bounds for the corresponding classes.

\begin{table*}[!htbp]\begin{center}
\begin{tabular}{c c c c c} 
 \toprule
 & \textbf{Concept class} $\calC$ &  \textbf{Training Marginal} & \textbf{Our Work} & \textbf{Prior Work} \\ \midrule
1 & Degree-2 PTFs & 
\begin{tabular}{c}
    Uniform on $\{\pm 1\}^d$
\end{tabular}  & $\mathrm{poly}(d)$ & $2^{O(d)}$ \\ \midrule
2 & Constant-Degree PTFs & Standard Gaussian & $\mathrm{poly}(d)$ & $2^{\mathrm{poly}(d)}$ \\ \midrule
3 & \begin{tabular}{c} Functions of a Constant \\ Number of Halfspaces\end{tabular} & Isotropic Log-Concave & $\mathrm{poly}(d)$ & None \\ \midrule
4 & DNFs of Size $s$
 & Uniform on $\{\pm 1\}^d$   & $d^{ O(\log^2({s}))}$ & \begin{tabular}{c}$d^{\widetilde{O}(\sqrt{s})}$ \end{tabular} \\ \midrule
5 & Circuits 
of Size $s$, Depth $t$ 
 & Uniform on $\{\pm 1\}^d$   & $d^{ (\log({s}))^{O(t)}}$ & \begin{tabular}{c}$d^{\sqrt{s}\cdot (\log({s}))^{O(t)}}$ \\ {\footnotesize{only for formulas}} \end{tabular} \\
 \bottomrule
\end{tabular}
\end{center}
\caption{Runtime comparison between our PQ learning algorithm and the best previous results for PQ learning various classes up to constant error and rejection rate, even when both the training and the test labels are given by the same (unknown) function $f^*\in \calC$, i.e., when there is only covariate shift.}
\label{table:results-l1-sandwiching}
\end{table*}

More formally, we consider a feature space $\calX\subseteq \real^d$ and binary labels $\calY=\{0,1\}$. The learner has access to labeled samples from a training distribution $\calD^{\mathrm{train}}$ over $\calX\times\calY$, as well as unlabeled samples from the marginal $\calD^{\mathrm{test}}_\calX$ of a test distribution $\calD^{\mathrm{test}}$. In the general case captured by our results, neither distribution need be realizable. Since the learner has no access to test labels, the standard benchmark from domain adaptation \cite{ben2006analysis,blitzer2007learning} is the joint optimal error
\begin{equation}
    \lambda
    =
    \lambda(\calD^{\mathrm{train}},\calD^{\mathrm{test}};\calC)
    =
    \min_{f\in \calC}
    \Bigl(
        \errtrain(f)+\errtest(f)
    \Bigr),
    \label{equation:definition-lambda}
\end{equation}
where $\errtrain(f)=\pr{(x,y)\sim\Dtrain}{f(x)\neq y}$ and $\errtest(f)$ is defined analogously.

It will be useful to separate this benchmark into its training and test components. Let
\[
    \calC_{\lambda}
    =
    \{f\in\calC:\errtrain(f)+\errtest(f)=\lambda\}.
\]
We define
\[
    \lambda_{\mathrm{train}}
    =
    \min_{f\in\calC_\lambda}\errtrain(f)
    \qquad\text{and}\qquad
    \lambda_{\mathrm{test}}
    =
    \lambda-\lambda_{\mathrm{train}}.
\]
Thus, $\lambda_{\mathrm{train}}$ and $\lambda_{\mathrm{test}}$ correspond to the training and test errors of a joint optimizer, with ties broken in favor of smaller training error.

The formal definition of PQ learning of a concept class $\calC\subseteq\{\calX\to \calY\}$ with respect to a distribution $\calD$ over $\calX$ is as follows.
\begin{definition}
[PQ Learning, modification of \cite{goldwasser2020beyond}]
\label{def:pq}
Let $\epsilon,\delta,\eta\in(0,1)$. An algorithm $\calA$ is an $(\epsilon,\delta,\eta)$-PQ learner for $\calC$ with respect to $\calD$ if the following holds whenever $\Dtrain_\calX=\calD$. Given sufficiently many labeled samples from $\Dtrain$ and sufficiently many unlabeled samples from $\Dtest_\calX$, the algorithm outputs a classifier $h \colon\calX\to\{0,1\}$ and a selector $s \colon\calX\to\{0,1\}$ such that, with probability at least $1-\delta$:
\begin{enumerate}
    \item \emph{Selective accuracy.}
    The error on accepted test points satisfies
    \[
    \pr{(x,y)\sim\Dtest}{h(x)\neq y \land s(x)=1}
    \leq
    \lambda_{\mathrm{test}}
    +
    \frac{1}{\eta}
    \Bigl(
        \lambda_{\mathrm{train}}
        +
        \min_{f\in\calC}\errtrain(f)
    \Bigr)
    +
    \epsilon.
    \]
    \item \emph{Rejection rate.}
    The rejection probability under the training marginal satisfies
    \[
    \pr{x\sim\Dtrain_\calX}{s(x)=0}\leq \eta.
    \]
\end{enumerate}
\end{definition}

\begin{remark}
    \cite{goldwasser2020beyond} showed that there are simple examples where any PQ learner must either have accuracy $\Omega(\sqrt{\lambda_{\mathrm{train}}})$ or rejection rate $\Omega(\sqrt{\lambda_{\mathrm{train}}})$.
\end{remark}

We now provide a definition for the sandwiching degree of the concept class $\calC$ with respect to $\calD$.

\begin{definition}[\(\calL_1\) Sandwiching Degree] \label{def:l1_sandwiching} 
    We let $\calL_1(\calC,\calD,\epsilon)$ denote the minimum natural number $\ell$ such that for any $f\in\calC$, there are polynomials \(p_\mathrm{up}, p_\mathrm{down}\) of degree at most $\ell$ such that:
    \[ p_\mathrm{down}(x) \leq f(x) \leq p_\mathrm{up}(x) \text{ for all }x \in \calX \quad\text{ and }\quad
    \ev{x \sim \calD}{p_\mathrm{up}(x) - p_\mathrm{down}(x)} \leq \epsilon \, .\]
\end{definition}

Our main result is the following theorem, whose formal version can be found in \Cref{thm:pq}.

\begin{theorem}[PQ Learning via \(\calL_1\) Sandwiching] \label{thm:pq-intro} Let any \(\epsilon, \eta, \delta \in (0, 1)\) and suppose \(\calD\) is hypercontractive. Let $\ell = \calL_1(\calC,\calD,\epsilon\eta/c)$ for some appropriately large universal constant $c\ge 1$. There is an $(\epsilon,\delta,\eta)$-PQ learner for \(\calC\) with respect to \(\calD\) with sample complexity and runtime \(\mathrm{poly} (d^\ell, 1/\epsilon, 1/\eta, \log(1/\delta)^\ell )\).
\end{theorem}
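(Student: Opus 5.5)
The plan is to build the selector by an iterative filtering procedure on the test sample, using low-degree moments (``Chow parameters'' of test points against degree-$\ell$ monomials) as the certificate of compatibility with $\calD$. Then the classifier is obtained by $\calL_1$ polynomial regression on the training data. Concretely, I first run degree-$\ell$ $\calL_1$ regression on $\Dtrain$ to get a polynomial $p$ (thresholded to $h$) with $\ev{\Dtrain}{|p(x)-y|}\le \min_f\errtrain(f)+\epsilon\eta/c$. Next, take a test sample $S$ and maintain weights $w\colon S\to[0,1]$, initially $1$. In each round, compute the weighted test Chow vector $\sum_{x\in S}w(x)m(x)/|S|$ for $m$ ranging over an orthonormal degree-$\ell$ basis under $\calD$. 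Then search for a low-degree polynomial $q$ with $\ev{\calD}{q}$ small, $q\ge 0$ in the sense needed, but whose weighted test mean is large. If such $q$ exists, downweight points proportionally to $q(x)$ (soft filtering). Otherwise stop. The selector $s$ is defined by thresholding the final ``outlier score'', i.e.\ the accumulated $q$'s, so that it extends to fresh points.

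The key correctness step is the stopping condition. When no witness $q$ exists, every degree-$\ell$ polynomial of the relevant form has weighted test mean at most its $\calD$-mean plus slack. Apply this to the sandwiching gap $\pup-\pdown\ge 0$ for the joint optimizer $f\in\calC_\lambda$, and to $|p-\pdown|$-type expressions bounded via $\pup,\pdown$. This transfers the $\calL_1$ error bound from training to the accepted test points: on accepted test mass, $\E[|h-f|]\lesssim \E_{\calD}[\pup-\pdown]+\E_{\Dtrain}[|p-f|]$ scaled by the $1/\eta$ reweighting loss. Combined with $\errtest(f)=\lambda_{\mathrm{test}}$, this yields the selective accuracy bound. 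The factor $1/\eta$ arises because acceptance only preserves ratios of density up to $1/\eta$. For the rejection rate, I would use a potential argument. Each filtering step removes more test weight than the corresponding $\calD$-mass of $q$, by a constant factor, since $q$ is a witness. Summing, the total $\calD$-mass rejected is at most $\eta$, and the number of rounds is bounded by $\mathrm{poly}(1/\epsilon,1/\eta)$ since each round removes $\Omega(\epsilon\eta)$ weight.

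Hypercontractivity is used for two things. First, it lets empirical degree-$\ell$ moments of $\calD$ concentrate with $\mathrm{poly}(d^\ell,\log(1/\delta)^\ell)$ samples, which gives the stated sample complexity. Second, it bounds the tails of polynomials $q$ with bounded $\calL_2(\calD)$ norm, so that thresholding $q$ for the selector rejects little $\calD$-mass.

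The main obstacle I expect is the witness search. We need nonnegativity of $\pup-\pdown$ on the test points, but a generic low-degree polynomial found by optimization (e.g.\ an eigenvector/linear program over Chow coefficients) need not be nonnegative. I would first try restricting to $q=r^2$ or to clipped $|r|$ with $\|r\|_{\calL_2(\calD)}\le 1$, and using hypercontractivity to control $\E_\calD[|r|]$ versus $\E_\calD[r^2]$. Failing that, I would handle signed $q$ by filtering on its positive part while charging the negative part to the $\calD$-side error.
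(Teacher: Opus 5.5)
\textbf{There is a genuine gap in your transfer step.} Your filter compares only \emph{marginal} low-degree moments of the reweighted test sample with those of $\calD$. Nothing in the filter depends on the learned hypothesis. At termination you can therefore certify something like $\E_{\calD'}[(\pup-\pdown)s]\lesssim R\,\E_{\calD}[\pup-\pdown]+\epsilon$, i.e.\ that $f^\star$ is pinned between $\pdown$ and $\pup$ on the accepted region. You cannot certify anything about how $h$ behaves there, because $h$ is a threshold of the regression polynomial $p$ and is not a low-degree polynomial. The quantities you actually need to bound on accepted test points are $\indicator{h\neq f^\star}$, $|p-f^\star|$ and $|p-\pdown|$. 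None of them is a low-degree polynomial, so your stopping condition says nothing about them. Any polynomial that dominates them pointwise, such as $(p-\pdown)^2$ or $(\pup-\pdown)^2$, has $\calD$-mean governed by $\calL_2$ quantities, and neither $\calL_1$ regression nor $\calL_1$ sandwiching controls those. In particular, $p$ is only $\calL_1$-close to $f^\star$, with a non-vanishing error of about $\lambda_{\mathrm{train}}+\min_{f\in\calC}\errtrain(f)$.

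Hypercontractivity does not bridge this. For an $A$-hypercontractive $\calD$ it gives $\E_\calD[r^2]\le(2A)^{2\ell}\E_\calD[|r|]^2$, which loses a factor exponential in $\ell$. Using it would force the $\calL_1$ sandwiching error down to $(2A)^{-\ell}$, which is circular and fatal for the DNF/$\mathsf{AC}^0$ degree bounds. It would also amplify the non-vanishing regression error. Marginal-moment filtering is essentially the prior approach, and this is exactly why that approach needed $\calL_2$ sandwiching.

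\textbf{The missing idea is to localize the polynomial tests using the learned classifier $\hat f$.} Pointwise,
$\indicator{\hat f\neq f^\star}=\hat f\,(1-f^\star)+(1-\hat f)\,f^\star\le\hat f\,(1-\pdown)+(1-\hat f)\,\pup$.
Under $\calD$ the right-hand side has mean at most $\Pr_{\calD}[\hat f\neq f^\star]+2\E_\calD[\pup-\pdown]$, and this uses only the $\calL_1$ guarantee. So it suffices to enforce multiplicative Chow domination for nonnegative low-degree $q$:
$\E_{\calD'}[\hat f\,q\,s]\le R\,\E_\calD[\hat f\,q]+\epsilon$ and $\E_{\calD'}[(1-\hat f)\,q\,s]\le R\,\E_\calD[(1-\hat f)\,q]+\epsilon$.
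This is the paper's Chow transfer lemma, which then gives error $\lambda_{\mathrm{test}}+R(\lambda_{\mathrm{train}}+\errtrain(\hat f))+\epsilon$ on the accepted region.

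The paper enforces this condition by filtering on $\hat f(x)|p(x)|$ and $(1-\hat f(x))|p(x)|$. These are computable because $\hat f$ can be evaluated on test points. The witness search maximizes the linear objective $\sum_{x\in S_{i-1}}f(x)p(x)$ over the convex set of $p$ with small empirical $\E[f|p|]$ and bounded empirical $\E[p^2]$. Since $p\le|p|$, signed witnesses are fine, so the nonnegativity issue you anticipated is not the real obstacle. In any case $\pup-\pdown$, $\pup$ and $1-\pdown$ are nonnegative everywhere by definition.

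Finally, your potential argument needs a removal ratio $R\approx1/\eta$, not a constant. Each step must remove at least $R$ times as much test mass as training mass, plus a margin $\Delta$, for the total rejected training mass to be about $1/R=\eta$. That same $R$ is where the $1/\eta$ in the accuracy bound comes from.
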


\paragraph{Relaxed Requirements.}
As mentioned earlier, our main result yields significant runtime improvements for several basic classes, including those in \Cref{table:results-l1-sandwiching}. This is because $\calL_1$ sandwiching is a relaxed requirement compared to $\calL_2$ sandwiching, but also because our result imposes no coefficient bounds on the sandwiching polynomials $\pup,\pdown$. The latter improvement is important for the class of functions of a constant number of halfspaces under any fixed isotropic log-concave training distribution, where the available sandwiching approximators \cite{KKM13} are not known to have bounded coefficients. In contrast, \cite{goel2024tolerantalgorithmslearningarbitrary} require both an $\calL_2$ sandwiching guarantee and explicit bounds on the coefficients of the sandwiching polynomials.\footnote{Note that both \cite{goel2024tolerantalgorithmslearningarbitrary} and our main result require the additional assumption that the training marginal is hypercontractive. This condition is important for the concentration of polynomials and is satisfied by a wide range of distributions, including any log-concave distribution, as well as any product distribution over the Boolean hypercube.}

\paragraph{Improved Error Guarantees.}
Our algorithm also achieves sharper error guarantees on the accepted portion of the test distribution, expressed in terms of the label-drift parameters $\lambda_{\mathrm{train}}$ and $\lambda_{\mathrm{test}}$, together with the optimal training error. Compared with the previous best result of \cite{goel2024tolerantalgorithmslearningarbitrary}, which gives an $O(\lambda/\eta)$-type guarantee with a constant-factor loss in the leading term, our analysis obtains the sharpest known dependence on these parameters (as stated in \Cref{def:pq}). In particular, this improves the best known error guarantees even for the fundamental case of PQ learning linear separators under Gaussian training marginals.

\paragraph{Tolerant TDS Learning.}
A related framework is TDS learning, where the learner may abstain on the entire test distribution, provided it detects the presence of distribution shift (it is easy to see that PQ learning implies TDS learning). Chandrasekaran et al.~\cite{chandrasekaran2024efficient} showed that $\calL_1$ sandwiching suffices for efficient TDS learning. Their result, however, does not extend to tolerant TDS learning, where the learner is required to accept whenever the training and test marginals are close in total variation distance. Our PQ learning theorem implies that $\calL_1$ sandwiching suffices for tolerant TDS learning, but we can obtain even sharper error bounds by giving a direct white-box analysis of our algorithm for the tolerant setting (see \Cref{appendix:tolerant_tds}).

\subsection{Our Techniques}

We prove our main PQ-learning result in two steps. First, using the training data, we learn a hypothesis $\hat{f} \colon\calX\to\{0,1\}$ with small training error via $\calL_1$ polynomial regression. This is possible because the class $\calC$ admits low-degree $\calL_1$ sandwiching polynomials. In fact, the classical work of \cite{doi:10.1137/060649057} shows that even the weaker existence of approximating polynomials suffices for efficient learning with near-optimal training error via $\calL_1$ regression.

The main challenge is to identify the part of the test distribution on which $\hat{f}$ can be trusted. We construct a selector $s \colon\calX\to\{0,1\}$ that accepts test points on which $\hat{f}$ has small error, while rejecting only a small fraction of the training distribution. The key ingredient is an error-transfer lemma: we identify a condition under which the training error of $\hat{f}$ transfers to the selected test distribution, and then design an efficient algorithm that enforces this condition.

Existing transfer lemmas for learning under distribution shift are insufficient for our purposes. They either require stronger approximation guarantees, such as $\calL_2$ sandwiching rather than $\calL_1$ sandwiching, or they do not yield an efficient method for constructing the selector.

The closest prior result is the following lemma of \cite{chandrasekaran2024efficient}, proved in the context of TDS learning.

\begin{lemma}[Informal, from \cite{chandrasekaran2024efficient}]
    Suppose that $\calC$ has $\calL_1$ sandwiching polynomials of degree $\ell$ with respect to the training distribution, and let $\hat{f} \colon\calX\to\{0,1\}$. Suppose also that for every monomial $\chi$ of degree at most $\ell$,
    \[
        \E_{\calD^{\mathrm{train}}}[\hat{f}(x)\chi(x)]
        \approx
        \E_{\calD^{\mathrm{test}}}[\hat{f}(x)\chi(x)]
        \quad\text{and}\quad
        \E_{\calD^{\mathrm{train}}}[\chi(x)]
        \approx
        \E_{\calD^{\mathrm{test}}}[\chi(x)].
    \]
    Then, the test error of $\hat{f}$ is approximately upper bounded by
    $\lambda(\calC)+\mathrm{err}_{\mathrm{train}}(\hat{f})$.
\end{lemma}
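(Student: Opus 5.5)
Let $f^\star\in\calC$ be a joint optimizer achieving $\lambda$, and let $\pup,\pdown$ be its degree-$\ell$ $\calL_1$ sandwiching polynomials with respect to $\calD=\Dtrain_\calX$, with $\ev{\calD}{\pup-\pdown}\le\epsilon$. The plan is to compare $\hat f$ with $f^\star$ on the test distribution, replacing $f^\star$ by a sandwiching polynomial in each of the two disagreement regions and then transferring the resulting polynomial expectations to the training side via the moment-matching hypotheses. The starting point is to write the disagreement as
\[
\indicator{\hat f(x)\neq f^\star(x)} = \hat f(x)\bigl(1-f^\star(x)\bigr) + \bigl(1-\hat f(x)\bigr)f^\star(x).
\]
By the triangle inequality,
\[
\errtest(\hat f)\le \errtest(f^\star)+\pr{\Dtest}{\hat f\neq f^\star},
\]
so it suffices to bound the test disagreement by roughly $\errtrain(\hat f)+\errtrain(f^\star)+O(\epsilon)$. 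The two resulting error terms sum to at most $\lambda+\errtrain(\hat f)$.

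For the first term I use $f^\star\ge\pdown$, which gives $\hat f(1-f^\star)\le \hat f(1-\pdown)$ pointwise because $\hat f\ge 0$. The right-hand side is $\hat f$ times a degree-$\ell$ polynomial, so its expectation expands as $\E[\hat f]-\sum_\chi c_\chi\E[\hat f\chi]$. The hypothesis that $\E[\hat f\chi]$ matches across distributions then transfers this expectation from $\Dtest$ to $\Dtrain$, up to an error of $\sum_\chi\abs{c_\chi}$ times the matching tolerance. On the training side I undo the replacement:
\[
\ev{\calD}{\hat f(1-\pdown)}\le \ev{\calD}{\hat f(1-f^\star)}+\ev{\calD}{f^\star-\pdown}\le \pr{\calD}{\hat f\neq f^\star}+\epsilon.
\]
For the second term I use $(1-\hat f)f^\star\le (1-\hat f)\pup$. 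Its expectation equals $\E[\pup]-\E[\hat f\pup]$, and both pieces transfer using the two matching hypotheses, the plain moments $\E[\chi]$ and the correlations $\E[\hat f\chi]$. Back on training, this is at most $\ev{\calD}{(1-\hat f)f^\star}+\epsilon$.

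Summing the two terms gives
\[
\pr{\Dtest}{\hat f\neq f^\star}\le \pr{\calD}{\hat f\neq f^\star}+2\epsilon+\text{(transfer error)},
\]
and $\pr{\calD}{\hat f\neq f^\star}\le\errtrain(\hat f)+\errtrain(f^\star)$. Combining with the triangle inequality above yields
\[
\errtest(\hat f)\le\lambda+\errtrain(\hat f)+O(\epsilon).
\]

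The main obstacle is controlling the transfer error, which is the $\approx$ tolerance multiplied by $\sum_\chi\abs{c_\chi}$ for the coefficients of $\pup$ and $\pdown$. Since the paper explicitly avoids coefficient bounds, the informal lemma must be read with the tolerance chosen relative to the coefficient norm: in \cite{chandrasekaran2024efficient} coefficients are bounded by $d^{O(\ell)}$ under hypercontractive or Boolean settings. Alternatively, one can phrase the matching as closeness of $\E[\hat f p]$ and $\E[p]$ for all degree-$\ell$ $p$ normalized in $\calL_2(\calD)$, which removes any dependence on the monomial basis. I would take the latter, normalized form as the precise meaning of ``$\approx$'', and then the argument above goes through with $O(\epsilon)$ slack.
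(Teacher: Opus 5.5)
Your proof is correct and is essentially the paper's proof of the Chow-transfer lemma (\Cref{lemma:chow_transfer}) specialized to $R=1$ and $s\equiv 1$. It uses the same split of the disagreement into $\hat f\bar{f}^\star+\bar{\hat f}f^\star$, the same pointwise replacements by $\hat f\,\bar p_\mathrm{down}$ and $\bar{\hat f}\,\pup$, the same undoing of the replacement on the training side via $\ev{\calD}{\pup-\pdown}\le\epsilon$, and the same two triangle inequalities. The only extra step is turning the monomial-wise $\approx$ into a bound for these two specific polynomials, and you handle it correctly; the paper's lemma avoids this step by assuming the transfer inequality directly for those polynomials. Note, though, that your $\calL_2(\calD)$-normalized reading still needs $\ev{\calD}{\pup^2},\ev{\calD}{\bar p_\mathrm{down}^2}\le 4(2A)^{2\ell}$, which is where hypercontractivity enters, as in the proof of \Cref{thm:pq_}.
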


The quantities $\E[\hat{f}(x)\chi(x)]$ are the so-called Chow parameters of $\hat{f}$ \cite{chow-original,chow-ltf,chow-ptf}. Thus, this lemma shows that, under $\calL_1$ sandwiching, matching the low-degree Chow parameters of $\hat{f}$ and of the constant function suffices for error transfer in the (weaker) model of TDS learning, where a learner can abstain on the entire test set.

However, requiring additive Chow matching between a {\em selection} of the test distribution and the training distribution is too much to hope for. Consider $\calX=\mathbb{R}$, $\hat{f}(x)=\indicator{x\ge 0}$, $\calD^{\mathrm{train}}=\calU[-1,1]$, and $\calD^{\mathrm{test}}=\calU[0,1]$. Any selector $s$ that rejects only a small fraction of the training distribution must accept almost all of $[0,1]$. Hence,
\[
    \E_{\calD^{\mathrm{test}}}[\hat{f}(x)s(x)] \approx 1,
    \qquad\text{whereas}\qquad
    \E_{\calD^{\mathrm{train}}}[\hat{f}(x)] = \frac12.
\]
Thus, no valid selector can generally make the selected test distribution Chow-match the training distribution.

We overcome this obstruction with a multiplicative Chow-transfer lemma.

\begin{lemma}[Informal, see \Cref{lemma:chow_transfer}]
    Suppose that $\calC$ has $\calL_1$ sandwiching polynomials of degree $\ell$ with respect to the training distribution, and let $\hat{f} \colon\calX\to\{0,1\}$. Also suppose that, for some $R > 1$, every nonnegative polynomial $p$ of degree at most $\ell$ satisfies
    \[
        \E_{\calD^{\mathrm{test}}}[\hat{f}(x)p(x)]
        \lesssim
        R\cdot \E_{\calD^{\mathrm{train}}}[\hat{f}(x)p(x)] \quad\text{ and }\quad
        \E_{\calD^{\mathrm{test}}}[(1-\hat{f}(x))p(x)]
        \lesssim
        R\cdot \E_{\calD^{\mathrm{train}}}[(1-\hat{f}(x))p(x)].
    \]
    Then, the test error of $\hat{f}$ is approximately upper bounded by
    $R\cdot\left(\lambda(\calC)+\mathrm{err}_{\mathrm{train}}(\hat{f})\right)$.
\end{lemma}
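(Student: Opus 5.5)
Let $f\in\calC$ be a joint optimizer, so $\errtrain(f)+\errtest(f)=\lambda(\calC)$, and let $\pup,\pdown$ be its degree-$\ell$ $\calL_1$ sandwiching polynomials with respect to $\Dtrain_\calX$. The plan is to split the test error of $\hat f$ along the two label classes of $\hat f$, bound the disagreement between $\hat f$ and $f$ on each side using a nonnegative low-degree polynomial, and then pay the factor $R$ when moving from test to train. Since
\[
\errtest(\hat f)\le \errtest(f)+\pr{\Dtest}{\hat f(x)\neq f(x)},
\]
it suffices to control the disagreement. We write it as
\[
\pr{\Dtest}{\hat f\neq f}=\E_{\Dtest}\bigl[\hat f(x)(1-f(x))\bigr]+\E_{\Dtest}\bigl[(1-\hat f(x))f(x)\bigr].
\]

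For the first term we use $1-f\le 1-\pdown$. The polynomial $1-\pdown$ need not be nonnegative everywhere, but $1-\pdown\ge 1-f\ge 0$ pointwise, so $p:=1-\pdown$ is a nonnegative polynomial of degree at most $\ell$. The hypothesis then gives
\[
\E_{\Dtest}[\hat f(1-f)]\le \E_{\Dtest}[\hat f\,p]\lesssim R\,\E_{\Dtrain}[\hat f(1-\pdown)].
\]
Symmetrically, $p':=\pup\ge f\ge0$ yields
\[
\E_{\Dtest}[(1-\hat f)f]\lesssim R\,\E_{\Dtrain}[(1-\hat f)\pup].
\]
On the training side we write $1-\pdown=(1-f)+(f-\pdown)$ and $\pup=f+(\pup-f)$. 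Because $\hat f,1-\hat f\in[0,1]$ and the gaps are nonnegative, the sum of the two training expectations is at most
\[
\E_{\Dtrain}[\hat f(1-f)+(1-\hat f)f]+\E_{\Dtrain}[(f-\pdown)+(\pup-f)]
=\pr{\Dtrain}{\hat f\neq f}+\E_{\Dtrain}[\pup-\pdown].
\]
This is at most $\errtrain(\hat f)+\errtrain(f)+\epsilon$. Combining, the test error is at most $\errtest(f)+R(\errtrain(\hat f)+\errtrain(f)+\epsilon)$ up to the additive slack hidden in $\lesssim$. Since $R>1$, this is at most $R(\lambda(\calC)+\errtrain(\hat f))+R\epsilon+\text{slack}$, which is the claim; a finer accounting even keeps $\errtest(f)$ with coefficient $1$.

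The main obstacle is making the ``$\lesssim$'' precise so that the additive slack does not blow up. The hypothesis must be stated for nonnegative polynomials under some normalization, for example additive error $\tau\cdot\E_{\Dtrain}[p]$ or $\tau\|p\|$. Here $\E_{\Dtrain}[1-\pdown]\le 1+\epsilon$ and $\E_{\Dtrain}[\pup]\le 1+\epsilon$ are bounded, so an additive slack proportional to $\E_{\Dtrain}[p]$ is harmless. A slack depending on coefficient norms would be dangerous, since we assume no coefficient bounds on the sandwiching polynomials. I would therefore formulate the condition as $\E_{\Dtest}[\hat f p]\le R\,\E_{\Dtrain}[\hat f p]+\tau\,\E_{\Dtrain}[p]$, with the analogous statement for $1-\hat f$. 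I would check that this form is what Iterative Chow Filtering can enforce, presumably via hypercontractivity giving uniform concentration of $\E_{\Dtrain}[p]$-normalized estimates.
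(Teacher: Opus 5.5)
Your proposal is correct and follows essentially the same argument as the paper's proof of \Cref{lemma:chow_transfer}: split the disagreement with the joint optimizer along $\hat f$ and $1-\hat f$, and dominate the two pieces by the nonnegative polynomials $1-\pdown$ and $\pup$. You then apply the multiplicative domination only to these two polynomials, absorb the sandwiching gap on the training side, and finish with the triangle inequality (your training-side accounting even pays the gap once rather than twice). On your normalization concern, the paper's formal lemma assumes a plain additive slack $\epsilon/4$ for these two polynomials with sandwiching error $\epsilon/(4R)$, and defers the normalization to ICF, which handles polynomials with $\E_{\calD}[p^2]\le\beta$ --- a bound that hypercontractivity supplies from $\E_{\calD}[\pup]\le 1+\epsilon$ --- so, as you anticipated, no coefficient bounds are needed.
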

Instead of requiring additive matching of Chow parameters, this lemma only requires multiplicative domination of the Chow-type expectations of $\hat{f}$ and $1-\hat{f}$ against nonnegative low-degree polynomial tests. We enforce this condition on the selected test distribution using an efficient algorithm, which we call Iterative Chow Filtering (ICF).

ICF proceeds by repeatedly searching for a violated Chow-domination inequality. Suppose, for example, that there is a polynomial $p$ such that the expectation of $\hat{f}(x)p(x)$ under the current test distribution is much larger than the expectation of $\hat{f}(x)|p(x)|$ under the training distribution. Then, for an appropriate threshold $\tau$, the event
\[
    \hat{f}(x)|p(x)| > \tau
\]
has significantly larger mass under the current test distribution than under the training distribution. We reject all points in this event and recurse, applying the same argument symmetrically with $1-\hat{f}$ in place of $\hat{f}$. Each step removes disproportionately more test mass than training mass, so the total rejected training mass remains small.

Iterative filtering is a standard paradigm in robust learning \cite{klivans2009learning,diakonikolas2018learning,diakonikolas2019robust,diakonikolas2019sever,klivans2024learningac0,klivans2026poweriterativefilteringsupervised} and has also appeared in prior work on PQ learning \cite{goel2024tolerantalgorithmslearningarbitrary}. The key difference is that all previous filtering procedures are based on discrepancies in polynomial expectations of only the marginal distributions. Our procedure here instead filters using discrepancies in Chow expectations, namely polynomial expectations weighted by the learned Boolean hypothesis $\hat{f}$ or by $1-\hat{f}$. This Chow-based filtering step is what allows us to obtain PQ learning from $\calL_1$ sandwiching alone.

\subsection{Related Work}

\paragraph{PQ Learning.}
The framework of PQ learning was introduced by \cite{goldwasser2020beyond}. In its original formulation, they showed that, without assumptions on the training distribution, an agnostic empirical risk minimization oracle yields PQ learning via an oracle-efficient reduction. This was later strengthened by \cite{kalai21acovariateshift}, who showed that distribution-free PQ learning is equivalent to the reliable agnostic learning framework of \cite{KALAI2012reliable}. Reliable agnostic learning lies between realizable PAC learning and agnostic learning: it is easier than agnostic learning but harder than realizable PAC learning. However, reliable agnostic learning is already hard for very basic classes. For example, reliable agnostic learning of conjunctions is as hard as PAC learning DNFs, a long-standing open problem widely believed to be computationally intractable.

The recent work of \cite{goel2024tolerantalgorithmslearningarbitrary} showed that, in distribution-specific settings where the \emph{training distribution} is fixed to be, for example, the standard Gaussian distribution or the uniform distribution on the Boolean hypercube, PQ learning is tractable for a broad range of classes. In particular, they showed that any class admitting low-degree $\calL_2$ sandwiching polynomials with respect to the training distribution can be efficiently learned in the PQ model. Their result captures several classes, especially in light of recent $\calL_2$ sandwiching degree bounds of \cite{klivans2026sandwiching}. Nevertheless, prior to our work, no subexponential-time PQ learning algorithm was known for DNFs when the training distribution is uniform over the Boolean hypercube. More broadly, constant-depth circuits are known to have quasipolynomial $\calL_1$ sandwiching degree with respect to the uniform training distribution \cite{bazzi2009polylogarithmic,razborov2009simple,10.1145/1754399.1754401,tal:LIPIcs.CCC.2017.15,hatami2023theory}, but the best known bound on their $\calL_2$ sandwiching degree is exponentially worse, namely polynomial in the circuit size \cite{odonnellservedio03ac0,klivans2024testablelearningdistributionshift}.

\paragraph{Sandwiching Polynomials.}
The notion of sandwiching polynomials originates in the pseudorandomness literature; see \cite{hatami2023theory} and references therein. In this context, most known results concern the uniform distribution over the Boolean hypercube \cite{bazzi2009polylogarithmic,razborov2009simple,10.1145/1754399.1754401,tal:LIPIcs.CCC.2017.15,diakonikolas2010bounded,diakonikolas2010ptf}, although sandwiching bounds were also shown for certain continuous distributions \cite{gopalan2010fooling,kanefoolingpfs,KKM13}.

Recent work in learning theory has highlighted the role of sandwiching approximation in challenging learning settings, including testable learning \cite{rubinfeld2022testingdistributionalassumptionslearning,gollakota2023moment,slot2024testably}, learning with distribution shift \cite{klivans2024testablelearningdistributionshift,goel2024tolerantalgorithmslearningarbitrary}, and learning under contamination \cite{klivans2024learningac0,klivans2026poweriterativefilteringsupervised}. This has motivated further work establishing sandwiching-degree bounds for broader concept classes, under broader families of distributions, and with higher-order error guarantees, namely $\calL_r$ sandwiching for $r>1$ \cite{klivans2026sandwiching}.

Nevertheless, substantial gaps remain between the best known $\calL_1$ and $\calL_2$ sandwiching-degree bounds for several basic classes. For constant-depth circuits under the uniform distribution, the best known bounds exhibit an exponential separation. For degree-$2$ polynomial threshold functions under the uniform distribution on the Boolean hypercube, and for constant-degree polynomial threshold functions under the Gaussian distribution, the $\calL_1$ sandwiching degree is polynomial in the inverse approximation error, whereas the best known $\calL_2$ sandwiching-degree bounds scale polynomially with the dimension $d$ \cite{klivans2026sandwiching}.

\paragraph{TDS Learning.}
The framework of TDS learning was introduced by \cite{klivans2024testablelearningdistributionshift}. PQ learning is stronger than TDS learning: in particular, it implies tolerant TDS learning, a stronger variant in which the learner is guaranteed to accept under sufficiently small distribution shift. Although PQ learning was defined before TDS learning, the first distribution-specific efficient algorithms for learning with distribution shift were developed in the TDS framework.

The first general result for TDS learning showed that $\calL_2$ sandwiching implies efficient TDS learning \cite{klivans2024testablelearningdistributionshift}. This was later strengthened by \cite{chandrasekaran2024efficient}, who showed that $\calL_1$ sandwiching already suffices. Our work shows that $\calL_1$ sandwiching also suffices for tolerant TDS learning and PQ learning. Some positive results for special concept classes in the TDS model have not yet been extended to PQ learning, or even to tolerant TDS learning \cite{klivans2024testablelearningdistributionshift,klivans2024learning,chandrasekaran2024efficient}. Nevertheless, existing TDS learning results have so far often admitted extensions to the PQ setting through suitable filtering procedures. Our ICF algorithm provides an additional technical tool for carrying out such extensions.

\paragraph{Chow Parameters.}
The notion of Chow parameters originates with Chow's characterization of Boolean threshold functions \cite{chow-original}. Since then, Chow parameters have played a role in several areas of theoretical computer science, including circuit complexity and threshold logic \cite{winder1971chow,goldmann1992majority}, game theory and the design of voting systems \cite{banzhaf1965weighted,dubey1979mathematical}, and learning theory \cite{bendavid1998learning,birkendorf1998restricted,chow-ltf,chandrasekaran2025fully}. A common use of Chow parameters is through structural theorems showing that certain Boolean-valued functions are determined, either exactly or robustly, by their low-degree Chow parameters \cite{chow-original,chow-ltf,chow-optimal,chow-ptf}. 

More recently, work in testable learning \cite{goel2025testingnoiseassumptions} and learning with distribution shift \cite{chandrasekaran2024efficient} has highlighted a different role for Chow-type expectations: not merely as parameters that characterize the underlying function, but as a tool for certifying the performance and reliability of learning algorithms. In this work, we show that such Chow-type expectations can also be used to obtain state-of-the-art guarantees in the PQ learning model.

\section{Preliminaries}

\paragraph{Notation.} Let \(\natural = \{0, 1, \dots\}\) denote the set of natural numbers and \(\natural^\ast = \natural \setminus \{0\}\). For any \(n \in \natural^\ast\), let \([n] = \{k \in \natural^\ast : k \leq n\}\). For any \(d, \ell \in \natural\), let \(\calP_{d,\ell}\) denote the set of \(d\)-variate polynomials, whose degree is at most \(\ell\). For any multiset \(S\), we let \(\calU(S)\) denote the uniform distribution on \(S\). For any real-valued function \(f\), we denote \(\bar f = 1 - f\).

\begin{definition}[Hypercontractivity] \label{def:hypercontractivity} Let any \(A \geq 1\). A distribution \(\calD\) on \(\real^d\) is said to be \(A\)-hypercontractive (with respect to polynomials) if for any polynomial \(p \colon \real^d \to \real\) and \(t \geq 2\):
\begin{enumerate}
    \item \(\ev{x \sim \calD}{\abs{p(x)}^t} \leq (At)^{t \deg(p)} \ev{x \sim \calD}{\abs{p(x)}}^t \) and
    \item the absolute value of any degree-\(1\) monomial has finite expectation under \(\calD\).
\end{enumerate}
\end{definition}

Hypercontractive distributions are a wide class of probability distributions, which
includes Gaussians, all log-concave distributions on \(\real^d\) (see \cite{bobkov2001some}), as well as product distributions over \(\{\pm 1\}^d\).

\section{Iterative Chow Filtering}

In this section, we present the ICF algorithm (\Cref{alg:icf}), which is the main tool for our results in PQ learning and tolerant TDS learning. The algorithm takes as input a classifier \(\hat f \colon \calX \to \{0, 1\}\), a sample \(S'\) from a test marginal \(\calD'\) and a reference sample from a training marginal \(\calD\), and outputs a selector \(s \colon \calX \to \{0, 1\}\). The goal is that the selector keeps test points whose low-degree polynomial behavior, restricted to the classifier-relevant regions specified by \(\calF = \{\hat f, 1- {\hat f}\}\), is consistent with \(\calD\), and rejects points that seem to be outliers. Specifically, ICF tries to enforce the following:
\begin{enumerate}
\item For each \(f \in \mathcal{F}\) and for every nonnegative low-degree polynomial \(p\) the following holds: If the expectation of \(f(x)p(x)\) under \(\calD\) is small, then the corresponding expectation under the part of \(\calD'\) selected by \(s\), should be small as well.
\item The probability that \(s\) rejects a sample drawn from \(\calD\) (i.e., \(\pr{x \sim \calD}{s(x) = 0}\)) is small.
\end{enumerate}
Concretely, we prove the following theorem. For the proof see \Cref{thm:icf_} in \Cref{appendix:icf}.

\begin{theorem}[Iterative Chow Filtering] \label{thm:icf}
        Let any \(\ell \in \natural^\ast\), \(A \geq 1\), \(R > 1\), \(\beta > 0\), \(\epsilon, \delta \in (0,1)\) and let \(\hat f \colon \real^d \to \{0, 1\}\). Let \(\calD\) be any \(A\)-hypercontractive distribution on \(\real^d\) and let \(\calD'\) be any distribution on \(\real^d\). Let \(S\) and \(S'\) be multisets of i.i.d. examples drawn from \(\calD\) and \(\calD'\) respectively. If \(\abs{S}, \abs{S'} \geq \mathrm{poly}(R, \beta, A^\ell, (d+1)^\ell, 1/\epsilon, \log(1/\delta)^\ell) \), then \Cref{alg:icf}, given \(\hat f, S, S', \ell, R, \beta, \epsilon\) as input, runs in time \(\mathrm{poly}\left(\abs{S}, \abs{S'}, (d+1)^\ell\right)\) and outputs a succinct polynomial-time-computable description of a function \(s \colon \real^d \to \{0, 1\}\) that satisfies the following guarantees.
    \begin{enumerate}
        \item Let any \(p, q \in \calP_{d, \ell}\) such that \(\max\{\ev{x \sim \calD}{p(x)^2}, \ev{x \sim \calD}{q(x)^2}\} \leq \beta\). Then, with probability at least \(1 - \delta\) over \(S\) and \(S'\), we have that
        \begin{align*}
            \ev{x \sim \calD'}{\hat f(x)p(x)s(x)} &\leq (R+\epsilon)(1+\epsilon)\ev{x \sim \calD}{\hat f(x)\abs{p(x)}} + 2\epsilon \, ,\\
            \ev{x \sim \calD'}{\bar {\hat f}(x)q(x)s(x)} &\leq (R+\epsilon)(1+\epsilon)\ev{x \sim \calD}{ \bar {\hat f}(x) \abs{q(x)}} + 2\epsilon \, .
        \end{align*}
        \item With probability at least \(1 - \delta\) over \(S\) and \(S'\), it holds
        \(
        \pr{x \sim \calD}{s(x)=0} \leq (1+\epsilon)/R
        \).
    \end{enumerate}
\end{theorem}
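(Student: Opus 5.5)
The algorithm is an iterative filter whose analysis is a potential argument, with sample accuracy handled by uniform convergence over a finite family of threshold tests. Maintain a selector $s_t$, starting from $s_0\equiv 1$. At round $t$, for each $f\in\calF=\{\hat f,\bar{\hat f}\}$, solve the optimization problem
\[
\max_{p\in\calP_{d,\ell}}\ \ev{x\sim\calU(S')}{f(x)p(x)s_t(x)}\quad\text{subject to}\quad \ev{x\sim\calU(S)}{f(x)\abs{p(x)}}\le 1,\ \ev{x\sim\calU(S)}{p(x)^2}\le \beta'.
\]
Here $\beta'$ is a normalization of $\beta$. The constraint is convex in the coefficient vector and the objective is linear, so this is an efficiently solvable convex program over $(d+1)^\ell$ variables; an ellipsoid or LP formulation with slack variables for $\abs{p(x_i)}$ suffices. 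If the value is at most $R+\epsilon$ (after the additive slack), stop and output $s=s_t$.

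If the value exceeds $R+\epsilon$, apply a layer-cake step. Write
\[
\E_{S'}[f\,\abs{p}\,s_t]=\int_0^\infty \Pr_{S'}[f\abs{p}s_t>\tau]\,d\tau,
\]
and compare it with $\int_0^\infty \Pr_S[f\abs{p}>\tau]\,d\tau\le 1$. Some threshold $\tau$ must then satisfy
\[
\Pr_{S'}[s_t=1,\ f\abs{p}>\tau]\ >\ R\,\Pr_S[f\abs{p}>\tau].
\]
The integral over large $\tau$ must be truncated using hypercontractivity: tails of $\abs{p}$ beyond $\tau_{\max}=\mathrm{poly}(\beta,A^\ell,\log(1/\epsilon)^\ell)$ contribute at most $\epsilon$ under $\calD$, and points in $S'$ with $\abs{p}$ that large are themselves handled by the same argument. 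The threshold can be found by scanning the sample values. Set
\[
s_{t+1}=s_t\cdot\indicator{f(x)\abs{p(x)}\le\tau}.
\]

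\textbf{Rejection bound.} Each round removes test mass $m_t$ with $m_t> R\cdot(\text{training mass removed})$. The removed test sets are disjoint and have total mass at most $1$, so the union of the training mass removed is at most $1/R$. The loop also terminates: each round removes at least a $\mathrm{poly}(\epsilon/\tau_{\max})$ fraction of $S'$, so there are polynomially many rounds. Since the output $s$ is an intersection of polynomially many polynomial-threshold events, $\pr{x\sim\calD}{s(x)=0}$ concentrates around its empirical value. This requires uniform convergence over intersections of $T$ degree-$\ell$ PTFs, which has VC dimension $\tilde O(T(d+1)^\ell)$, and yields the $(1+\epsilon)/R$ bound.

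\textbf{Guarantee (1).} Fix $p,q$ with bounded second moment. At termination, the empirical inequality holds for the scaled $p$: either $\E_S[f\abs p]$ is small and we normalize, or we handle small $\E_S[f\abs p]$ via the additive $2\epsilon$. Transfer it to the population in two parts.
\begin{itemize}
\item On the $\calD$ side, $\E_S[f\abs p]\approx\E_\calD[f\abs p]$ within a $(1+\epsilon)$ factor plus additive error. This uses the bounded variance $\beta$ together with Chebyshev, or hypercontractive concentration for the $\log(1/\delta)^\ell$ dependence.
\item On the $\calD'$ side, $\E_{S'}[f p s]\approx\E_{\calD'}[f p s]$. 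Since $p$ is fixed in advance, this is a single-function estimate, but $s$ depends on $S'$. Truncate $p$ at $\pm\tau_{\max}$, where $\calD'$ has no moment control but the bound only needs to hold from above. Then, using layer-cake again, reduce to uniform convergence of $\Pr[f\,s\,\indicator{p>\tau}]$ over the VC class of such intersections.
\end{itemize}

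\textbf{Main obstacle.} The hard part is exactly this last step. The test distribution has no moment bounds, so for a fixed $p$ the quantity $\E_{\calD'}[f p s]$ may involve huge values of $p$. I would address it by adding explicit filtering of points with $\abs{p}$ above $\tau_{\max}$ into the same iterative scheme, since such points are rare under $\calD$ and are therefore cheap to reject. This makes $f p s$ bounded on the selected region and allows the Hoeffding/VC argument to go through. Choosing constants so that the losses combine into $(R+\epsilon)(1+\epsilon)$ and $2\epsilon$ is then routine.
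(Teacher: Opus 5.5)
Your overall architecture matches the paper's: a normalized convex program, a layer-cake threshold, disjoint-removal accounting for the rejection rate, and VC-based transfer to the populations. But the step you flag as the main obstacle is a genuine gap, and your proposed fix does not close it.

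The polynomial $p$ in Guarantee~1 is fixed but unknown to the algorithm. Every filter you add involves only the data-dependent $p_t$, or points of $S'$. Consider a region of tiny $\calD'$-mass, never sampled into $S'$, on which the fixed $p$ is astronomically large. None of your filters rejects it, yet it can dominate $\ev{x\sim\calD'}{f(x)p(x)s(x)}$. So the empirical bound at termination does not transfer to $\calD'$. Truncating $p$ at $\tau_{\max}$ does not help, because the excess $f(p-\tau_{\max})_+s$ is exactly what must be bounded from above.

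The same missing boundedness breaks your termination claim. Nothing bounds $f\abs{p_t}$ on the current test set. The tail $\int_{\tau_{\max}}^\infty\Pr_{S'}[s_t=1,\,f\abs{p_t}>\tau]\,d\tau$ can therefore carry the entire excess objective while involving only a few points, so a round may remove a single point. Then the number of rounds is not bounded independently of $\abs{S'}$, and neither is the VC dimension of the selector class that both transfers rely on. Your $\tau_{\max}$ has no $(d+1)^\ell$ dependence, so it reflects per-polynomial hypercontractive tails, and those cannot control all polynomials at once.

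The missing idea is a single global boundedness filter, applied before the loop and built into $s$. Keep $x$ only if $\max_{f\in\calF}\max_{p\in\calP(f)}f(x)\abs{p(x)}\le B$, where $\calP(f)$ is the convex feasible set of the optimization; $s(x)$ is then evaluated by a convex program at each $x$. Its $\calD$-rejection is bounded simultaneously over all polynomials by \Cref{lemma:con1}. The quantity $\max_{p:\ev{x\sim\calU(S)}{p(x)^2}\le 2\beta}p(x)^2$ is essentially $2\beta$ times the leverage score of $x$, whose $\calD$-expectation is $O((d+1)^\ell)$. Markov's inequality then gives rejection at most $\epsilon/(2R)$ for $B=2\sqrt{2R(d+1)^\ell\beta/\epsilon}$, which is necessarily dimension-dependent.

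This one filter supplies everything you need.
\begin{itemize}
\item The normalized fixed $p$ lies in $\calP(f)$ w.h.p.\ (\Cref{lemma:con2}), so $f\abs{p}\le B$ on $\{s=1\}$. The bounded uniform-convergence \Cref{lemma:uc1} then transfers $\ev{x\sim\calU(S')}{f(x)p(x)s(x)}\le\epsilon$ to $\calD'$.
\item Every point entering the loop has $f\abs{p}\le B$ for all feasible $p$, so the layer-cake integral stops at $B$. This yields a threshold with margin $\Delta=\epsilon^2/(B(2R+\epsilon))$, hence at most $1/\Delta$ rounds independent of $\abs{S'}$.
\item The same margin leaves a per-round slack of $\Delta/R$ that absorbs the $S\to\calD$ uniform-convergence error in the rejection bound.
\item The filter's own rejection accounts for the $\epsilon/R$ slack in $(1+\epsilon)/R$.
\end{itemize}
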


The hyperparameter \(R\) in \Cref{thm:icf} determines how selective the filtering is, that is, larger values of \(R\) imply that the rejection rate of \(s\) under \(\calD\) will be smaller. The cost of larger choices of \(R\) is that the first (Chow-transfer) guarantee of \Cref{thm:icf} holds with a worse (larger) multiplicative factor.

\begin{algorithm}[htbp]
\caption{Iterative Chow Filtering}\label{alg:icf}
\begin{algorithmic}[1]
\Require Classifier $\hat{f}$, sample \(S,S'\) from \(\calD,\calD'\) respectively, maximum polynomial degree \(\ell\), multiplicative slack \(R>1\), polynomial variance bound \(\beta > 0\), additive error \(\epsilon \in (0, 1)\).
\Ensure A succinct description of a selector function \(s \colon \real^d \to \{0, 1\}\).
\State Let \(B \gets 2\sqrt{\frac{2R(d+1)^\ell \beta}{ \epsilon}} \); \(\Delta \gets \frac{\epsilon^2}{B(2R+\epsilon)}\); \(\calF \gets \{\hat f, \bar{\hat f}\}\).
\State For \(f \in \calF\), define \(\calP(f) = \{p \in \calP_{d, \ell} : \ev{x \sim \calU(S)}{p(x)
^2} \leq 2\beta , \ev{x \sim \calU(S)}{f(x)\abs{p(x)} } \leq \frac{2\epsilon}{2R+\epsilon} \}\).
\State Let \(S_0 \gets \{x \in S' : \max_{f \in \calF}\max_{p \in \calP(f)} f(x)\abs{p(x)} \leq B \}\). \label{line:icf:S0}

\For{\(i \in [ \abs{S_0} + 1 ] \)}
\State Let \((f_i^\star, p_i^\star)\) be a solution and \(\mu_i^\star\) the optimal value of the following optimization problem: \label{line:icf:obj}
\[
\max_{f \in \calF} \max_{p \in \calP(f)} \dfrac{1}{\abs{S'}} \sum_{x \in S_{i-1}}f(x)p(x)
\] 
\If{\(\mu_i^\star \leq \epsilon\)} \label{line:icf:terminate}
\Return \(s \colon \real^d \to \{0, 1\}\) such that for all \(x \in \real^d\), it holds \label{line:icf:s}
\[s(x) = \INDICATOR{\max_{f \in \calF}\max_{p \in \calP(f)} f(x)\abs{p(x)} \leq B} \prod_{j=1}^{i-1}\indicator{f_j^\star(x) \abs{p_j^\star(x)} \leq \tau_j^\star} \, .
\]
\Else
\State Let \(\tau_i^\star\) be the minimum nonnegative real number such that
\[\dfrac{\abs{S_{i-1}}}{\abs{S'}} \pr{x \sim \calU(S_{i-1})}{f_i^\star(x)\abs{p_i^\star(x)} > \tau_i^\star} \geq R\pr{x \sim \calU(S)}{f_i^\star(x)\abs{p_i^\star(x)} > \tau_i^\star} + \Delta \, . \] \label{line:icf:tau}
\State \(S_i \gets S_{i-1} \setminus \{x \in S_{i-1} : f_i^\star(x)\abs{p_i^\star(x)} > \tau_i^\star \}\).\label{line:icf:Si} \Comment{Filtering step}
\EndIf
\EndFor
\end{algorithmic}
\end{algorithm}

ICF starts with a set of test points \(S'\) and performs a preliminary filtering step which ensures boundedness (see \cref{line:icf:S0}). It proceeds with iteratively solving the optimization problem of \cref{line:icf:obj} to find some \(f \in \calF\) and some low-degree polynomial \(p\) such that \(f(x) p(x)\) is large over the current set of accepted test points. Then, it uses \(f, p\) and the current set of accepted test points to find a rejection threshold \(\tau\) and discards test points such that \(f(x) \abs{p(x)} > \tau\) (see \cref{line:icf:Si}). After several filtering iterations, once the optimal value of the optimization problem in \cref{line:icf:obj} becomes sufficiently small, ICF outputs a selector, which is the intersection of all filtering rules accumulated over all iterations.

\paragraph{Proof Overview.} The main idea behind the analysis of \Cref{alg:icf} is that whenever the terminating condition of \cref{line:icf:terminate} does not hold, there exists some \(f \in \calF\) and some low-degree polynomial \(p\) such that the expectation \(f(x)\abs{p(x)}\) under the currently accepted part of \(\calD'\) is \(R\) times larger than the corresponding expectation under \(\calD\). We show that in that case, there exists some threshold \(\tau \geq 0\) such that \(\calD'\) assigns \(R\) times more mass on the currently accepted test points with \(f(x)\abs{p(x)} > \tau\) compared to \(\calD\). This implies that if we reject the test points with \(f(x)\abs{p(x)} > \tau\), then the mass of rejected points under \(\calD\) is at most \(1/R\) times the corresponding mass under \(\calD'\) (which implies that the rejection rate of \(s\) will be at most \(1/R\)). Finally, we show that in each iteration the algorithm removes at least a \(\Delta\)-fraction of the initial test sample, so the algorithm will eventually terminate.

\paragraph{Comparison with Prior Filtering Algorithms.}
ICF shares a similar filtering framework with the Iterative Polynomial Filtering
(IPF) algorithm of \cite{klivans2026poweriterativefilteringsupervised} and the outlier removal method of \cite{goel2024tolerantalgorithmslearningarbitrary}. The main difference is that ICF uses Chow-type expectations and handles localized quantities such as \(f(x)p(x)\) and \(f(x)|p(x)|\), whereas the algorithms of \cite{klivans2026poweriterativefilteringsupervised} and \cite{goel2024tolerantalgorithmslearningarbitrary} handle purely polynomial quantities that do not depend on a classifier $f$.

\section{PQ Learning}

In this section, we state our main result for PQ learning. Recall that the goal in PQ learning is to find a classifier \(\hat f \colon \calX \to \{0, 1\}\) and a selector \(s \colon \calX \to \{0, 1\}\) such that (i) \(\hat f\) achieves small error on the part of \(\Dtest\) selected by \(s\), and (ii) the rejection rate of \(s\) is small under \(\calD_\calX^\mathrm{train}\).

\paragraph{Algorithm Sketch.} The steps of our PQ learner are as follows (see also \Cref{thm:pq_}). 
\begin{itemize}
    \item First, we learn a classifier \(\hat f\)
via \(\calL_1\) polynomial regression on a training sample from $\Dtrain$.
\item Then, we run ICF with input $\hat f$ and examples from \(\calD_{\calX}^\mathrm{train}\) and \(\calD_{\calX}^\mathrm{test}\) to obtain a
selector \(s\).
\item Finally, we output $(\hat f,s)$.
\end{itemize}

Since \(\hat f\) is learned using labeled examples from the training distribution, we can easily guarantee that \(\hat f\) achieves low training error. The main difficulty in proving the correctness of our algorithm, lies in identifying a condition that will allow us to transfer the error from the training distribution to the selected part of the test distribution. To this end, we establish the following transfer lemma.

\begin{lemma}[Chow Transfer] \label{lemma:chow_transfer} Let \(\calX \subseteq \real^d\). Let \(\calD^\mathrm{train}, \calD^\mathrm{test}\) be distributions on \(\calX \times \{0, 1\}\) and let \(\calD\) and \(\calD'\) be their \(\calX\)-marginals respectively. Let \(\epsilon > 0\), \(R > 1\), let \(\calC \subseteq \{0, 1\}^\calX\), let any \(f^\star \in \calC\) such that \(\errtrain(f^\star) = \lambda_\mathrm{train}\) and \(\errtest(f^\star) = \lambda_\mathrm{test}\) and let \(p_\mathrm{up}, p_\mathrm{down}\) be \(\epsilon/(4R)\)-approximate \(\calL_1\) sandwiching polynomials for \(f^\star\) under \(\calD\).\footnote{Recall that this means that \(\pdown(x) \leq f^\star(x) \leq \pup(x)\) for all \(x \in \calX\) and \(\ev{x \sim \calD}{\pup(x) - \pdown(x)} \leq \epsilon/(4R)\).} Suppose that for some \(\hat f \colon \calX \to \{0, 1\}\) and \(s \colon \calX \to \{0, 1\}\), we have
\begin{enumerate}
    \item \(\ev{x \sim \calD'}{\hat f(x) \bar p_\mathrm{down}(x)s(x)} \leq R \ev{x \sim \calD}{\hat f(x) \bar p_\mathrm{down}(x) } + \epsilon/4\), and
    \item \(\ev{x \sim \calD'}{\bar {\hat f}(x) p_\mathrm{up}(x)s(x)} \leq R \ev{x \sim \calD}{\bar {\hat f}(x)p_\mathrm{up}(x)} + \epsilon/4\).
\end{enumerate}
    Then, it holds
    \(\pr{(x, y) \sim \calD^\mathrm{test}}{ \hat f(x) \neq y \land s(x)=1 } \leq \lambda_\mathrm{test} + R ( \lambda_\mathrm{train} + \errtrain(\hat f) ) + \epsilon
    \).
\end{lemma}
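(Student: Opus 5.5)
The plan is to route the selected test error through the comparator \(f^\star\) and use the sandwiching polynomials as pointwise certificates. Since \(y\), \(\hat f(x)\) and \(f^\star(x)\) are all Boolean, \(\indicator{\hat f(x)\neq y}\le \indicator{f^\star(x)\neq y} + \indicator{\hat f(x)\neq f^\star(x)}\). Multiplying by \(s(x)\le 1\) and taking expectations over \(\calD^\mathrm{test}\) gives
\[
\pr{\calD^\mathrm{test}}{\hat f(x)\neq y \land s(x)=1} \le \lambda_\mathrm{test} + \ev{x\sim\calD'}{s(x)\indicator{\hat f(x)\neq f^\star(x)}}.
\]
The disagreement term splits as \(\hat f(x)\bar f^\star(x) + \bar{\hat f}(x) f^\star(x)\). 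Sandwiching gives \(\bar f^\star \le \bar p_\mathrm{down}\) and \(f^\star\le p_\mathrm{up}\) pointwise on \(\calX\). Since \(\hat f,\bar{\hat f},s\ge 0\), the disagreement term is at most \(\ev{\calD'}{\hat f \bar p_\mathrm{down} s} + \ev{\calD'}{\bar{\hat f} p_\mathrm{up} s}\). Hypotheses 1 and 2 then bound this by
\[
R\left(\ev{\calD}{\hat f\bar p_\mathrm{down}} + \ev{\calD}{\bar{\hat f} p_\mathrm{up}}\right) + \epsilon/2 .
\]

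Next I pass back from the polynomials to \(f^\star\) under the training marginal \(\calD\), paying only the sandwiching gap. Because \(\pup - f^\star \ge 0\) and \(\bar{\hat f}\le 1\), we have \(\ev{\calD}{\bar{\hat f}p_\mathrm{up}} \le \ev{\calD}{\bar{\hat f} f^\star} + \ev{\calD}{p_\mathrm{up}-f^\star}\). Similarly, \(\bar p_\mathrm{down} - \bar f^\star = f^\star - p_\mathrm{down}\ge 0\) gives \(\ev{\calD}{\hat f \bar p_\mathrm{down}} \le \ev{\calD}{\hat f\bar f^\star} + \ev{\calD}{f^\star - p_\mathrm{down}}\). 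The two slack terms add up to \(\ev{\calD}{\pup-\pdown}\le \epsilon/(4R)\), which costs at most \(\epsilon/4\) after multiplying by \(R\). The remaining quantity \(\ev{\calD}{\hat f\bar f^\star + \bar{\hat f} f^\star}\) is exactly \(\pr{\calD}{\hat f\ne f^\star}\). The triangle inequality on \(\calD^\mathrm{train}\) bounds it by \(\errtrain(\hat f) + \errtrain(f^\star) = \errtrain(\hat f)+\lambda_\mathrm{train}\).

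Collecting terms gives \(\lambda_\mathrm{test} + R(\lambda_\mathrm{train}+\errtrain(\hat f)) + \epsilon/4 + \epsilon/2\), which is below the claimed \(\epsilon\). The main point of care is that the pointwise sandwiching inequalities must be used in a sign-safe way. The multipliers \(\hat f s\) and \(\bar{\hat f}s\) are nonnegative, so replacing \(\bar f^\star\) by \(\bar p_\mathrm{down}\) and \(f^\star\) by \(p_\mathrm{up}\) only increases the expectations. This holds even though \(p_\mathrm{down}\) and \(p_\mathrm{up}\) may be negative or exceed \(1\) elsewhere. The hypotheses are stated for these specific polynomials rather than for \(\abs{p}\), so no absolute values are needed.

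The only measure-theoretic point is that the expectations are finite. This is implicit in the hypotheses, since the polynomials are integrable under \(\calD\) and the left-hand sides are assumed bounded. I expect no step to be a real obstacle; the argument is a direct chain of pointwise inequalities.
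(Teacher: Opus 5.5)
Your proof is correct and follows the paper's argument. You use the triangle inequality through \(f^\star\) on both \(\calD^\mathrm{test}\) and \(\calD^\mathrm{train}\), replace \(\bar f^\star\) and \(f^\star\) pointwise by \(\bar p_\mathrm{down}\) and \(p_\mathrm{up}\) under the nonnegative weights \(\hat f s\) and \(\bar{\hat f} s\), apply hypotheses 1--2, and pay the sandwiching gap on \(\calD\). The only difference is a sharper constant: you charge the two slacks to \(f^\star - p_\mathrm{down}\) and \(p_\mathrm{up} - f^\star\), which sum to a single \(\ev{x\sim\calD}{\pup(x)-\pdown(x)}\), whereas the paper bounds each by the full gap, so you end with \(3\epsilon/4\) in place of the paper's \(\epsilon\).
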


In words, \Cref{lemma:chow_transfer} says the following: If we find a selector \(s\) such that the Chow-type expectations of $\hat{f}$ and $\bar{\hat f}$ against appropriate choices of polynomials with respect to the selected part of \(\calD^\mathrm{test}\) are bounded by their analogues with respect to \(\calD^\mathrm{train}\) up to a multiplicative factor of \(R\), then the error of \(\hat f\) on the selected part of \(\calD^\mathrm{test}\) is roughly bounded by \(R(\lambda + \errtrain(\hat f))\).

By the first guarantee of \Cref{thm:icf}, a selector with the aforementioned property can be obtained via our ICF algorithm.\footnote{The first guarantee of \Cref{thm:icf} is applied on the polynomials \(p_\mathrm{up}\) and \(\bar p_\mathrm{down}\) of \Cref{lemma:chow_transfer}. Those polynomials only depend on \(\calD^\mathrm{train}\), \(\calD^\mathrm{test}\) and \(\calC\) and, although they are unknown to the algorithm, \Cref{thm:icf} can still be applied for them, since it holds for any fixed pair of low-degree polynomials.} Moreover, by the second guarantee of \Cref{thm:icf} any selector obtained via the ICF algorithm, ensures low rejection rate (roughly at most \(1/R\)) with respect to \(\calD^\mathrm{train}_\calX\). Hence, choosing \(R = 1/\eta\) would imply that the error of \(\hat f\) on the selected part of \(\Dtest\) is at most \((\lambda+\errtrain(\hat f))/\eta\) and the rejection rate of the selector is at most \(\eta\). This provides a high level idea of the correctness of our PQ learning algorithm. We now prove \Cref{lemma:chow_transfer}.

\begin{proof}[Proof of \Cref{lemma:chow_transfer}]
    Let \(\epsilon' = \epsilon/(4R)\). We have
    \begin{align*}
        \pr{x \sim \calD'}{ \hat f(x) \neq f^\star(x) \land s(x)=1 } &= \ev{x \sim \calD'}{\hat f(x) \bar{f}^\star(x) s(x)} + \ev{x \sim \calD'}{ \bar {\hat f}(x) f^\star(x) s(x)} \\
        &\leq \ev{x \sim \calD'}{\hat f(x) \bar p_\mathrm{down}(x)s(x)} + \ev{x \sim \calD'}{ \bar {\hat f}(x) p_\mathrm{up}(x) s(x)} \\
        &\leq R\left(\ev{x \sim \calD}{\hat f(x) \bar p_\mathrm{down}(x)} + \ev{x \sim \calD}{ \bar {\hat f}(x) p_\mathrm{up}(x)} \right) + \epsilon/2 \, ,
    \end{align*}
    where we used the fact that \(\pdown \leq f^\star \leq \pup\) and the assumption of \Cref{lemma:chow_transfer}. Now, we upper bound each of \(\ev{x \sim \calD}{\hat f(x) \bar p_\mathrm{down}(x)}\) and \(\ev{x \sim \calD}{ \bar {\hat f}(x) p_\mathrm{up}(x)}\) separately. We have
    \begin{align*}
    \ev{x \sim \calD}{\hat f(x) \bar p_\mathrm{down}(x)}& = \ev{x \sim \calD}{\hat f(x) \bar{f}^\star(x) } + \ev{x \sim \calD}{\hat f(x) (\bar p_\mathrm{down}(x) -\bar f^\star(x))} \\
    &\leq \ev{x \sim \calD}{\hat f(x) \bar{f}^\star(x) } + \ev{x \sim \calD}{p_\mathrm{up}(x)-p_\mathrm{down}(x)} 
    \leq \ev{x \sim \calD}{\hat f(x) \bar{f}^\star(x)} + \epsilon' \, ,
    \end{align*}
    where we used the fact that \(\bar p_\mathrm{down} -\bar f^\star \leq \pup - \pdown\) and \(\ev{x \sim \calD}{p_\mathrm{up}(x)-p_\mathrm{down}(x)} \leq \epsilon'\). Similarly,
    \[
    \ev{x \sim \calD}{\bar {\hat f}(x) p_\mathrm{up}(x)} = \ev{x \sim \calD}{\bar {\hat f}(x) f^\star(x)} + \ev{x \sim \calD}{\bar {\hat f}(x) (p_\mathrm{up}(x) -f^\star(x))}
    \leq \ev{x \sim \calD}{\bar {\hat f}(x) f^\star(x)} + \epsilon' \, .
    \]
    Putting everything together, we have
    \begin{align*}
    \pr{x \sim \calD'}{ \hat f(x) \neq f^\star(x) \land s(x)=1 } &\leq R \left( \ev{x \sim \calD}{\hat f(x) \bar{f}^\star(x) } + \ev{x \sim \calD}{\bar {\hat f}(x) f^\star(x)} + 2 \epsilon' \right) + \epsilon/2 \\
    &= R \left( \pr{x \sim \calD}{ \hat f(x) \neq f^\star(x) } + 2 \epsilon' \right) +\epsilon/2 \\
    &\leq R \left( \pr{(x,y) \sim \calD^\mathrm{train}}{ f^\star(x) \neq y } + \pr{(x,y) \sim \calD^\mathrm{train}}{ \hat f(x) \neq y } \right) +\epsilon \, ,
    \end{align*}
    where we applied the triangle inequality in the last line. Finally, another application of the triangle inequality implies that
    \begin{align*}
    \pr{(x,y) \sim \calD^\mathrm{test}}{ \hat f (x) \neq y \land s(x)=1 } &\leq \pr{(x,y) \sim \calD^\mathrm{test}}{f^\star(x) \neq y } + \pr{x \sim \calD'}{\hat f(x) \neq f^\star(x) \land s(x)=1 } \\
    &\leq \lambda_\mathrm{test} + R \left( \lambda_\mathrm{train} + \pr{(x, y) \sim \calD^\mathrm{train}}{ \hat f(x) \neq y } \right) + \epsilon \, ,
    \end{align*}
    which concludes the proof.
\end{proof}
We now state our main theorem for PQ learning. For the proof and applications to standard classes see \Cref{appendix:pq}.

\begin{theorem}[PQ Learning via \(\calL_1\) Sandwiching] \label{thm:pq} Let any \(\epsilon, \delta, \eta \in (0, 1)\) and \(A \geq 1\). Let \(\calX \subseteq \real^d \) and let \(\calD\) be any \(A\)-hypercontractive distribution on \(\calX\). Let \(\calC \subseteq \{0, 1\}^\calX\) be a concept class and let \(\ell=\calL_1(\calC, \calD, \epsilon\eta/8)\). Then, there exists an \((\epsilon, \delta, \eta)\)-PQ learner for \(\calC\) with respect to \(\calD\) with sample complexity and runtime \(\mathrm{poly} ((d+1)^\ell, A^\ell, 1/\epsilon, 1/\eta, \log(1/\delta)^\ell )\).
\end{theorem}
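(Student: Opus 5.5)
The plan is to follow the algorithm sketch: first learn $\hat f$ by $\calL_1$ polynomial regression, then run \Cref{alg:icf} to obtain $s$, and finally combine \Cref{thm:icf} with \Cref{lemma:chow_transfer} for $R \approx 1/\eta$.

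\emph{Step 1: learning $\hat f$.} Set $\epsilon_0 = \epsilon\eta/8$ and $\ell = \calL_1(\calC,\calD,\epsilon_0)$. For every $f\in\calC$ the sandwiching polynomials satisfy $\E_\calD|f-\pup| \le \epsilon_0$, so $f$ is $\epsilon_0$-approximated in $\calL_1$ by a degree-$\ell$ polynomial. By the $\calL_1$ regression result of \cite{doi:10.1137/060649057}, minimizing empirical $\E|p(x)-y|$ over $\calP_{d,\ell}$ and thresholding at a suitably chosen cutoff gives $\hat f$ with
\[
\errtrain(\hat f)\le \min_{f\in\calC}\errtrain(f)+O(\epsilon_0).
\]
This needs $\mathrm{poly}((d+1)^\ell,1/\epsilon_0,\log(1/\delta))$ samples and time.

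\emph{Step 2: filtering.} Take $R$ with $(1+\epsilon_1)/R\le\eta$, for example $R = 2/\eta$ or $R=(1+\epsilon_1)/\eta$ with a small $\epsilon_1$, so that guarantee 2 of \Cref{thm:icf} gives rejection rate at most $\eta$. Fix $f^\star\in\calC_\lambda$ attaining $\lambda_{\mathrm{train}}$, and take its sandwiching polynomials $\pup,\pdown$ with error $\epsilon_0\le \epsilon/(4R')$. Apply guarantee 1 of \Cref{thm:icf} with $p=\bar p_{\mathrm{down}}$ and $q=\pup$. These are fixed, unknown polynomials, as the footnote notes.

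Since $\bar p_{\mathrm{down}}\ge 0$ and $\pup\ge0$ pointwise, $\abs{p}=p$, and the inequalities match the hypotheses of \Cref{lemma:chow_transfer} with multiplicative factor $R'=(R+\epsilon_1)(1+\epsilon_1)$ and additive term $2\epsilon_1\le\epsilon/4$.

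\emph{Main obstacle: choosing $\beta$.} We need $\beta\ge\E_\calD[\pup^2]$ and $\beta\ge\E_\calD[\bar p_{\mathrm{down}}^2]$, and nothing bounds the coefficients. I would use hypercontractivity. We have $\E_\calD|\pup|\le \E f^\star+\E(\pup-\pdown)\le 2$, and similarly $\E|\bar p_{\mathrm{down}}|\le 2$. Then \Cref{def:hypercontractivity} with $t=2$ gives
\[
\E\pup^2\le (2A)^{2\ell}\cdot 4,
\]
so $\beta=4(2A)^{2\ell}$ works, and it enters the sample bound only as $\mathrm{poly}(A^\ell)$.

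\emph{Step 3: the error bound.} \Cref{lemma:chow_transfer} then gives
\[
\lambda_{\mathrm{test}}+R'(\lambda_{\mathrm{train}}+\errtrain(\hat f))+\epsilon .
\]
The remaining issue is that the leading factor must be exactly $1/\eta$, not $R'>1/\eta$. I would handle this by running ICF with $R=1/\eta$ and slack parameter $\epsilon_1=\Theta(\epsilon\eta)$. Then $R'\le 1/\eta+O(\epsilon_1/\eta)$ times quantities that are at most $1$, so the excess is absorbed into $\epsilon$, and the additive $R'\cdot O(\epsilon_0)$ from Step 1 is $O(\epsilon)$.

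The rejection rate is then $(1+\epsilon_1)\eta$. To reach exactly $\eta$, I would instead use $R=(1+\epsilon_1)/\eta$; the extra factor $1+\epsilon_1$ on $R$ again only costs $O(\epsilon)$ since the bracketed errors are at most $2$. After rescaling constants and taking a union bound over failure probabilities, we get an $(\epsilon,\delta,\eta)$-PQ learner with the stated $\mathrm{poly}((d+1)^\ell,A^\ell,1/\epsilon,1/\eta,\log(1/\delta)^\ell)$ complexity.
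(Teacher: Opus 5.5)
Your proposal is correct and follows the paper's proof essentially step for step. You use $\calL_1$ regression for $\hat f$, then ICF with $\beta=4(2A)^{2\ell}$ obtained from hypercontractivity together with $\E_{\calD}\abs{\pup},\E_{\calD}\abs{\bar p_{\mathrm{down}}}\le 2$. Your final choice $R=(1+\epsilon_1)/\eta$ with $\epsilon_1=\Theta(\epsilon\eta)$ is exactly the paper's $R'=1/\eta+\epsilon/96$, $\epsilon'=\epsilon\eta/96$, which makes the rejection rate exactly $\eta$.

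The only difference is where the excess over $1/\eta$ (multiplied by expectations at most $2$) is absorbed. The paper moves it into the additive term before applying \Cref{lemma:chow_transfer} with $R=1/\eta$, while you absorb it afterwards. In either case the fixed sandwiching error $\epsilon\eta/8$ suffices without any rescaling of $\ell$. It lets the lemma be invoked with additive error about $\epsilon/2$, which leaves room for the regression error and the excess term.
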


\section*{Limitations and Future Work}
Our results rely on the existence of $\mathcal{L}_1$ sandwiching polynomials with respect to a hypercontractive training distribution. Whether such sandwiching conditions are necessary for efficient algorithms in the PQ setting remains an open question. Moreover, although our error bounds are sharp and are tight in certain special cases, including the construction of \cite{goldwasser2020beyond}, it is unclear whether these bounds can be improved in other special cases, even by computationally inefficient algorithms.

\bibliography{references}

@inproceedings{rubinfeld2022testingdistributionalassumptionslearning,
author = {Rubinfeld, Ronitt and Vasilyan, Arsen},
title = {Testing Distributional Assumptions of Learning Algorithms},
year = {2023},
isbn = {9781450399135},
publisher = {Association for Computing Machinery},
address = {New York, NY, USA},
url = {https://doi.org/10.1145/3564246.3585117},
doi = {10.1145/3564246.3585117},
booktitle = {Proceedings of the 55th Annual ACM Symposium on Theory of Computing},
pages = {1643–1656},
numpages = {14},
location = {Orlando, FL, USA},
series = {STOC 2023}
}

@InProceedings{klivans2024testablelearningdistributionshift,
  title = 	 {Testable Learning with Distribution Shift},
  author =       {Klivans, Adam and Stavropoulos, Konstantinos and Vasilyan, Arsen},
  booktitle = 	 {Proceedings of Thirty Seventh Conference on Learning Theory},
  pages = 	 {2887--2943},
  year = 	 {2024},
  editor = 	 {Agrawal, Shipra and Roth, Aaron},
  volume = 	 {247},
  series = 	 {Proceedings of Machine Learning Research},
  month = 	 {30 Jun--03 Jul},
  publisher =    {PMLR},
  url = 	 {https://proceedings.mlr.press/v247/klivans24a.html}
}

@article{goldwasser2020beyond,
  title={Beyond perturbations: Learning guarantees with arbitrary adversarial test examples},
  author={Goldwasser, Shafi and Kalai, Adam Tauman and Kalai, Yael and Montasser, Omar},
  journal={Advances in Neural Information Processing Systems},
  volume={33},
  pages={15859--15870},
  year={2020}
}

@inproceedings{goel2024tolerantalgorithmslearningarbitrary,
 author = {Goel, Surbhi and Shetty, Abhishek and Stavropoulos, Konstantinos and Vasilyan, Arsen},
 booktitle = {Advances in Neural Information Processing Systems},
 doi = {10.52202/079017-3969},
 editor = {A. Globerson and L. Mackey and D. Belgrave and A. Fan and U. Paquet and J. Tomczak and C. Zhang},
 pages = {124979--125018},
 publisher = {Curran Associates, Inc.},
 title = {Tolerant Algorithms for Learning with Arbitrary Covariate Shift},
 url = {https://proceedings.neurips.cc/paper_files/paper/2024/file/e209210eae282e23e305df49fbb2769c-Paper-Conference.pdf},
 volume = {37},
 year = {2024}
}

@article{bobkov2001some,
  title={Some generalizations of Prokhorov's results on Khinchin-type inequalities for polynomials},
  author={Bobkov, Sergey G},
  journal={Theory of Probability \& Its Applications},
  volume={45},
  number={4},
  pages={644--647},
  year={2001},
  publisher={SIAM}
}

@INPROCEEDINGS{kanefoolingpfs,
  author={Kane, Daniel M.},
  booktitle={2011 IEEE 26th Annual Conference on Computational Complexity}, 
  title={k-Independent Gaussians Fool Polynomial Threshold Functions}, 
  year={2011},
  volume={},
  number={},
  pages={252-261},
  doi={10.1109/CCC.2011.13}}

@article{doi:10.1137/060649057,
author = {Kalai, Adam Tauman and Klivans, Adam R. and Mansour, Yishay and Servedio, Rocco A.},
title = {Agnostically Learning Halfspaces},
journal = {SIAM Journal on Computing},
volume = {37},
number = {6},
pages = {1777-1805},
year = {2008},
doi = {10.1137/060649057},

URL = { 
    
        https://doi.org/10.1137/060649057
    
    

},
eprint = { 
    
        https://doi.org/10.1137/060649057
    
    

}
}

@article{FERGER201496,
title = {Optimal constants in the Marcinkiewicz–Zygmund inequalities},
journal = {Statistics \& Probability Letters},
volume = {84},
pages = {96-101},
year = {2014},
issn = {0167-7152},
doi = {https://doi.org/10.1016/j.spl.2013.09.029},
url = {https://www.sciencedirect.com/science/article/pii/S0167715213003271},
author = {Dietmar Ferger}
}

@inproceedings{
chandrasekaran2025learning,
title={Learning Neural Networks with Distribution Shift: Efficiently Certifiable Guarantees},
author={Gautam Chandrasekaran and Adam Klivans and Lin Lin Lee and Konstantinos Stavropoulos},
booktitle={The Thirteenth International Conference on Learning Representations},
year={2025},
url={https://openreview.net/forum?id=ed7zI29lRF}
}

@article{Harsha_2018,
   title={On polynomial approximations to AC},
   volume={54},
   ISSN={1098-2418},
   url={http://dx.doi.org/10.1002/rsa.20786},
   DOI={10.1002/rsa.20786},
   number={2},
   journal={Random Structures \&amp; Algorithms},
   publisher={Wiley},
   author={Harsha, Prahladh and Srinivasan, Srikanth},
   year={2018},
   month=July, pages={289–303} }

@InProceedings{tal:LIPIcs.CCC.2017.15,
  author =	{Tal, Avishay},
  title =	{{Tight Bounds on the Fourier Spectrum of AC0}},
  booktitle =	{32nd Computational Complexity Conference (CCC 2017)},
  pages =	{15:1--15:31},
  series =	{Leibniz International Proceedings in Informatics (LIPIcs)},
  ISBN =	{978-3-95977-040-8},
  ISSN =	{1868-8969},
  year =	{2017},
  volume =	{79},
  editor =	{O'Donnell, Ryan},
  publisher =	{Schloss Dagstuhl -- Leibniz-Zentrum f{\"u}r Informatik},
  address =	{Dagstuhl, Germany},
  URL =		{https://drops.dagstuhl.de/entities/document/10.4230/LIPIcs.CCC.2017.15},
  URN =		{urn:nbn:de:0030-drops-75258},
  doi =		{10.4230/LIPIcs.CCC.2017.15}
}

@article{10.1145/1754399.1754401,
author = {Braverman, Mark},
title = {Polylogarithmic independence fools AC0 circuits},
year = {2008},
issue_date = {June 2010},
publisher = {Association for Computing Machinery},
address = {New York, NY, USA},
volume = {57},
number = {5},
issn = {0004-5411},
url = {https://doi.org/10.1145/1754399.1754401},
doi = {10.1145/1754399.1754401},
journal = {J. ACM},
month = jun,
articleno = {28},
numpages = {10}
}

@article{chandrasekaran2024efficient,
  title={Efficient discrepancy testing for learning with distribution shift},
  author={Chandrasekaran, Gautam and Klivans, Adam and Kontonis, Vasilis and Stavropoulos, Konstantinos and Vasilyan, Arsen},
  journal={Advances in Neural Information Processing Systems},
  volume={37},
  pages={137263--137308},
  year={2024}
}

@inproceedings{chow-original,
author = {Chow, C. K.},
title = {On the characterization of threshold functions},
year = {1961},
publisher = {IEEE Computer Society},
address = {USA},
url = {https://doi.org/10.1109/FOCS.1961.24},
doi = {10.1109/FOCS.1961.24},
booktitle = {Proceedings of the 2nd Annual Symposium on Switching Circuit Theory and Logical Design (SWCT 1961)},
pages = {34–38},
numpages = {5},
series = {FOCS '61}
}

@inproceedings{chow-ptf,
author = {Diakonikolas, Ilias and Kane, Daniel M.},
title = {Degree-d chow parameters robustly determine degree-d PTFs (and algorithmic applications)},
year = {2019},
isbn = {9781450367059},
publisher = {Association for Computing Machinery},
address = {New York, NY, USA},
url = {https://doi.org/10.1145/3313276.3316301},
doi = {10.1145/3313276.3316301},
booktitle = {Proceedings of the 51st Annual ACM SIGACT Symposium on Theory of Computing},
pages = {804–815},
numpages = {12},
location = {Phoenix, AZ, USA},
series = {STOC 2019}
}

@inproceedings{chow-ltf,
author = {O'Donnell, Ryan and Servedio, Rocco A.},
title = {The chow parameters problem},
year = {2008},
isbn = {9781605580470},
publisher = {Association for Computing Machinery},
address = {New York, NY, USA},
url = {https://doi.org/10.1145/1374376.1374450},
doi = {10.1145/1374376.1374450},
booktitle = {Proceedings of the Fortieth Annual ACM Symposium on Theory of Computing},
pages = {517–526},
numpages = {10},
location = {Victoria, British Columbia, Canada},
series = {STOC '08}
}

@InProceedings{kalai21acovariateshift,
  title = 	 {Efficient Learning with Arbitrary Covariate Shift},
  author =       {Kalai, Adam Tauman and Kanade, Varun},
  booktitle = 	 {Proceedings of the 32nd International Conference on Algorithmic Learning Theory},
  pages = 	 {850--864},
  year = 	 {2021},
  editor = 	 {Feldman, Vitaly and Ligett, Katrina and Sabato, Sivan},
  volume = 	 {132},
  series = 	 {Proceedings of Machine Learning Research},
  month = 	 {16--19 Mar},
  publisher =    {PMLR},
  url = 	 {https://proceedings.mlr.press/v132/kalai21a.html}
}

@article{KALAI2012reliable,
title = {Reliable agnostic learning},
journal = {Journal of Computer and System Sciences},
volume = {78},
number = {5},
pages = {1481-1495},
year = {2012},
note = {JCSS Special Issue: Cloud Computing 2011},
issn = {0022-0000},
doi = {https://doi.org/10.1016/j.jcss.2011.12.026},
url = {https://www.sciencedirect.com/science/article/pii/S0022000012000256},
author = {Adam Tauman Kalai and Varun Kanade and Yishay Mansour}
}

@article{klivans2026sandwiching,
  title={Sandwiching Polynomials for Geometric Concepts with Low Intrinsic Dimension},
  author={Klivans, Adam R and Stavropoulos, Konstantinos and Vasilyan, Arsen},
  journal={arXiv preprint arXiv:2602.24178},
  year={2026}
}

@inproceedings{odonnellservedio03ac0,
author = {O'Donnell, Ryan and Servedio, Rocco A.},
title = {New degree bounds for polynomial threshold functions},
year = {2003},
isbn = {1581136749},
publisher = {Association for Computing Machinery},
address = {New York, NY, USA},
url = {https://doi.org/10.1145/780542.780592},
doi = {10.1145/780542.780592},
booktitle = {Proceedings of the Thirty-Fifth Annual ACM Symposium on Theory of Computing},
pages = {325–334},
numpages = {10},
location = {San Diego, CA, USA},
series = {STOC '03}
}

@article{bazzi2009polylogarithmic,
	author = {Bazzi, Louay MJ},
	journal = {SIAM Journal on Computing},
	number = {6},
	pages = {2220--2272},
	publisher = {SIAM},
	title = {Polylogarithmic independence can fool DNF formulas},
	volume = {38},
	year = {2009}}

@article{razborov2009simple,
  title={A simple proof of Bazzi’s theorem},
  author={Razborov, Alexander},
  journal={ACM Transactions on Computation Theory (TOCT)},
  volume={1},
  number={1},
  pages={1--5},
  year={2009},
  publisher={ACM New York, NY, USA}
}

@inproceedings{hatami2023theory,
  title={Theory of unconditional pseudorandom generators},
  author={Hatami, Pooya and Hoza, William},
  booktitle={Electron. Colloquium Comput. Complex., TR23-019},
  volume={1},
  number={1},
  pages={1},
  year={2023}
}

@inproceedings{
klivans2026poweriterativefilteringsupervised,
title={The Power of Iterative Filtering for Supervised Learning with (Heavy) Contamination},
author={Adam Klivans and Konstantinos Stavropoulos and Kevin Tian and Arsen Vasilyan},
booktitle={The Thirty-ninth Annual Conference on Neural Information Processing Systems},
year={2026},
url={https://openreview.net/forum?id=E7knuYAvpt}
}

@article{klivans2009learning,
  title={Learning Halfspaces with Malicious Noise.},
  author={Klivans, Adam R and Long, Philip M and Servedio, Rocco A},
  journal={Journal of Machine Learning Research},
  volume={10},
  number={12},
  year={2009}
}

@inproceedings{diakonikolas2018learning,
  title={Learning geometric concepts with nasty noise},
  author={Diakonikolas, Ilias and Kane, Daniel M and Stewart, Alistair},
  booktitle={Proceedings of the 50th Annual ACM SIGACT Symposium on Theory of Computing},
  pages={1061--1073},
  year={2018}
}

@article{diakonikolas2019robust,
  title={Robust estimators in high-dimensions without the computational intractability},
  author={Diakonikolas, Ilias and Kamath, Gautam and Kane, Daniel and Li, Jerry and Moitra, Ankur and Stewart, Alistair},
  journal={SIAM Journal on Computing},
  volume={48},
  number={2},
  pages={742--864},
  year={2019},
  publisher={SIAM}
}

@inproceedings{diakonikolas2019sever,
  title={Sever: A robust meta-algorithm for stochastic optimization},
  author={Diakonikolas, Ilias and Kamath, Gautam and Kane, Daniel and Li, Jerry and Steinhardt, Jacob and Stewart, Alistair},
  booktitle={International Conference on Machine Learning},
  pages={1596--1606},
  year={2019},
  organization={PMLR}
}

@InProceedings{klivans2024learningac0,
  title = 	 {Learning Constant-Depth Circuits in Malicious Noise Models},
  author =       {Klivans, Adam and Stavropoulos, Konstantinos and Vasilyan, Arsen},
  booktitle = 	 {Proceedings of Thirty Eighth Conference on Learning Theory},
  pages = 	 {3253--3263},
  year = 	 {2025},
  editor = 	 {Haghtalab, Nika and Moitra, Ankur},
  volume = 	 {291},
  series = 	 {Proceedings of Machine Learning Research},
  month = 	 {30 Jun--04 Jul},
  publisher =    {PMLR},
  url = 	 {https://proceedings.mlr.press/v291/klivans25a.html}
}

@article{diakonikolas2010bounded,
	author = {Diakonikolas, Ilias and Gopalan, Parikshit and Jaiswal, Ragesh and Servedio, Rocco A and Viola, Emanuele},
	journal = {SIAM Journal on Computing},
	number = {8},
	pages = {3441--3462},
	publisher = {SIAM},
	title = {Bounded independence fools halfspaces},
	volume = {39},
	year = {2010}}

@inproceedings{KKM13,
  author       = {Daniel M. Kane and
                  Adam R. Klivans and
                  Raghu Meka},
  editor       = {Shai Shalev{-}Shwartz and
                  Ingo Steinwart},
  title        = {Learning Halfspaces Under Log-Concave Densities: Polynomial Approximations
                  and Moment Matching},
  booktitle    = {{COLT} 2013 - The 26th Annual Conference on Learning Theory, June
                  12-14, 2013, Princeton University, NJ, {USA}},
  series       = {{JMLR} Workshop and Conference Proceedings},
  volume       = {30},
  pages        = {522--545},
  publisher    = {JMLR.org},
  year         = {2013},
  url          = {http://proceedings.mlr.press/v30/Kane13.html},
  bibsource    = {dblp computer science bibliography, https://dblp.org}
}

@inproceedings{gopalan2010fooling,
  title={Fooling functions of halfspaces under product distributions},
  author={Gopalan, Parikshit and O'Donnell, Ryan and Wu, Yi and Zuckerman, David},
  booktitle={2010 IEEE 25th Annual Conference on Computational Complexity},
  pages={223--234},
  year={2010},
  organization={IEEE}
}

@inproceedings{diakonikolas2010ptf,
  title={Bounded independence fools degree-2 threshold functions},
  author={Diakonikolas, Ilias and Kane, Daniel M and Nelson, Jelani},
  booktitle={2010 IEEE 51st Annual Symposium on Foundations of Computer Science},
  pages={11--20},
  year={2010},
  organization={IEEE}
}

@inproceedings{gollakota2023moment,
author = {Gollakota, Aravind and Klivans, Adam R. and Kothari, Pravesh K.},
title = {A Moment-Matching Approach to Testable Learning and a New Characterization of Rademacher Complexity},
year = {2023},
isbn = {9781450399135},
publisher = {Association for Computing Machinery},
address = {New York, NY, USA},
url = {https://doi.org/10.1145/3564246.3585206},
doi = {10.1145/3564246.3585206},
booktitle = {Proceedings of the 55th Annual ACM Symposium on Theory of Computing},
pages = {1657–1670},
numpages = {14},
location = {Orlando, FL, USA},
series = {STOC 2023}
}

@article{slot2024testably,
  title={Testably learning polynomial threshold functions},
  author={Slot, Lucas and Tiegel, Stefan and Wiedmer, Manuel},
  journal={Advances in Neural Information Processing Systems},
  volume={37},
  pages={3781--3831},
  year={2024}
}

@article{klivans2024learning,
  title={Learning Intersections of Halfspaces with Distribution Shift: Improved Algorithms and SQ Lower Bounds},
  author={Klivans, Adam R and Stavropoulos, Konstantinos and Vasilyan, Arsen},
  journal={The Thirty Seventh Annual Conference on Learning Theory},
  year={2024},
  organization={PMLR}
}

@article{winder1971chow,
  author  = {Winder, R. O.},
  title   = {Chow Parameters in Threshold Logic},
  journal = {Journal of the ACM},
  volume  = {18},
  pages   = {265--289},
  year    = {1971}
}

@article{goldmann1992majority,
  author  = {Goldmann, Mikael and H{\aa}stad, Johan and Razborov, Alexander},
  title   = {Majority Gates vs. General Weighted Threshold Gates},
  journal = {Computational Complexity},
  volume  = {2},
  pages   = {277--300},
  year    = {1992}
}

@article{banzhaf1965weighted,
  author  = {Banzhaf, John F.},
  title   = {Weighted Voting Doesn't Work: A Mathematical Analysis},
  journal = {Rutgers Law Review},
  volume  = {19},
  pages   = {317--343},
  year    = {1965}
}

@article{dubey1979mathematical,
  author  = {Dubey, Pradeep and Shapley, Lloyd S.},
  title   = {Mathematical Properties of the Banzhaf Power Index},
  journal = {Mathematics of Operations Research},
  volume  = {4},
  pages   = {99--131},
  year    = {1979}
}

@article{bendavid1998learning,
  author  = {Ben-David, Shai and Dichterman, Eli},
  title   = {Learning with Restricted Focus of Attention},
  journal = {Journal of Computer and System Sciences},
  volume  = {56},
  pages   = {277--298},
  year    = {1998}
}

@article{birkendorf1998restricted,
  author  = {Birkendorf, Andreas and Dichterman, Eli and Jackson, Jeffrey and Klasner, Nathaniel and Simon, Hans Ulrich},
  title   = {On Restricted-Focus-of-Attention Learnability of Boolean Functions},
  journal = {Machine Learning},
  volume  = {30},
  pages   = {89--123},
  year    = {1998}
}

@article{goel2025testingnoiseassumptions,
  title={Testing noise assumptions of learning algorithms},
  author={Goel, Surbhi and Klivans, Adam R and Stavropoulos, Konstantinos and Vasilyan, Arsen},
  journal={arXiv preprint arXiv:2501.09189},
  year={2025}
}

@article{chandrasekaran2025fully,
  title={A Fully Polynomial-Time Algorithm for Robustly Learning Halfspaces over the Hypercube},
  author={Chandrasekaran, Gautam and Klivans, Adam R and Stavropoulos, Konstantinos and Vasilyan, Arsen},
  journal={arXiv preprint arXiv:2511.07244},
  year={2025}
}

@article{chow-optimal,
author = {De, Anindya and Diakonikolas, Ilias and Feldman, Vitaly and Servedio, Rocco A.},
title = {Nearly Optimal Solutions for the Chow Parameters Problem and Low-Weight Approximation of Halfspaces},
year = {2014},
issue_date = {April 2014},
publisher = {Association for Computing Machinery},
address = {New York, NY, USA},
volume = {61},
number = {2},
issn = {0004-5411},
url = {https://doi.org/10.1145/2590772},
doi = {10.1145/2590772},
journal = {J. ACM},
month = apr,
articleno = {11},
numpages = {36}
}

@article{ben2006analysis,
  title={Analysis of representations for domain adaptation},
  author={Ben-David, Shai and Blitzer, John and Crammer, Koby and Pereira, Fernando},
  journal={Advances in neural information processing systems},
  volume={19},
  year={2006}
}

@article{blitzer2007learning,
  title={Learning bounds for domain adaptation},
  author={Blitzer, John and Crammer, Koby and Kulesza, Alex and Pereira, Fernando and Wortman, Jennifer},
  journal={Advances in neural information processing systems},
  volume={20},
  year={2007}
}

@article{article,
author = {Vaart, Aad and Wellner, Jon},
year = {2009},
month = {01},
pages = {103-107},
title = {A note on bounds for VC dimensions},
volume = {5},
isbn = {978-0-940600-78-2},
journal = {Institute of Mathematical Statistics collections},
doi = {10.1214/09-IMSCOLL508}
}

@article{lmnac0,
author = {Linial, Nathan and Mansour, Yishay and Nisan, Noam},
title = {Constant depth circuits, Fourier transform, and learnability},
year = {1993},
issue_date = {July 1993},
publisher = {Association for Computing Machinery},
address = {New York, NY, USA},
volume = {40},
number = {3},
issn = {0004-5411},
url = {https://doi.org/10.1145/174130.174138},
doi = {10.1145/174130.174138},
journal = {J. ACM},
month = jul,
pages = {607–620},
numpages = {14}
}

@inproceedings{mansour2009domadapt,
title	= {Domain Adaptation: Learning Bounds and Algorithms},
author	= {Yishay Mansour and Mehryar Mohri and Afshin Rostamizadeh},
year	= {2009},
URL	= {http://www.cs.nyu.edu/~mohri/postscript/nadap.pdf},
booktitle	= {Proceedings of The 22nd Annual Conference on Learning Theory (COLT 2009)},
address	= {Montr\'eal, Canada}
}

@inproceedings{david2010impossibility,
  title={Impossibility theorems for domain adaptation},
  author={David, Shai Ben and Lu, Tyler and Luu, Teresa and P{\'a}l, D{\'a}vid},
  booktitle={Proceedings of the Thirteenth International Conference on Artificial Intelligence and Statistics},
  pages={129--136},
  year={2010},
  organization={JMLR Workshop and Conference Proceedings}
}

@article{ben2010theory,
  title={A theory of learning from different domains},
  author={Ben-David, Shai and Blitzer, John and Crammer, Koby and Kulesza, Alex and Pereira, Fernando and Vaughan, Jennifer Wortman},
  journal={Machine learning},
  volume={79},
  pages={151--175},
  year={2010},
  publisher={Springer}
}

\newpage

\appendix

\crefalias{section}{appendix}
\crefalias{subsection}{appendix}

\section{Extended Preliminaries}

\subsection{Notation}

Let \(\natural = \{0, 1, \dots\}\) denote the set of natural numbers and \(\natural^\ast = \natural \setminus \{0\}\). For any \(n \in \natural^\ast\), let \([n] = \{k \in \natural^\ast : k \leq n\}\). Let \(\integer\) denote the set of integers. For any \(d, \ell \in \natural\), let \(\calP_{d,\ell}\) denote the set of \(d\)-variate polynomials, whose degree is at most \(\ell\). For any polynomial \(p\), let \(\deg(p)\) denote its degree. For any \(d \in \natural^\ast\), let \(\calN_d\) denote the standard normal distribution on \(\real^d\). For any multiset \(S\), we let \(\calU(S)\) denote the uniform distribution on \(S\). For any real-valued function \(f\), we denote \(\bar f = 1 - f\). For any real-valued functions \(f, g\), we use \(f \lesssim g\) (resp. \(f \gtrsim g\)) to indicate that there exists some universal constant \(C > 0\) such that \(f \leq Cg\) (resp. \(Cf \geq g\)). For any pair of distributions \(\calD, \calD'\) on some set, we let \(d_\mathrm{TV}(\calD, \calD')\) denote the total variation distance between \(\calD\) and \(\calD'\). For any pair of training and test distributions \(\Dtrain\) and \(\Dtest\) on \(\calX \times \calY\), we denote their marginals on \(\calX\) by \(\Dtrain_\calX\) and \(\Dtest_\calX\) respectively. Finally, for a given concept class \(\calC \subseteq \calY^\calX\), we denote \(\mathsf{opt}_\mathrm{train} = \inf_{f \in \calC} \pr{(x, y) \sim \Dtrain}{f(x) \neq y}\).

\subsection{Learning Theory}

\begin{definition}
    A halfspace in \(\real^d\) is any function \(h \colon \real^d \to \{\pm 1\}\) of the form \(h(x) = \mathrm{sign}(w \cdot x + b)\) for some unit vector \(w \in \real^d\) and \(b \in \real\).
\end{definition}

\begin{definition}
    Let any \(\ell \in \natural\). A degree-\(\ell\) polynomial threshold function (PTF) in \(\real^d\) is any function \(h \colon \real^d \to \{\pm 1\}\) of the form \(h(x) = \mathrm{sign}(p(x))\) for some \(p \in \calP_{d, \ell}\).
\end{definition}

\begin{definition}
    Let any set \(\calX\), \(r \in \natural^\ast\), \(\calH_1, \dots, \calH_r \subseteq \{0, 1\}^\calX\) and let \(V = \max_{i \in [r]} \mathrm{VCdim}(\calH_i)\). The OR of \(\calH_1, \dots, \calH_r\) is defined as \(\bigvee_{i=1}^r \calH_i = \{h_1 \lor \cdots \lor h_r : h_1 \in \calH_1, \dots, h_r \in \calH_r\}\) and the \textit{AND of \(\calH_1, \dots, \calH_r\)} is defined as \(\bigwedge_{i=1}^r \calH_i = \{h_1 \land \cdots \land h_r : h_1 \in \calH_1, \dots, h_r \in \calH_r\}\).
\end{definition}

\begin{fact} \label{fact:vc_union}
    Let any set \(\calX\), \(r \in \natural^\ast\), \(\calH_1, \dots, \calH_r \subseteq \{0, 1\}^\calX\) and let \(V = \max_{i \in [r]} \mathrm{VCdim}(\calH_i)\). Then, it holds \(\mathrm{VCdim}(\bigcup_{i=1}^r \calH_i) \lesssim V \log(V) + \log(r)\).
\end{fact}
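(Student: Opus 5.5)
This is the standard VC bound for a finite union of classes, and I would prove it by a counting argument via the Sauer--Shelah lemma. Let \(m\) be the size of any set \(T \subseteq \calX\) shattered by \(\bigcup_{i=1}^r \calH_i\). The aim is to show \(m \lesssim V\log V + \log r\). When \(V = 0\), each \(\calH_i\) induces at most one labeling on \(T\), so \(2^m \le r\) and \(m \le \log_2 r\). From here on I assume \(V \geq 1\).

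Every labeling of \(T\) is realized by some \(h \in \calH_i\) for some \(i\). So the number of distinct labelings of \(T\) induced by the union is at most the sum over \(i\) of the growth functions \(\Pi_{\calH_i}(m)\). Shattering then gives
\[
2^m \le \sum_{i=1}^r \Pi_{\calH_i}(m).
\]
By Sauer--Shelah, \(\Pi_{\calH_i}(m) \le \sum_{k=0}^{V}\binom{m}{k} \le (em/V)^V\) for \(m \ge V\). Note that the VC dimension of each \(\calH_i\) is at most \(V\), and the growth function is monotone in the VC dimension bound. Combining these (and the case \(m<V\) is trivially fine), we get
\[
2^m \le r\,(em/V)^V, \qquad\text{that is,}\qquad m \le \log_2 r + V\log_2(em/V).
\]

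The one real step is solving this implicit inequality for \(m\). I would argue by contradiction. Suppose \(m > C(V\log V + \log r)\) for a large constant \(C\), with logs taken base \(2\) and \(V\geq 2\). The case \(V=1\) is handled the same way with \(V\log V\) replaced by a constant. Write \(m = V t\), so the inequality reads \(t \le \log_2(r)/V + \log_2(e t)\).

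Since \(t \ge C\log V + C\log(r)/V\), consider two regimes.
\begin{itemize}
\item If \(\log_2(r)/V \le t/2\), then we need \(t/2 \le \log_2(et)\). This fails once \(t\) exceeds a universal constant, which holds when \(C\) is large.
\item If instead \(\log_2(r)/V > t/2\), then \(m = Vt < 2\log_2 r\). This contradicts \(m > C\log r\) for \(C\ge 2\).
\end{itemize}
In both cases the assumption fails, so \(m \lesssim V\log V + \log r\), as claimed. This slightly fiddly bookkeeping is the only step needing care, and it is routine. Equivalently, one can use the elementary fact that \(x \le a + b\log x\) implies \(x \le 2a + 4b\log(2b)\) for \(b\ge1\).
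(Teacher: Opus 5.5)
Your argument is correct. It is the standard Sauer--Shelah counting proof: bound \(2^m \le \sum_i \Pi_{\calH_i}(m) \le r(em/V)^V\), then solve for \(m\). The paper simply quotes this as a known fact without proof, so there is no competing argument to compare against.

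One small remark: as literally written, the statement fails when \(V = r = 1\), since the right-hand side is \(0\) while the union can have VC dimension \(1\). Your handling of the \(V=1\) case proves \(m \lesssim V\log V + \log r + 1\), and that is the correct reading of the fact.
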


\begin{fact}[\cite{article} and references therein] \label{fact:vc_and}
    Let any set \(\calX\), \(r \in \natural^\ast\), \(\calH_1, \dots, \calH_r \subseteq \{0, 1\}^\calX\) and let \(V = \max_{i \in [r]} \mathrm{VCdim}(\calH_i)\). Then, it holds \(\mathrm{VCdim}(\bigvee_{i=1}^r \calH_i) \lesssim V r \log(r)\) and \(\mathrm{VCdim}(\bigwedge_{i=1}^r \calH_i) \lesssim V r \log(r)\).
\end{fact}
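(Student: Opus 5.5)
The plan is a growth-function (Sauer--Shelah) counting argument; the AND case will then follow from the OR case by De Morgan.

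\textbf{Bounding the growth function.} For a class \(\calH\) and a finite set \(T \subseteq \calX\) with \(\abs{T} = n\), write \(\Pi_\calH(n)\) for the maximum over such \(T\) of the number of distinct restrictions \(\{h|_T : h \in \calH\}\). The restriction of \(h_1 \lor \cdots \lor h_r\) to \(T\) is determined by the tuple \((h_1|_T, \dots, h_r|_T)\). Hence
\[
\Pi_{\bigvee_{i} \calH_i}(n) \leq \prod_{i=1}^r \Pi_{\calH_i}(n).
\]
By the Sauer--Shelah lemma, each factor is at most \((en/V)^V\) whenever \(n \geq V \geq 1\), because \(\mathrm{VCdim}(\calH_i) \leq V\) and the bound is monotone in the VC dimension. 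So the OR class has growth function at most \((en/V)^{Vr}\).

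\textbf{Turning this into a VC bound.} If the OR class shatters \(n\) points, then
\[
2^n \leq (en/V)^{Vr}, \qquad\text{that is,}\qquad n \leq Vr\log_2(en/V).
\]
Substitute \(n = C\,V r \log r\) for a large constant \(C\), assuming \(r \geq 2\). The right-hand side becomes \(Vr\log_2(eCr\log r) = O(Vr\log r)\), with a constant that does not grow with \(C\). For \(C\) large enough the inequality therefore fails, so no set of that size is shattered, and \(\mathrm{VCdim}(\bigvee_i \calH_i) \lesssim V r \log r\).

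\textbf{Degenerate cases.} If \(r = 1\), the class is just \(\calH_1\) and its VC dimension is \(V\). If \(V = 0\), each \(\calH_i\) induces a single labeling on any finite set, so the OR class does too and has VC dimension \(0\). To cover \(r = 1\) uniformly, I would read \(\log r\) as \(\max\{1, \log r\}\), since otherwise the stated bound is \(0\).

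\textbf{The AND case.} Let \(\bar\calH = \{1-h : h \in \calH\}\); complementation preserves VC dimension. By De Morgan,
\[
\bigwedge_i \calH_i = \overline{\bigvee_i \bar\calH_i},
\]
so applying the OR bound to the classes \(\bar\calH_i\) gives the same bound for the AND class. The only step needing care is the algebra in the previous paragraph: solving \(n \leq Vr\log(en/V)\) to get \(n = O(Vr\log r)\), which uses the standard fact that \(n \leq a\log n\) implies \(n = O(a\log a)\).
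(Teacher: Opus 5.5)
Your proof is correct and follows the standard argument behind the cited bound; the paper gives no proof of its own and defers to van der Vaart--Wellner and the references therein. That argument bounds the growth function of the $r$-fold OR class by the product of the individual growth functions, applies Sauer--Shelah to get $(en/V)^{Vr}$, solves $2^n \le (en/V)^{Vr}$, and gets the AND case by complementation. One cosmetic slip: the ratio $\log_2(eCr\log r)/\log r$ does grow with $C$, but only like $\log C$, so your inequality still fails for large $C$.
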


\begin{fact} \label{fact:vc_ptf}
    For any \(d, \ell \in \natural^\ast\), the VC dimension of polynomial threshold functions of degree at most \(\ell\) in \(\real^d\) is at most \((d+1)^\ell\).
\end{fact}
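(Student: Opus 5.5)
The plan is to reduce to linear classifiers in a feature space of dimension at most \((d+1)^\ell\), and then use the classical fact that homogeneous linear classifiers in \(\real^N\) have VC dimension at most \(N\).

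\textbf{Step 1: linearization.} Let \(\phi \colon \real^d \to \real^N\) map \(x\) to the vector of all monomials \(x^\alpha = \prod_i x_i^{\alpha_i}\) with \(\abs{\alpha} \leq \ell\). This includes the constant monomial, so no separate bias term is needed. Every \(p \in \calP_{d,\ell}\) can then be written as \(p(x) = w \cdot \phi(x)\) for some \(w \in \real^N\). Hence every degree-\(\le \ell\) PTF has the form \(x \mapsto \mathrm{sign}(w\cdot \phi(x))\), whatever convention is used for \(\mathrm{sign}(0)\).

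\textbf{Step 2: counting monomials.} A monomial of degree at most \(\ell\) corresponds to a multiset of size exactly \(\ell\) drawn from \(\{0,1,\dots,d\}\): each variable index \(i\) is repeated \(\alpha_i\) times, and the index \(0\) pads the multiset up to size \(\ell\). Each multiset is the image of at least one ordered sequence in \(\{0,\dots,d\}^\ell\). Therefore \(N = \binom{d+\ell}{\ell} \leq (d+1)^\ell\).

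\textbf{Step 3: the VC bound for homogeneous linear classifiers.} Suppose \(N+1\) points \(x_1,\dots,x_{N+1}\) were shattered. The vectors \(\phi(x_i) \in \real^N\) are linearly dependent, so there are coefficients \(a\), not all zero, with \(\sum_i a_i \phi(x_i) = 0\). Replacing \(a\) by \(-a\) if needed, we may assume some \(a_i < 0\). Fix the convention \(\mathrm{sign}(0)=+1\); the other convention is handled symmetrically. Consider the labeling that assigns \(+1\) to the indices with \(a_i > 0\) and \(-1\) to all others.

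Suppose some \(w\) realizes this labeling. Every index with \(a_i > 0\) has \(w\cdot\phi(x_i) \geq 0\), so \(a_i\, w\cdot\phi(x_i) \geq 0\). Every index with \(a_i < 0\) has \(w\cdot\phi(x_i) < 0\), so \(a_i\, w\cdot\phi(x_i) > 0\), and at least one such index exists. The indices with \(a_i = 0\) contribute nothing. So \(0 = w \cdot \sum_i a_i\phi(x_i) = \sum_i a_i\, w\cdot\phi(x_i) > 0\), a contradiction. Hence no set of \(N+1\) points is shattered, and the VC dimension is at most \(N \leq (d+1)^\ell\).

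\textbf{Main obstacle.} The only delicate point is the treatment of the value \(\mathrm{sign}(0)\) in Step 3. The argument handles it by placing the nonnegative-product indices on the side where ties are permitted and requiring at least one strict inequality from the other side. The remaining steps are routine counting.
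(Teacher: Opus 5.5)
The paper states this as a standard fact with no proof, so there is no argument of its own to compare against. Your proof is the canonical one and it is correct.

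The monomial feature map reduces degree-$\le \ell$ PTFs to homogeneous halfspaces in $\real^N$. Your multiset-padding count correctly gives $N=\binom{d+\ell}{\ell}\le (d+1)^\ell$.

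Your linear-dependence argument then rules out shattering $N+1$ points. It handles either tie convention for $\mathrm{sign}(0)$. This matters, because the paper later applies the fact to $\{0,1\}$-valued indicators with strict inequalities such as $\indicator{p(x) > \tau}$ in \Cref{lemma:vc1}.
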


\begin{fact} \label{fact:uc0}
    Let any set \(\calX\), any distribution \(\calD\) on \(\calX\) and any class \(\calH \subseteq \{0,1\}^\calX\) such that \(\mathrm{VCdim}(\calH) < \infty\). Let \(S\) be a multiset of \(n\) i.i.d. samples drawn from \(\calD\). Then, there exists some universal constant \(C > 0\) such that for any \(\delta \in (0, 1)\), with probability at least \(1 - \delta\) over \(S\), it holds
    \[
    \sup_{h \in \calH} \ABS{ \ev{x \sim \calU(S)}{h(x)} - \ev{x \sim \calD}{h(x)} } \leq C\sqrt{ \dfrac{\mathrm{VCdim}(\calH) \log(en) + \log(1/\delta)}{n}} \, .
    \]
\end{fact}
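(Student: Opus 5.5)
The statement to prove is the last fact (\Cref{fact:uc0}), the uniform convergence bound for VC classes. I would follow the classical symmetrization route. Write \(\Phi(S) = \sup_{h \in \calH} \abs{\ev{x \sim \calU(S)}{h(x)} - \ev{x \sim \calD}{h(x)}}\). The plan has two parts: control \(\ev{S}{\Phi(S)}\), then get the tail via bounded differences. Changing one point of \(S\) changes each empirical average by at most \(1/n\), so \(\Phi\) is \(1/n\)-Lipschitz in each coordinate. McDiarmid's inequality then gives, with probability at least \(1-\delta\),
\[
\Phi(S) \leq \ev{}{\Phi(S)} + \sqrt{\log(1/\delta)/(2n)} .
\]
This already accounts for the \(\log(1/\delta)/n\) term under the square root.

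To bound \(\ev{}{\Phi(S)}\), I would introduce a ghost sample \(S'\) of \(n\) further i.i.d. draws from \(\calD\). Jensen's inequality gives
\[
\ev{}{\Phi(S)} \leq \ev{S,S'}{\sup_h \abs{\tfrac1n \textstyle\sum_i (h(x_i) - h(x_i'))}} .
\]
Inserting i.i.d. Rademacher signs \(\sigma_i\) does not change the distribution, because \(x_i\) and \(x_i'\) are exchangeable. The triangle inequality then bounds the right-hand side by \(2\ev{}{\sup_h \abs{\tfrac1n \sum_i \sigma_i h(x_i)}}\), which is twice the Rademacher complexity.

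Conditioned on \(S\), the supremum ranges over the finite set of restriction patterns \(\{(h(x_1),\dots,h(x_n)) : h \in \calH\}\). By the Sauer--Shelah lemma this set has size at most \((en/V)^V \leq (en)^V\), where \(V = \mathrm{VCdim}(\calH)\). Each pattern vector has Euclidean norm at most \(\sqrt n\). Massart's finite-class lemma, applied to the pattern set together with its negations, then gives
\[
\ev{\sigma}{\sup_h \abs{\tfrac1n \textstyle\sum_i \sigma_i h(x_i)}} \leq \sqrt{2(V \log(en) + \log 2)/n} .
\]
Combining the two parts and using \(\sqrt a + \sqrt b \leq \sqrt{2(a+b)}\) yields the claim with a universal constant \(C\).

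I don't expect a real obstacle, since every step is standard. The one point needing care is that the supremum is measurable, or at least that the expectations are well defined. For a possibly uncountable \(\calH\), I would handle this either by the usual permissibility convention or by noting that the bound only uses the finite pattern set after conditioning on \(S \cup S'\). Because only the \(\log(en)\) form is required, I would not attempt chaining to remove the logarithm.
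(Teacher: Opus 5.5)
The paper gives no proof of this fact; it invokes it as the standard VC uniform-convergence bound. Your argument is exactly the textbook derivation: McDiarmid for the tail, ghost-sample symmetrization, then Sauer--Shelah with Massart's lemma for the Rademacher term. It is correct as written.

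\textbf{Case $\mathrm{VCdim}(\calH)=0$.} Your final step absorbs the $\log 2$ from Massart's lemma into $\mathrm{VCdim}(\calH)\log(en)$, and that needs $\mathrm{VCdim}(\calH)\ge 1$. This is not just an artifact of the proof. If $\mathrm{VCdim}(\calH)=0$, then $\calH$ contains a single function $h$. Take $n=1$ and $\pr{x\sim\calD}{h(x)=1}=1/2$. The deviation is then deterministically $1/2$, while $C\sqrt{\log(1/\delta)}\to 0$ as $\delta\to 1$, so the fact as literally stated fails. Assume $\mathrm{VCdim}(\calH)\ge 1$, or replace $\log(1/\delta)$ by $\log(2/\delta)$. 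Either fix is harmless for every use in the paper.

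\textbf{Measurability.} Rely on the permissibility (or outer-probability) convention, not on the pattern-set remark. McDiarmid is applied to $\Phi(S)$ itself, which depends on $\ev{x\sim\calD}{h(x)}$ and not only on the restriction patterns.
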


\subsection{Sandwiching Approximators}\label{appendix:sandwiching-bounds}

In this section, we present known literature results on the existence of \(\calL_1\) sandwiching polynomials for various concept classes.

\begin{fact}[\cite{10.1145/1754399.1754401, tal:LIPIcs.CCC.2017.15, Harsha_2018}] \label{fact:l1_degree_AC0}
    Let any \(\epsilon \in (0, 1)\). Let \(\calC\) be the class of depth-\(t\), size-\(s\) \(\mathsf{AC}^0\) circuits in \(\{\pm 1\}^d\). Then, \( \calL_1(\calC, \calU(\{\pm1\}^d), \epsilon) \leq O(\log(s))^{O(t)}\log(1/\epsilon)\).
\end{fact}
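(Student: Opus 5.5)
The plan is to derive the statement from the pseudorandomness literature on bounded independence fooling $\mathsf{AC}^0$, using the LP-duality equivalence (due to Bazzi) between ``$k$-wise independence $\epsilon$-fools $f$'' and ``$f$ has degree-$k$ $\calL_1$ sandwiching polynomials with gap $O(\epsilon)$ under $\calU(\{\pm1\}^d)$.'' I would first prove this equivalence in the direction we need, and then plug in the sharpest known fooling bound.

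\textbf{Duality step.} Fix $f \in \calC$ and a degree bound $k$. Consider the finite linear program over the coefficients of $p \in \calP_{d,k}$: minimize $\ev{x\sim\calU(\{\pm1\}^d)}{p(x)}$ subject to $p(x) \ge f(x)$ for all $x\in\{\pm1\}^d$. Its dual maximizes $\sum_x \mu(x) f(x)$ over nonnegative measures $\mu$ on $\{\pm1\}^d$ with $\ev{\mu}{\chi_S} = \ev{\calU}{\chi_S}$ for every $\abs{S}\le k$. Taking $S=\emptyset$ forces $\mu$ to be a probability distribution, so the feasible $\mu$ are exactly the $k$-wise independent distributions. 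Both programs are feasible: the constant polynomial $1$ is primal feasible and $\calU$ is dual feasible. Hence strong duality gives
\[
\min_{p_\mathrm{up}}\ev{\calU}{p_\mathrm{up}} \;=\; \max_{\mu\ k\text{-wise ind.}}\ev{\mu}{f}.
\]
Running the same argument on $1-f$ (equivalently, on the maximization problem for $p_\mathrm{down}\le f$) gives the symmetric statement. Therefore, if every $k$-wise independent $\mu$ satisfies $\abs{\ev{\mu}{f}-\ev{\calU}{f}}\le \epsilon/2$, there are $p_\mathrm{down}\le f\le p_\mathrm{up}$ of degree $k$ with $\ev{\calU}{p_\mathrm{up}-p_\mathrm{down}}\le\epsilon$. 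Since the same $k$ works for every $f \in \calC$, this yields $\calL_1(\calC,\calU,\epsilon)\le k$.

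\textbf{Fooling step.} It then suffices to invoke the theorem that $k$-wise independence $\epsilon/2$-fools depth-$t$, size-$s$ circuits for $k = O(\log s)^{O(t)}\log(1/\epsilon)$. Braverman proved this with a worse exponent, $(\log(s/\epsilon))^{O(t^2)}$; Tal and Harsha--Srinivasan sharpened it to the stated form. If I had to reprove it, I would follow Braverman's route:
\begin{itemize}
\item take a Razborov--Smolensky probabilistic polynomial that agrees with $f$ outside an error set $E$ of small mass under both $\calU$ and any $k$-wise independent $\mu$;
\item take the LMN/Tal Fourier-concentration low-degree approximation $\tilde f$ of $f$;
\item combine them into a polynomial $g$ that has small $\calL_2$ error under $\calU$ and vanishes on $\neg E$ wherever $g\ne f$;
\item use $g$ to compare $\ev{\mu}{f}$ with $\ev{\calU}{f}$, since $\ev{\mu}{g}=\ev{\calU}{g}$ by $k$-wise independence.
\end{itemize}

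The main obstacle is this fooling theorem with the tight dependence $O(\log s)^{O(t)}\log(1/\epsilon)$. The duality step is routine, but the fooling bound requires Tal's sharp Fourier tail bounds for $\mathsf{AC}^0$ and the Harsha--Srinivasan improvement of Braverman's error-set argument. In the proof I would cite that result directly rather than rederive it.
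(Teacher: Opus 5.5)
Your proposal is correct and matches the paper, which gives no argument beyond citing Braverman, Tal and Harsha--Srinivasan. The LP-duality step you spell out is the standard (Bazzi) bridge that turns their $k$-wise-independence fooling bounds into the $\calL_1$ sandwiching polynomials claimed here.

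One small attribution fix: Tal's bound is $O(\log(s/\epsilon))^{O(t)}$, and the sharper form $O(\log s)^{O(t)}\log(1/\epsilon)$, with $\log(1/\epsilon)$ split off, is due to Harsha--Srinivasan.
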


\begin{fact}[\cite{gopalan2010fooling}] \label{fact:l1_degree_dt}
    Let any \(\epsilon \in (0, 1)\). Let \(\calC\) be the class of depth-\(t\), size-\(s\) decision trees of halfspaces in \(\real^d\). Then, \( \calL_1(\calC, \calN_d, \epsilon) \leq \tilde O(t^4s^2/\epsilon^2)\) and \( \calL_1(\calC, \calU(\{\pm1\}^d), \epsilon) \leq \tilde O(t^4s^2/\epsilon^2)\).
\end{fact}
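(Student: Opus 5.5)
The natural route is to derive the bound from the bounded-independence fooling theorem of \cite{gopalan2010fooling} and then convert fooling into sandwiching by linear-programming duality, as in Bazzi's observation \cite{bazzi2009polylogarithmic}.

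First, I take from \cite{gopalan2010fooling} the following statement: every depth-\(t\), size-\(s\) decision tree of halfspaces \(f\) is \(\epsilon\)-fooled by \(k\)-wise independent distributions with \(k=\tilde O(t^4 s^2/\epsilon^2)\). On \(\{\pm1\}^d\) this means
\[
\abs{\ev{x\sim\calD'}{f(x)}-\ev{x\sim\calU(\{\pm1\}^d)}{f(x)}}\le \epsilon
\]
for every \(k\)-wise independent \(\calD'\). In the Gaussian case it means the same for every \(\calD'\) whose moments up to degree \(k\) agree with those of \(\calN_d\). One natural way to obtain the tree version is to write \(f\) as a sum over the at most \(s\) leaves of products of at most \(t\) halfspace indicators (or their complements). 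Each such path function is a function of at most \(t\) halfspaces, so it is handled by their result for functions of few halfspaces. A union bound over leaves, with accuracy \(\epsilon/s\) per leaf, should then produce the stated dependence on \(s\) and \(t\).

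Second, on the hypercube I convert fooling into sandwiching. I set up the primal linear program
\[
\max_{\calD'}\ \ev{x\sim\calD'}{f(x)}
\]
over distributions \(\calD'\) on \(\{\pm1\}^d\) whose degree-\(\le k\) moments match those of the uniform distribution. Its dual is
\[
\min\ \ev{x\sim\calU(\{\pm1\}^d)}{p(x)}
\]
over polynomials \(p\) of degree at most \(k\) with \(p\ge f\) pointwise. The domain is finite, the primal is feasible (take the uniform distribution itself), and both programs are finite-dimensional, so strong duality holds. Fooling therefore gives an upper sandwiching polynomial \(\pup\) with \(\ev{}{\pup-f}\le\epsilon\). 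Applying the same argument to \(1-f\) gives \(\pdown\) with \(\ev{}{f-\pdown}\le\epsilon\). Hence \(\ev{}{\pup-\pdown}\le 2\epsilon\), and rescaling \(\epsilon\) by a factor of 2 yields \(\calL_1(\calC,\calU(\{\pm1\}^d),\epsilon)\le \tilde O(t^4s^2/\epsilon^2)\).

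The Gaussian case is the main obstacle, because the duality now lives over infinite-dimensional spaces of measures on \(\real^d\) and strong duality needs justification. I would try two approaches in order.

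\begin{enumerate}
\item Avoid abstract duality altogether. The fooling proof of \cite{gopalan2010fooling}, in the style of Diakonikolas et al., already constructs explicit upper and lower polynomials. For each halfspace one uses one-dimensional sandwiching polynomials of the indicator \(\indicator{w\cdot x+b\ge 0}\) together with tail bounds; these are then combined along the product structure of each root-to-leaf path, with a correction term that keeps the pointwise inequalities valid. I would verify that this construction indeed gives pointwise \(\pdown\le f\le\pup\) with \(\ev{x\sim\calN_d}{\pup-\pdown}\lesssim\epsilon\).
\item Failing that, prove strong duality directly for the moment problem. The cone generated by the moment vectors of \(\calN_d\) up to degree \(k\) has the Gaussian moment vector in its interior, because the Gaussian has full support. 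This interior condition is the standard Slater-type condition that gives zero duality gap for moment problems with a bounded measurable objective such as \(f\).
\end{enumerate}
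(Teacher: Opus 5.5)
Your hypercube half is correct. On $\{\pm1\}^d$, $k$-wise independence is the same as matching all moments of degree at most $k$ with the uniform distribution, the LP is finite and feasible, and Bazzi's duality turns the fooling theorem of Gopalan et al.\ into degree-$k$ sandwichers with total gap $2\epsilon$. The paper gives no argument of its own and simply cites Gopalan et al., so this is the standard justification behind that citation.

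The Gaussian half has a genuine gap, and it sits in your first step.

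\textbf{What the cited theorem actually gives.} The bounded-independence theorem of Gopalan et al.\ is about $k$-wise independent families: distributions on $\real^d$ in which every $k$ coordinates are i.i.d.\ $N(0,1)$. For $\calN_d$ this is not the same as matching all moments of degree at most $k$. Such families do match those moments, but the converse fails. A moment-matching distribution can have discrete coordinate marginals (e.g.\ a product Gauss--Hermite quadrature rule). So fooling $k$-wise independent families is strictly weaker than the statement you attribute to the paper.

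\textbf{Your second approach.} The duality itself is fine: Dirac masses give pointwise domination, and interiority of the Gaussian moment vector rules out the degenerate multiplier. But it needs fooling by \emph{all} degree-$k$ moment-matching distributions as input. For the constraint set the theorem actually handles, the natural dual objects are sums of $k$-juntas, not degree-$k$ polynomials.

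\textbf{Your first approach.} The Gaussian claim therefore rests entirely on this approach, which you leave as something you ``would verify''. That verification is the real content. For a root-to-leaf path, the product-with-correction lower sandwicher
\[
\prod_j p_{\mathrm{up}}^{(j)}-\sum_i \bigl(p_{\mathrm{up}}^{(i)}-p_{\mathrm{down}}^{(i)}\bigr)\prod_{j\neq i}p_{\mathrm{up}}^{(j)}
\]
has a gap containing terms
\[
\ev{x\sim\calN_d}{\bigl(p_{\mathrm{up}}^{(i)}-p_{\mathrm{down}}^{(i)}\bigr)(x)\prod_{j\neq i}p_{\mathrm{up}}^{(j)}(x)}.
\]
These are not controlled by the univariate $\calL_1$ bounds. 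The upper sandwichers are unbounded polynomials, and the projections $w_j\cdot x$ are correlated. You would need H\"older plus hypercontractive moment bounds on the univariate sandwichers, and then to check the resulting degree. In addition, any critical-index step (conditioning on a few heavy coordinates) must be replaced. That step uses exact independence and does not produce polynomials on $\real^d$; rotation invariance of $\calN_d$ is one way around it.

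Put differently, what is missing is a moment-matching fooling statement, or an explicit $\calL_1$ sandwiching statement, for intersections of at most $t$ halfspaces under $\calN_d$ with error $\epsilon/s$. With that in hand, your leaf decomposition plus the second approach would finish the argument.
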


\begin{fact}[\cite{diakonikolas2010ptf}] \label{fact:l1_degree_degree2ptf}
    Let any \(\epsilon \in (0, 1)\). Let \(\calC\) be the class of degree-\(2\) PTFs in \(\real^d\). Then, \( \calL_1(\calC, \calN_d, \epsilon) \leq \tilde O(1/\epsilon^8)\) and \( \calL_1(\calC, \calU(\{\pm1\}^d), \epsilon) \leq \tilde O(1/\epsilon^9)\).
\end{fact}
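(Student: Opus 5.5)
This is a cited fact, and I would prove it by combining the pseudorandomness results of \cite{diakonikolas2010ptf} with the standard LP duality between \emph{fooling by bounded independence} and \emph{$\calL_1$ sandwiching}. The plan is as follows.

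First, I recall the two main theorems of \cite{diakonikolas2010ptf}. With $k=\tilde O(1/\epsilon^9)$, every $k$-wise independent distribution on $\{\pm1\}^d$ $\epsilon$-fools every degree-$2$ PTF $f$, meaning $|\E_{\calD'}[f]-\E_{\calU(\{\pm1\}^d)}[f]|\le\epsilon$. Similarly, with $k=\tilde O(1/\epsilon^8)$, every distribution on $\real^d$ whose moments of degree at most $k$ agree with those of $\calN_d$ (a ``$k$-wise independent Gaussian'') $\epsilon$-fools $f$. Both statements are applied with accuracy $\epsilon/2$, which only changes constants inside the $\tilde O$.

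Second, I convert fooling into sandwiching via the duality of \cite{bazzi2009polylogarithmic}. Fix $f$ and consider the LP that maximizes $\E_{\calD'}[f]$ over probability measures $\calD'$ whose degree-$\le k$ moments match those of $\calD$. Its dual minimizes $\E_{\calD}[p]$ over polynomials $p$ of degree at most $k$ with $p\ge f$ pointwise. On the hypercube the LP is finite-dimensional, so strong duality holds directly. The primal optimum is at most $\E_\calD[f]+\epsilon/2$ by fooling, so there is a $\pup\ge f$ with $\E_\calD[\pup]\le\E_\calD[f]+\epsilon/2$. The same argument applied to $1-f$, which is also a degree-$2$ PTF up to boundary conventions, gives a $\pdown\le f$ with $\E_\calD[f-\pdown]\le\epsilon/2$. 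Adding the two bounds yields $\E_\calD[\pup-\pdown]\le\epsilon$.

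The main obstacle is the Gaussian case, where the LP is over measures on $\real^d$ and strong duality is not automatic. I would try two routes, in this order.
\begin{enumerate}
\item Use a conic duality theorem for moment problems. Slater's condition holds because the moment vector of $\calN_d$ lies in the interior of the moment cone, since $\calN_d$ has full support. This gives zero duality gap, and attainment of the dual optimum follows from compactness of the feasible set of degree-$k$ polynomials with bounded $\E_{\calN_d}[p]$ satisfying $p\ge f$.
\item Alternatively, discretize: replace $\calN_d$ by a fine finite-support distribution whose low-degree moments are close to Gaussian, apply the finite duality there, and pass to a limit. This route requires controlling the moment mismatch robustly, which should be possible because the fooling theorems of \cite{diakonikolas2010ptf} tolerate approximate moment matching.
\end{enumerate}
Either route yields the claimed bound on $\calL_1(\calC,\calN_d,\epsilon)$.
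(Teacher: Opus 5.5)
\textbf{Gap in the Gaussian case.} The hypercube half of your proposal is sound. On $\{\pm1\}^d$, $k$-wise independence is exactly the same as matching all moments of degree at most $k$, so Bazzi's finite LP duality \cite{bazzi2009polylogarithmic} turns the $\tilde O(\epsilon^{-9})$-wise fooling theorem of \cite{diakonikolas2010ptf} into $\calL_1$ sandwiching polynomials. For $\pdown$ you need no boundary convention, because two-sided fooling of $f$ is the same as two-sided fooling of $1-f$. The paper itself only cites \cite{diakonikolas2010ptf}, so this is a faithful reconstruction of the standard derivation.

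The Gaussian half has a genuine gap, and it is not the duality. The Gaussian theorem of \cite{diakonikolas2010ptf} is stated for \emph{$k$-wise independent standard normals}: random vectors every $k$ coordinates of which are distributed exactly as $\calN_k$. You identify these with arbitrary probability measures whose moments of degree at most $k$ agree with those of $\calN_d$. The latter family is much larger. For example, the product of one-dimensional Gauss--Hermite rules with $\lceil (k+1)/2\rceil$ nodes matches every moment of degree at most $k$. It is therefore feasible for your primal LP, yet its one-dimensional marginals are discrete, so it is not a $k$-wise independent Gaussian. Hence the step ``the primal optimum is at most $\E_{\calD}[f]+\epsilon/2$ by fooling'' has no support. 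If you dualize the hypothesis you actually have, namely the marginal constraints $\mu_S=\calN_k$ for all $\abs{S}=k$, the upper certificate is a sum of $k$-juntas $\sum_{\abs{S}=k} g_S(x_S)\ge f$, not a polynomial.

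Your Slater/Isii argument for zero duality gap on $\real^d$ (with dual attainment by compactness) is essentially fine, but it addresses the wrong obstacle. Route 2's assertion that the fooling theorem ``tolerates approximate moment matching'' is the same gap in a stronger form. Passing from $k$-independent Gaussians to (approximately) moment-matching measures is exactly what \cite{slot2024testably} had to prove on top of Kane's theorem \cite{kanefoolingpfs}. The paper cites both for the degree-$k$ Gaussian bound, presumably for this reason.

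To complete your argument, you would have to go inside the FT-mollification proof of \cite{diakonikolas2010ptf}. You need to verify that every step uses only equality of the degree-$\le k$ moments of $x$, and never exact Gaussian marginals or genuine coordinatewise independence. The relevant steps are the Taylor-remainder bounds via even moments of the quadratic and linear features, and the anticoncentration step via mollified interval indicators. Once the fooling statement holds for every probability measure on $\real^d$ with those moments, your duality step gives $\calL_1(\calC,\calN_d,\epsilon)\le\tilde O(\epsilon^{-8})$.
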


\begin{fact}[\cite{kanefoolingpfs,slot2024testably}] \label{fact:l1_degree_degreekptf}
    Let any \(\epsilon \in (0, 1)\). Let \(\calC\) be the class of degree-\(k\) PTFs in \(\real^d\). Then, \( \calL_1(\calC, \calN_d, \epsilon) \leq O_k(\epsilon^{-4k \cdot 7^k})\).\footnote{Here \(O_k(\cdot)\) is hiding a multiplicative factor that scales as an arbitrary function of \(k\).}
\end{fact}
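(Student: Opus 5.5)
The plan is to derive this fact from the pseudorandomness result of \cite{kanefoolingpfs}, combined with a linear-programming duality argument that converts ``fooling'' into the existence of sandwiching polynomials. Such a duality argument was carried out for the Gaussian setting in \cite{slot2024testably}.

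\textbf{Step 1 (fooling).} Fix a degree-\(k\) PTF \(f\) (viewed as \(\{0,1\}\)-valued) and set \(\ell = O_k(\epsilon^{-4k\cdot 7^k})\). The goal is to show that every distribution \(\mu\) on \(\real^d\) whose moments of degree at most \(\ell\) agree with those of \(\calN_d\) satisfies \(\abs{\ev{x\sim\mu}{f(x)} - \ev{x\sim\calN_d}{f(x)}} \leq \epsilon\). Kane's theorem gives this directly only for \(\ell\)-wise independent families of Gaussians. His proof, however, uses only low-degree moments: it replaces \(f\) by a smooth mollified version and controls that version's behavior on the sign boundary via anticoncentration, where the required anticoncentration statements are themselves certified by low-degree polynomial inequalities. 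I would follow \cite{slot2024testably}, which verifies that the argument goes through for arbitrary moment-matching \(\mu\), possibly with a slightly worse but still \(O_k(\epsilon^{-4k\cdot 7^k})\) degree.

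\textbf{Step 2 (duality).} Consider the primal problem of minimizing \(\ev{\calN_d}{\pup - f}\) over \(\pup \in \calP_{d,\ell}\) subject to \(\pup(x) \geq f(x)\) for all \(x\). Its dual is the problem of maximizing \(\ev{\mu}{f} - \ev{\calN_d}{f}\) over nonnegative measures \(\mu\) whose moments of degree at most \(\ell\) match those of \(\calN_d\). By Step 1 the dual value is at most \(\epsilon\). If strong duality holds with no gap, there is a \(\pup\) with \(\pup \geq f\) and \(\ev{\calN_d}{\pup - f} \leq \epsilon\). Applying the same argument to \(1 - f\), which is again a degree-\(k\) PTF, gives \(\pdown\). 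Rescaling \(\epsilon\) by a factor of \(2\) then yields \(\ev{\calN_d}{\pup - \pdown} \leq \epsilon\).

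\textbf{Main obstacle.} The hard step is justifying strong duality in this infinite-dimensional, non-compact setting. The constraint \(\pup \geq f\) must hold on all of \(\real^d\), and \(f\) is discontinuous. I would handle this with a conic duality theorem over the moment cone. The Gaussian moment vector lies in the interior of the cone of moment vectors of degree \(\leq \ell\), because \(\calN_d\) has full support; this gives a Slater-type condition. The discontinuity of \(f\) can be handled by replacing \(f\) with its upper semicontinuous envelope, which differs from \(f\) only on a Gaussian-null set and is dominated by the same polynomials. An alternative is to truncate to a large ball, apply compact duality there, and use Gaussian tails to absorb the contribution outside the ball into \(\epsilon\). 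The infimum is attained because the feasible polynomials with bounded objective form a bounded set in the finite-dimensional coefficient space, by hypercontractivity of \(\calN_d\).
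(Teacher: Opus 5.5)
The paper gives no proof of this fact and simply imports it from Kane and Slot--Tiegel--Wiedmer. Your route is the standard way those citations yield $\calL_1$ sandwiching, and it is sound: moment-matching fooling from those works, then moment-problem duality with the Gaussian moment vector in the interior of the degree-$\ell$ moment cone, which gives zero duality gap and an attained optimal $\pup$, applied to both $f$ and $1-f$.

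One correction: drop the truncation-to-a-ball alternative. Duality on a ball only certifies $\pup \ge f$ on that ball, and Gaussian tails cannot repair a pointwise violation outside it. The interior-point argument already works for any bounded $f$ using finitely supported dual measures, so you need neither truncation nor the semicontinuous envelope.
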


\begin{fact}[\cite{KKM13}] \label{fact:l1_degree_function_of_ltfs}
    Let any \(\epsilon \in (0, 1)\). Let \(\calC\) be the class of arbitrary functions of \(k\) halfspaces in \(\real^d\) and let \(\calD\) be any log-concave distribution on \(\real^d\). Then, \( \calL_1(\calC, \calD, \epsilon) \leq \exp((\log(\log(k)/\epsilon))^{O(k)}/\epsilon^4)\).
\end{fact}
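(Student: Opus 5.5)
This fact is quoted from \cite{KKM13}. If I had to reprove it, I would reduce to the $k$-dimensional subspace spanned by the normals and then build sandwiching polynomials by smoothing, uniform approximation on a box, and a tail-dominating correction.

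\textbf{Step 1: reduction to $k$ dimensions.} Write $f(x)=g(w_1\cdot x+b_1,\dots,w_k\cdot x+b_k)$ for an arbitrary Boolean $g\colon\{0,1\}^k\to\{0,1\}$. Let $W$ be the span of the $w_i$. By Pr\'ekopa--Leindler, the projection of $\calD$ onto $W$ is again log-concave, of dimension at most $k$. So it suffices to sandwich the function $F(z)=f$ on $W\cong\real^{k}$ with respect to a $k$-dimensional log-concave law $\calD_W$. Polynomials in $z=Px$ are polynomials in $x$ of the same degree, and the sandwiching inequalities are pointwise, so they transfer back. After an affine change of variables I would assume $\calD_W$ is isotropic up to $\mathrm{poly}(k)$ factors.

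\textbf{Step 2: smooth upper and lower envelopes.} Let $\Gamma$ be the union of the $k$ hyperplanes $\{w_i\cdot x+b_i=0\}$ inside $W$, and let $\Gamma_\rho$ be its $\rho$-neighbourhood. Define $u=\max(F,\indicator{z\in\Gamma_\rho})$ and $v=\min(F,\indicator{z\notin\Gamma_\rho})$, so that $v\le F\le u$. Both envelopes differ from $F$ only on $\Gamma_\rho$. By one-dimensional anti-concentration of isotropic log-concave marginals, $\pr{}{\Gamma_\rho}\lesssim k\rho$. Mollifying at scale $\rho/2$ (with the slab widened appropriately) gives smooth $\tilde u\ge u$ and $\tilde v\le v$ with values in $[0,1]$, derivatives of order $r$ bounded by $(C/\rho)^r$, and $\E[\tilde u-\tilde v]\lesssim k\rho$. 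I choose $\rho\approx\epsilon/k$.

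\textbf{Step 3: polynomial approximation on a box.} On the cube $[-M,M]^k$ with $M\approx\log(k/\epsilon)$, apply a multivariate Jackson-type theorem to $\tilde u$ (and similarly $\tilde v$). This gives a polynomial $q$ of degree $m\approx (M/\rho)\cdot\mathrm{poly}(1/\epsilon)$ with $\abs{q-\tilde u}\le\epsilon$ on the cube. Set $q_{\mathrm{up}}=q+\epsilon$, which dominates $F$ on the cube.

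\textbf{Step 4: the tail, which I expect to be the main obstacle.} Outside the cube, $q$ can be as large as $(\abs{z}/M)^{m}$ times its sup-norm bound, which is at most $\exp(O(m))$. I would add the nonnegative term $T(z)=\sum_i (z_i/M)^{2m'}$. Taking $m'\ge m$ makes $q+\epsilon+T\ge 1\ge F$ everywhere outside the cube. The price is $\E[T]$, which must be at most $\epsilon$. Log-concave moments satisfy $\E[\abs{z_i}^{2m'}]\le (Cm')^{2m'}$, so $\E[T]\le k(Cm'/M)^{2m'}$, which requires $M\gtrsim m'$. But $m$ grows with $M$, giving a circular dependence. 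I would first try to break it by replacing crude uniform approximation with approximation of $\tilde u$ in a weighted norm (weight $e^{-\abs{z}}$), using Chebyshev-type bounds on polynomial growth outside $[-M,M]$. Balancing these then forces the degree to be exponential in $\mathrm{poly}(1/\epsilon)\cdot(\log(\log k/\epsilon))^{O(k)}$, which is the shape of the stated bound. Assuming this balance closes, the symmetric construction $q_{\mathrm{down}}=q'-\epsilon-T$ for $\tilde v$ and the triangle inequality give $\E[q_{\mathrm{up}}-q_{\mathrm{down}}]\lesssim k\rho+O(\epsilon)$. Rescaling $\epsilon$ then completes the argument.
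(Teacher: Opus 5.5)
The paper does not prove this statement; it imports it from \cite{KKM13}. Your sketch therefore has to stand on its own, and it does not: it stops at the step that carries the content of the theorem. Steps 1--3 are fine in outline. Marginals of log-concave laws are log-concave, pointwise inequalities survive the projection, and mollifying over a widened slab gives smooth $\tilde v\le F\le \tilde u$ with $\E[\tilde u-\tilde v]\lesssim k\rho$.

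Step 4, however, ends with ``assuming this balance closes,'' and as set up it cannot close. With exponential tails, $\E[\abs{z_i}^{2m'}]$ is of order $(Cm')^{2m'}$. So a tail term of degree $2m'\ge m$ has small expectation only if $M\gtrsim m'$. But uniform approximation on $[-M,M]^k$ of a function whose transition has width $\rho$ needs degree $m\gtrsim M/\rho$. These two requirements are incompatible for every $\rho<1$. This is not a bookkeeping issue. It is the genuine difficulty created by exponential tails, and it is why the degree in the statement is exponential in $1/\epsilon$.

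There is also a smaller, fixable point. Even granting the box, $m'\ge m$ does not make $q+\epsilon+T\ge 1$ just outside the cube. A degree-$m$ polynomial bounded by $1$ on $[-M,M]$ can reach about $1+m^2\delta$ in absolute value at $M(1+\delta)$, whereas $(1+\delta)^{2m'}\approx 1+2m'\delta$. You would need to approximate on a larger box, or take $m'\gtrsim m^2$.

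Your proposed remedy does not supply the missing idea. Approximating $\tilde u$ in a weighted $L_1$ or $L_2$ norm with weight $e^{-\abs{z}}$ controls only an average. It gives no pointwise inequality $p_{\mathrm{up}}\ge F$. To restore one-sided domination you must again add a nonnegative correction where the weighted control is weak, and that reintroduces the same degree-versus-tail tension.

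What is needed is one of two things:
\begin{itemize}
\item A one-sided construction that resolves scale $\rho$ near the hyperplanes while staying controlled out to radius comparable to the degree. In one dimension, the optimal sandwiching gap at a threshold is governed by the Christoffel function of $\calD$, via the Markov--Stieltjes inequalities, and bounding it for log-concave weights is already nontrivial.
\item An argument on the dual side: show that every distribution matching the first $\ell$ moments of $\calD$ fools the class, then pass to sandwiching polynomials by LP duality.
\end{itemize}
Finally, nothing in your sketch produces the $\epsilon^{-4}$ in the exponent or the factor $(\log(\log(k)/\epsilon))^{O(k)}$. The final degree is matched to the statement rather than derived.
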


\section{Iterative Chow Filtering} \label{appendix:icf}

In this section, we provide the proof of our main ICF theorem. In fact, here we state \Cref{alg:icf_} and \Cref{thm:icf_}, which are the more general versions of \Cref{alg:icf} and \Cref{thm:icf} respectively.

\begin{algorithm}[H]
\caption{Iterative Chow Filtering}\label{alg:icf_}
\begin{algorithmic}[1]
\Require Family of classifiers $\calF \subset \{\real^d \to \{0, 1\}\}$, sample \(S,S'\) from \(\calD,\calD'\) respectively, maximum polynomial degree \(\ell\), multiplicative slack \(R>1\), polynomial variance bound \(\beta > 0\), additive error \(\epsilon \in (0, 1)\).
\Ensure A succinct description of a selector function \(s \colon \real^d \to \{0, 1\}\).
\State Let \(B \gets 2\sqrt{\frac{2R(d+1)^\ell \beta}{ \epsilon}} \); \(\Delta \gets \frac{\epsilon^2}{B(2R+\epsilon)}\); \(t \gets 0\).
\State For \(f \in \calF\), define \(\calP(f) = \{p \in \calP_{d, \ell} : \ev{x \sim \calU(S)}{p(x)
^2} \leq 2\beta , \ev{x \sim \calU(S)}{f(x)\abs{p(x)} } \leq \frac{2\epsilon}{2R+\epsilon} \}\).
\State Let \(S_0 \gets \{x \in S' : \max_{f \in \calF}\max_{p \in \calP(f)} f(x)\abs{p(x)} \leq B \}\). \label{line:icf_:S0}

\For{\(i \in [ \abs{S_0} + 1 ] \)}
\State Let \((f_i^\star, p_i^\star)\) be a solution and \(\mu_i^\star\) the optimal value of the following optimization problem: \label{line:icf_:obj}
\[
\max_{f \in \calF} \max_{p \in \calP(f)} \dfrac{1}{\abs{S'}} \sum_{x \in S_{i-1}}f(x)p(x)
\]
\If{\(\mu_i^\star \leq \epsilon\)} \label{line:icf_:terminate}
\State Set \(t \gets i - 1\) and \textbf{break} the loop.
\Else
\State Let \(\tau_i^\star\) be the minimum nonnegative real number such that
\[\dfrac{\abs{S_{i-1}}}{\abs{S'}} \pr{x \sim \calU(S_{i-1})}{f_i^\star(x)\abs{p_i^\star(x)} > \tau_i^\star} \geq R\pr{x \sim \calU(S)}{f_i^\star(x)\abs{p_i^\star(x)} > \tau_i^\star} + \Delta \, . \] \label{line:icf_:tau}
\State \(S_i \gets S_{i-1} \setminus \{x \in S_{i-1} : f_i^\star(x)\abs{p_i^\star(x)} > \tau_i^\star \}\). \label{line:icf_:Si} \Comment{Filtering step}
\EndIf
\EndFor
\State \Return \(s \colon \real^d \to \{0, 1\}\) such that for all \(x \in \real^d\), it holds
\[s(x) = \INDICATOR{\max_{f \in \calF}\max_{p \in \calP(f)} f(x)\abs{p(x)} \leq B} \prod_{i=1}^{t}\indicator{f_i^\star(x) \abs{p_i^\star(x)} \leq \tau_i^\star} \, .
\]
\end{algorithmic}
\end{algorithm}

The difference of \Cref{thm:icf_} from \Cref{thm:icf} is that the guarantee of \Cref{thm:icf_} holds for any finite collection of classifiers \(\calF \subset \{\real^d \to \{0, 1\}\}\) instead of \(\calF = \{\hat f, \bar{\hat f}\}\). Moreover, \Cref{thm:icf_} provides an additional upper bound on the rejection rate of the selector, which is crucial to obtaining our tolerant TDS learning result (see \Cref{appendix:tolerant_tds}). Here, we will prove \Cref{thm:icf_} (which \Cref{thm:icf} follows trivially from).

We first prove the following proposition, which ensures that all steps of \Cref{alg:icf_} are well-defined.

\begin{proposition}
    In the setting of \Cref{alg:icf_}, for any iteration \(i\) such that \(\mu_i^\star > \epsilon\), there exists some \(\tau \in [0, B]\) that satisfies the condition of \cref{line:icf_:tau}.
\end{proposition}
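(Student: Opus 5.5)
We argue by contradiction, using a layer-cake (tail-integral) decomposition of the expectation over the current accepted test set. Fix an iteration \(i\) with \(\mu_i^\star > \epsilon\), and write \(f = f_i^\star\), \(p = p_i^\star \in \calP(f)\). Define the test tail function
\[
T(\tau) = \frac{\abs{S_{i-1}}}{\abs{S'}}\pr{x \sim \calU(S_{i-1})}{f(x)\abs{p(x)} > \tau}
\]
and the reference tail function \(Q(\tau) = \pr{x \sim \calU(S)}{f(x)\abs{p(x)} > \tau}\). Suppose, for contradiction, that \(T(\tau) < R\,Q(\tau) + \Delta\) for every \(\tau \in [0,B]\).

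First, we use boundedness. Since \(S_{i-1} \subseteq S_0\) and \(p \in \calP(f)\), \cref{line:icf_:S0} guarantees \(f(x)\abs{p(x)} \le B\) for every \(x \in S_{i-1}\). Hence \(T(\tau) = 0\) for \(\tau \ge B\), and the layer-cake formula truncates at \(B\):
\[
\mu_i^\star = \frac{1}{\abs{S'}}\sum_{x\in S_{i-1}} f(x)p(x) \le \frac{1}{\abs{S'}}\sum_{x\in S_{i-1}} f(x)\abs{p(x)} = \int_0^B T(\tau)\,d\tau .
\]

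Next, we integrate the contradiction hypothesis over \([0,B]\):
\[
\int_0^B T(\tau)\,d\tau \le R\int_0^B Q(\tau)\,d\tau + B\Delta \le R\,\ev{x\sim\calU(S)}{f(x)\abs{p(x)}} + B\Delta .
\]
By the definition of \(\calP(f)\), the first term is at most \(R\cdot\frac{2\epsilon}{2R+\epsilon}\). By the choice of \(\Delta\), the second term equals \(\frac{\epsilon^2}{2R+\epsilon}\). Their sum is
\[
\frac{2R\epsilon+\epsilon^2}{2R+\epsilon} = \epsilon,
\]
so \(\mu_i^\star \le \epsilon\), contradicting \(\mu_i^\star > \epsilon\). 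Therefore some \(\tau \in [0,B]\) satisfies the condition of \cref{line:icf_:tau}.

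It remains to check that the minimum nonnegative such \(\tau\) exists, so that \(\tau_i^\star\) is well defined. Both \(T\) and \(Q\) are step functions, right-continuous and nonincreasing, that change only at the finitely many values \(f(x)\abs{p(x)}\) for \(x \in S \cup S'\). The set of \(\tau \ge 0\) satisfying the inequality is therefore a finite union of half-open intervals \([a,b)\), which attains its infimum. The only delicate point in the whole argument is getting the layer-cake bound right: use strict tails ``\(>\tau\)'' consistently, and use that \(f(x)p(x) \le f(x)\abs{p(x)}\). The constants \(B\) and \(\Delta\) were chosen exactly so that the two error terms sum to \(\epsilon\).
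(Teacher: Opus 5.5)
Your proof is correct and is essentially the paper's argument. You assume the inequality fails for every $\tau \in [0,B]$ and integrate over $[0,B]$. The tail-sum formula is truncated at $B$ by the boundedness filter defining $S_0$, and combined with $f p \le f\abs{p}$, the constraint in $\calP(f)$, and the choice of $B$ and $\Delta$, the two terms sum to exactly $\epsilon$. Your closing observation, that the feasible thresholds form a finite union of left-closed intervals so the minimum defining $\tau_i^\star$ is attained, is a correct detail that the paper leaves implicit.
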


\begin{proof}
    Let any iteration \(i\) such that \(\mu_i^\star > \epsilon\). Suppose, for contradiction, that for all \(\tau \in [0, B]\), it holds
    \[
    \dfrac{\abs{S_{i-1}}}{\abs{S'}} \pr{x \sim \calU(S_{i-1})}{f_i^\star(x)\abs{p_i^\star(x)} > \tau} < R\pr{x \sim \calU(S)}{f_i^\star(x)\abs{p_i^\star(x)} > \tau} + \Delta \, .
    \]
    Now, we may integrate over \(\tau \in [0, B]\) both sides of the above inequality, since the corresponding functions of \(\tau\) have a finite number of discontinuities.
    \[
    \dfrac{\abs{S_{i-1}}}{\abs{S'}} \int_{\tau=0}^B \pr{x \sim \calU(S_{i-1})}{f_i^\star(x)\abs{p_i^\star(x)} > \tau} \, \mathrm d \tau < R \int_{\tau=0}^B \pr{x \sim \calU(S)}{f_i^\star(x)\abs{p_i^\star(x)} > \tau} \, \mathrm d \tau + \Delta B \, .
    \]
    Using the tail-sum formula for nonnegative random variables, we get that
    \begin{align*}
    \ev{x \sim \calU(S_{i-1})}{f_i^\star(x) \abs{p_i^\star(x)}} &= \int_{\tau=0}^\infty \pr{x \sim \calU(S_{i-1})}{f_i^\star(x)\abs{p_i^\star(x)} > \tau} \, \mathrm d \tau \\
    &= \int_{\tau=0}^B \pr{x \sim \calU(S_{i-1})}{f_i^\star(x)\abs{p_i^\star(x)} > \tau} \, \mathrm d \tau \, ,    
    \end{align*}
    where the second equality follows from the fact that \(p_i^\star \in \calP(f_i^\star)\), \(S_{i-1} \subseteq S_0\) and \(f(x)\abs{p(x)} \leq B\) for any \(f\in \calF\), \(p \in \calP(f)\) and \(x \in S_0\) due to \cref{line:icf_:S0}. Similarly, we have
    \begin{align*}
    \ev{x \sim \calU(S)}{f_i^\star(x) \abs{p_i^\star(x)}} &= \int_{\tau=0}^\infty \pr{x \sim \calU(S)}{f_i^\star(x)\abs{p_i^\star(x)} > \tau} \, \mathrm d \tau \\
    &\geq \int_{\tau=0}^B \pr{x \sim \calU(S)}{f_i^\star(x)\abs{p_i^\star(x)} > \tau} \, \mathrm d \tau \, ,   
    \end{align*}
    where the inequality follows from the fact that the integrated function is nonnegative. Putting everything together, we have
    \begin{align*}
    \mu_i^\star &= \dfrac{\abs{S_{i-1}}}{\abs{S'}} \ev{x \sim \calU(S_{i-1})}{f_i^\star(x)p_i^\star(x)} \\
        &\leq \dfrac{\abs{S_{i-1}}}{\abs{S'}} \ev{x \sim \calU(S_{i-1})}{f_i^\star(x)\abs{p_i^\star(x)}} \\
        &< R\ev{x \sim \calU(S)}{f_i^\star(x) \abs{p_i^\star(x)}} + \Delta B \\
        &\leq R \cdot \dfrac{2\epsilon}{2R+\epsilon} + B \cdot \dfrac{\epsilon^2}{B(2R+\epsilon)} \tag{since \(p_i^\star \in \calP(f_i^\star)\) } \\
        &= \epsilon
    \end{align*}
    Hence, we reached a contradiction since we showed that \(\mu_i^\star \leq \epsilon\). This concludes the proof.
\end{proof}

We now prove the following proposition, which bounds the number of iterations of \Cref{alg:icf_} independently of the number of input examples. This is crucial in order to show that the selector returned by \Cref{alg:icf_} belongs to a class of bounded VC dimension, so that uniform convergence arguments can be applied to show generalization.

\begin{proposition} \label{prop:t}
    In the setting of \Cref{alg:icf_}, it holds \(t \leq \lfloor 1 / \Delta \rfloor\).
\end{proposition}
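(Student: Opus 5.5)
We bound $t$ with a potential argument on the normalized mass $\abs{S_i}/\abs{S'}$ of the surviving test sample. It starts at $\abs{S_0}/\abs{S'} \le 1$ and stays nonnegative. The plan is to show that every filtering iteration $i \le t$ lowers this quantity by at least $\Delta$. Then $t\Delta \le \abs{S_0}/\abs{S'} \le 1$, so $t \le 1/\Delta$, and since $t$ is an integer, $t \le \lfloor 1/\Delta \rfloor$.

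\textbf{Step 1: what happens in one iteration.} Fix an iteration $i \le t$. In such an iteration the loop did not break, so $\mu_i^\star > \epsilon$. By the preceding proposition, some $\tau \in [0,B]$ satisfies the condition of \cref{line:icf_:tau}, and the algorithm picks $\tau_i^\star$ as the minimum such nonnegative $\tau$.

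We should check that this minimum is attained. Both sides of the condition are right-continuous, nonincreasing step functions of $\tau$, since they are empirical tail probabilities over finite multisets. The set of valid $\tau$ is therefore a finite union of intervals of the form $[a,b)$, and the infimum belongs to it.

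At $\tau = \tau_i^\star$ the filtering step removes exactly the points of $S_{i-1}$ with $f_i^\star(x)\abs{p_i^\star(x)} > \tau_i^\star$. Hence
\[
\frac{\abs{S_{i-1}} - \abs{S_i}}{\abs{S'}} = \frac{\abs{S_{i-1}}}{\abs{S'}} \pr{x \sim \calU(S_{i-1})}{f_i^\star(x)\abs{p_i^\star(x)} > \tau_i^\star} \ge R\,\pr{x \sim \calU(S)}{f_i^\star(x)\abs{p_i^\star(x)} > \tau_i^\star} + \Delta \ge \Delta .
\]
The last inequality holds because the probability term is nonnegative.

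\textbf{Step 2: telescoping.} Summing the per-iteration bound over $i = 1, \dots, t$ gives
\[
t\Delta \le \frac{\abs{S_0} - \abs{S_t}}{\abs{S'}} \le \frac{\abs{S_0}}{\abs{S'}} \le 1 .
\]
This uses $\abs{S_t} \ge 0$ and $S_0 \subseteq S'$.

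\textbf{Edge cases.} The loop runs over $i \in [\abs{S_0}+1]$, so we must make sure $t$ is well defined, i.e.\ that the break always fires. Each non-terminating iteration removes at least $\Delta\abs{S'} > 0$ points, hence at least one point. So after at most $\abs{S_0}$ non-terminating iterations we reach $S_{i-1} = \emptyset$. At that point the objective is $\mu_i^\star = 0 \le \epsilon$ and the loop breaks. This matches the iteration cap $\abs{S_0}+1$.

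The only delicate step is the attainment of the minimum $\tau_i^\star$; right-continuity of the empirical tail functions settles it, and everything else is bookkeeping.
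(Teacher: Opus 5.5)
Your proof is correct and is essentially the paper's argument: each non-terminating iteration discards at least $\Delta\abs{S'}$ not-yet-discarded points of $S_0$, so $t\Delta\abs{S'} \le \abs{S_0} \le \abs{S'}$, and integrality of $t$ gives $t \le \lfloor 1/\Delta \rfloor$. Your two extra checks are also correct. Right-continuity of the empirical tails guarantees that the minimum $\tau_i^\star$ is attained, and the loop always breaks because an empty $S_{i-1}$ forces $\mu_i^\star = 0$. The paper leaves both points implicit.
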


\begin{proof}
    If \(t \neq 0\), then for any \(i \in [t]\), it holds that
    \[\dfrac{\abs{S_{i-1}}}{\abs{S'}} \pr{x \sim \calU(S_{i-1})}{f_i^\star(x)\abs{p_i^\star(x)} > \tau_i^\star} \geq R\pr{x \sim \calU(S)}{f_i^\star(x)\abs{p_i^\star(x)} > \tau_i^\star} + \Delta \, , \]
    which implies that \(\abs{\{x \in S_{i-1} : f_i^\star(x)\abs{p_i^\star(x)} > \tau_i^\star \}} \geq \Delta \abs{S'}\), that is, in each iteration, at least \(\Delta \abs{S'}\) elements of \(S_0\) (that had not been discarded in any previous iteration) are discarded. Thus, we have \(\Delta \abs{S'}t \leq \abs{S_0} \leq \abs{S'}\), which implies that \(t \leq \lfloor 1 / \Delta \rfloor\).
\end{proof}

\begin{theorem}[Iterative Chow Filtering] \label{thm:icf_}
        Let any \(\ell, k \in \natural^\ast\), \(A \geq 1\), \(R > 1\), \(\beta > 0\), \(\epsilon, \delta \in (0,1)\) and let \(\calF\) be a class of \(k\) functions \(f_1, \dots, f_k \colon \real^d \to \{0, 1\}\). Let \(\calD\) be any \(A\)-hypercontractive distribution on \(\real^d\) and let \(\calD'\) be any distribution on \(\real^d\). Let \(S\) and \(S'\) be multisets of independent examples drawn from \(\calD\) and \(\calD'\) respectively. If \(\abs{S}, \abs{S'} \geq \mathrm{poly}(R, \beta, A^\ell, (d+1)^\ell, 1/\epsilon, \log(k/\delta)^\ell ) \), then \Cref{alg:icf_}, given \(\calF, S, S', \ell, R, \beta, \epsilon\) as input, runs in time \(k \cdot \mathrm{poly}(\abs{S}, \abs{S'}, (d+1)^\ell)\) and outputs a succinct \(k \sqrt{\beta R^3/\epsilon^5} \mathrm{poly} ( \abs{S}, (d+1)^\ell)\)-time-computable description of a function \(s \colon \real^d \to \{0, 1\}\) that satisfies the following guarantees.
    \begin{enumerate}
        \item \label{item:selector1_} Let any \(p_1, \dots, p_k \in \calP_{d, \ell}\) such that \(\max_{i \in [k]}\ev{x \sim \calD}{p_i(x)^2} \leq \beta\). Then, with probability at least \(1 - \delta\) over \(S\) and \(S'\), we have that for any \(i \in [k]\), it holds
        \[
        \ev{x \sim \calD'}{f_i(x)p_i(x)s(x)} \leq (R+\epsilon)(1+\epsilon)\ev{x \sim \calD}{f_i(x)\abs{p_i(x)}} + 2\epsilon \, .
        \] 
        \item \label{item:selector2_} With probability at least \(1 - \delta\) over \(S\) and \(S'\), it holds
        \[
        \pr{x \sim \calD}{s(x)=0} \leq \min \left\{ \dfrac{1+\epsilon}{R}, \dfrac{d_\mathrm{TV}( \calD, \calD')+\epsilon}{R-1}  \right\} \, .
        \]
    \end{enumerate}
\end{theorem}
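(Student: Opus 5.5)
The plan is to work entirely on the empirical side and then transfer to $\calD,\calD'$ through three kinds of uniform convergence. First, using $A$-hypercontractivity of $\calD$, I will show that with probability $1-\delta/3$ over $S$, the empirical second-moment matrix of the degree-$\le\ell$ monomial features is spectrally close to the true one. This gives, simultaneously for all $p\in\calP_{d,\ell}$, $\ev{\calU(S)}{p^2}\le(1+\epsilon)\ev{\calD}{p^2}$. I will also show that $\ev{\calU(S)}{f\abs{p}}$ is within a $(1+\epsilon)$ factor (plus a negligible additive error) of $\ev{\calD}{f\abs p}$, uniformly over $f\in\calF$ and $p$ in a bounded-variance set. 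The second statement follows by writing $f\abs{p}$ as an integral over thresholds $\tau$ of the indicators $\indicator{f(x)\abs{p(x)}>\tau}$. These indicators form a class of VC dimension $O((d+1)^\ell\log k)$ by \Cref{fact:vc_ptf} and \Cref{fact:vc_union}, so \Cref{fact:uc0} applies. The heavy tail beyond a truncation level is controlled by the hypercontractive moment bound; this is where the $\log(k/\delta)^\ell$ and $A^\ell$ factors enter.

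For guarantee~\ref{item:selector1_}, fix $p_i$ with $\ev{\calD}{p_i^2}\le\beta$ and set $\theta=\epsilon/(R+\epsilon)$. If $\ev{\calD}{f_i\abs{p_i}}\le\theta$, put $p'=p_i$; otherwise put $p'=c\,p_i$ with $c=\theta/\ev{\calD}{f_i\abs{p_i}}<1$. By the first step, $p'\in\calP(f_i)$ with high probability, since the thresholds $2\beta$ and $2\epsilon/(2R+\epsilon)$ leave room for the $(1+\epsilon)$ slack. Termination of the loop then gives $\frac{1}{\abs{S'}}\sum_{x\in S_t}f_i(x)p'(x)\le\epsilon$.

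I then need to pass from this empirical sum to $\ev{\calD'}{f_i p' s}$. To handle the dependence of $s$ on $S'$, I will condition on $S$. Given $S$, the first indicator in $s$ is a fixed set, and moreover $\abs{f_ip'}\le B$ wherever $s=1$. The remaining factors of $s$ form an intersection of at most $t\le\lfloor1/\Delta\rfloor$ polynomial-threshold-type sets (\Cref{prop:t}), with VC dimension bounded via \Cref{fact:vc_and}. Decomposing $f_ip's$ into its positive and negative parts, each bounded by $B$, and integrating over thresholds, \Cref{fact:uc0} on $S'$ gives $\ev{\calD'}{f_ip's}\le\epsilon+O(\epsilon)$ once $\abs{S'}$ is large enough relative to $B^2/\epsilon^2$ and $1/\Delta$. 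Dividing by $c$ yields $\ev{\calD'}{f_ip_is}\le(R+\epsilon)(1+\epsilon)\ev{\calD}{f_i\abs{p_i}}+2\epsilon$ in the scaled case; the unscaled case directly gives the additive $2\epsilon$.

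For guarantee~\ref{item:selector2_}, I split the rejections into the boundedness filter and the iterative filters. For the boundedness filter, I use the Christoffel-function bound $\sup\{p(x)^2:\ev{\calU(S)}{p^2}\le2\beta\}=2\beta\sum_j\phi_j(x)^2$, where $\{\phi_j\}$ is an orthonormal basis for the empirical inner product. Taking expectations (and transferring to $\calD$ via step one) gives at most $2\beta(1+\epsilon)(d+1)^\ell$, so Markov's inequality gives rejected $\calD$-mass at most $\approx\epsilon/(4R)$ by the choice of $B$. For each filtering step $i$, the rule for $\tau_i^\star$ gives the following: the empirical $\calD$-mass of the newly removed region is at most $\frac1R$ times the removed fraction of $S'$, minus $\Delta/R$. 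Summing over $i$, the removed regions restricted to $S_{i-1}$ are disjoint, so the total is at most $\frac{1}{R}$. The $\Delta$ slack per step absorbs the uniform-convergence error over $S$ for the intersection class; I still need to check that the sum of $t\le1/\Delta$ such errors stays below $\epsilon/R$. This yields the $(1+\epsilon)/R$ bound. For the TV bound, let $U$ be the union of the removed regions. The same summation gives $\pr{\calD'}{U}\gtrsim R\,\pr{\calD}{U}$, up to $\epsilon$ terms. Since $\pr{\calD'}{U}\le\pr{\calD}{U}+d_{\mathrm{TV}}(\calD,\calD')$, I get $\pr{\calD}{U}\le(d_{\mathrm{TV}}+\epsilon)/(R-1)$.

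The main obstacle is the generalization step for the data-dependent selector. Specifically, I must show that $p'$, which is unknown to the algorithm, lands in $\calP(f_i)$. I must also make the uniform bounds over $S'$ and $S$ valid despite $s$ depending on both samples, which is why I condition on $S$ and rely on $t\le1/\Delta$ to control the VC dimension. The hypercontractive tail control for the $\ev{\calU(S)}{f\abs p}$ concentration is the technically heaviest computation, and I would handle it first by truncating at level $\mathrm{poly}((d+1)^\ell)$ and bounding the tail with the $t$-th moment inequality at $t\approx\log(1/\epsilon)$.
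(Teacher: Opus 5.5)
Your plan is essentially the paper's proof. You rescale $p_i$ into $\calP(f_i)$ and use the termination condition. You condition on $S$ and use $t\le\lfloor 1/\Delta\rfloor$ to get uniform convergence over $S'$ for the bounded-VC selector class. You bound rejections by the Christoffel/Markov bound for the boundedness filter plus the $\tau$-rule summation, with the same rearrangement for the $d_\mathrm{TV}$ bound. Two of your technical substitutions are fine: layer-cake integration in the $S'$ step is a slightly sharper replacement for \Cref{lemma:uc1}, and your uniform truncation argument on $S$ replaces the pointwise moment bound of \Cref{lemma:con2} (pointwise would suffice, since $p'$ depends only on $\calD$).

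One claim is numerically wrong. The room between $\epsilon/(R+\epsilon)$ and $2\epsilon/(2R+\epsilon)$ is only a factor $1+\epsilon/(2R+\epsilon)$, i.e.\ an additive $\epsilon^2/((2R+\epsilon)(R+\epsilon))$. So $(1+\epsilon)$-multiplicative concentration does not certify $p'\in\calP(f_i)$. You need additive accuracy of order $\epsilon^2/R^2$, which your truncation argument does deliver with polynomially many samples.

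A few smaller points also need tightening:
\begin{itemize}
\item The $S'$ step must give error at most $\epsilon^2$, not $O(\epsilon)$, to produce exactly the factor $(R+\epsilon)(1+\epsilon)$.
\item The per-step uniform-convergence error over $S$ should be at most $\Delta/R$. The $\Delta$ slack in the $\tau$-rule then absorbs it exactly, so nothing accumulates over the $t$ steps and the check you flagged is unnecessary.
\item You still need to state the running-time and evaluation-time claims. These follow from convexity of $\calP(f)$ and $t\le 1/\Delta=O(\sqrt{\beta R^3(d+1)^\ell/\epsilon^5})$.
\end{itemize}
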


\begin{proof}
Let
\[s, B, \Delta, t, S_0, \dots, S_t, \mu_1^\star, \dots, \mu_{t+1}^\star, f_1^\star, \dots, f_{t+1}^\star, p_1^\star, \dots, p_{t+1}^\star, \tau_1^\star, \dots, \tau_{t}^\star, \calP(f_1), \dots, \calP(f_k)\]
be defined as in the description of \Cref{alg:icf_}.

First, we show that \Cref{alg:icf_} is efficient. For any \(f \in \calF\), the set \(\calP(f)\) can be described by \(O(\abs{S})\) linear constraints plus one convex quadratic constraint over the coefficient vectors whose dimension is at most \((d+1)^\ell\). Thus, \cref{line:icf_:S0} and \cref{line:icf_:obj} can be implemented via convex programming in time \(k \cdot \mathrm{poly}(\abs{S}, \abs{S'}, (d+1)^\ell)\).\footnote{In \cref{line:icf_:S0,line:icf_:obj}, the algorithm solves one convex optimization problem for each \(f \in \calF\).} Moreover, \cref{line:icf_:tau} and \cref{line:icf_:Si} can be implemented with a single pass of the input points. Finally, for any \(x \in \real^d\), the value of the first indicator function in the expression of \(s(x)\) can be computed via convex programming in time \(k \cdot \mathrm{poly}\left(\abs{S}, (d+1)^\ell\right)\) and the value of each of the other \(t\) indicator functions can be computed in time \((d+1)^\ell\). So, \(s(x)\) can be computed in time \(k \sqrt{\beta R^3/\epsilon^5} \mathrm{poly} \left( \abs{S}, (d+1)^\ell\right)\).

\paragraph{Proof of \Cref{item:selector1_}.}

We proceed by proving \Cref{item:selector1_} of \Cref{thm:icf_}. Let any \(f \in \calF\) and \(p \in \calP_{d, \ell}\) such that \(\ev{x \sim \calD}{p(x)^2} \leq \beta\). Let \(\gamma = \epsilon/(R +\epsilon)\).
    
    First, we assume that \(\ev{x \sim \calD}{f(x)\abs{p(x)}} \leq \gamma\). Then, if \(\abs{S} \gtrsim R^2\beta(d+1)^{2\ell} \left(CA\log(k/\delta) \right)^{4\ell+1}/\epsilon^2\) for some sufficiently large constant \(C > 0\), then, by \Cref{lemma:con2}, we have that with probability at least \(1 - \delta/(2k)\) over \(S\), it holds \(p \in \calP(f)\). We condition on that happening for the rest of the paragraph. Then, by definition of the selector \(s\), it holds \(s(x) = 1\) for all \(x \in S_t\) and \(s(x) = 0\) for all \(x \in S' \setminus S_t\). Thus, we have
    \begin{align*}
    \ev{x \sim \calU(S')}{f(x)p(x)s(x)} &= \dfrac{1}{\abs{S'}} \sum_{x \in S'} f(x)p(x)s(x) \\
    &= \dfrac{1}{\abs{S'}} \sum_{x \in S_{t}} f(x)p(x) \\
    &\leq \dfrac{1}{\abs{S'}} \sum_{x \in S_{t}} f_{t+1}^\star(x)p_{t+1}^\star(x) \\
    &\leq \epsilon \, .
    \end{align*}
    The penultimate inequality follows from the fact that \(p \in \calP(f)\) and from the fact that the pair \((f_{t+1}^\star, p_{t+1}^\star)\) is a maximizer of the optimization problem defined in \cref{line:icf_:obj} of \Cref{alg:icf_} and the final inequality follows from the fact that \(t+1\) is the last iteration of \Cref{alg:icf_} (thus, the terminating condition \(\mu_{t+1}^\star \leq \epsilon\) of \cref{line:icf_:terminate} of \Cref{alg:icf_} holds). 
    Now, consider the class
    \[
    \calH = \left\{ x \mapsto \INDICATOR{\max_{g \in \calF}\max_{q \in \calP(g)} g(x)\abs{q(x)} \leq B} \prod_{i=1}^{\lfloor 1 / \Delta \rfloor}\indicator{f_i(x) \abs{p_i(x)} \leq \tau_i} :
    \begin{array}{l}
       f_i \in \calF \\
       p_i \in \calP_{d, \ell} \\
       \tau_i \geq 0
    \end{array} \right\} \, .
    \]
    Notice that \(\calH\) does not depend on \(S'\). Also, due to \Cref{prop:t}, we have that \(t \leq \lfloor 1/\Delta \rfloor\), which implies that \(s \in \calH\). Moreover, each function in \(\calH\) is a logical AND of \(\lfloor 1 / \Delta \rfloor + 1\) functions, the first of which is a fixed function (after conditioning on \(S\)) and the other \(\lfloor 1 / \Delta \rfloor\), by \Cref{lemma:vc1}, belong to a class of VC dimension at most \( O\left( (d+1)^\ell \ell \log(d)\log(k) \right)\). Thus, by \Cref{fact:vc_and}, we have that \(\mathrm{VCdim}(\calH) \leq O\left( (d+1)^\ell \ell \log(d)\log(k) \right) \lfloor 1 / \Delta \rfloor \log(\lfloor 1 / \Delta \rfloor) \). Moreover, since \(\sup_{h \in \calH}\sup_{x \in \real^d} h(x) f(x)\abs{p(x)} \leq B\),
    if \(\abs{S'} \gtrsim B^4  \mathrm{VCdim}(\calH) \log(1/\epsilon) \log(k/\delta)/\epsilon^8 \), \Cref{lemma:uc1} (uniform convergence) implies that with probability at least \(1 - \delta/(2k)\) over \(S'\), it holds
    \[
    \ev{x \sim \calD'}{f(x)p(x)s(x)} \leq \ev{x \sim \calU(S')}{f(x)p(x)s(x)} + \epsilon^2 \leq \epsilon + \epsilon^2 \, . 
    \]
    Thus, \(f\) and \(p\) satisfy the condition of \Cref{item:selector1_} with probability at least \(1 - \delta/k\).

    Now, we assume that \(\ev{x \sim \calD}{f(x)\abs{p(x)}} > \gamma\). We define the polynomial
    \(
    \widetilde p = \gamma p/\ev{x \sim \calD}{f(x)\abs{p(x)}}
    \). Then, \(\ev{x \sim \calD}{\widetilde p(x)^2} < \beta\) and \(\ev{x \sim \calD}{f(x)\abs{\widetilde p(x)}} = \gamma\). If \(\abs{S}\) is sufficiently large, by \Cref{lemma:con2}, we have that with probability at least \(1 - \delta/(2k)\) over \(S\), it holds \(\widetilde p \in \calP(f)\). We condition on this high probability event for the rest of the argument. Following the same steps as in the previous case, we have that with probability at least \(1 - \delta/(2k)\) over \(S'\), it holds
    \(
    \ev{x \sim \calD'}{f(x)\widetilde p(x)s(x)} \leq \epsilon + \epsilon^2\), which implies that
    \[
    \ev{x \sim \calD'}{f(x)p(x)s(x)} \leq (R + \epsilon)(1+\epsilon)\ev{x \sim \calD}{f(x)\abs{p(x)}} \, . 
    \]
    Thus, \(f\) and \(p\) again satisfy the condition of \Cref{item:selector1_} with probability at least \(1 - \delta/k\). Finally, a union bound over all pairs \((f_i, p_i)\) for \(i \in [k]\), implies that with probability at least \(1 - \delta\), it holds
    \[
    \ev{x \sim \calD'}{f_i(x)p_i(x)s(x)} \leq (R+\epsilon)(1+\epsilon)\ev{x \sim \calD}{f_i(x)\abs{p_i(x)}} + 2\epsilon
    \]
    for all \(i \in [k]\).

    \paragraph{Proof of \Cref{item:selector2_}.} Now, we will prove \Cref{item:selector2_} of \Cref{thm:icf_}. In the trivial case \(t = 0\), we have \(s(x) = \indicator{\max_{f \in \calF}\max_{p \in \calP(f)} f(x)\abs{p(x)} \leq B}\). So, we have
    \[
        \pr{x \sim \calD}{s(x)=0} = \PR{x \sim \calD}{\max_{f \in \calF}\max_{p \in \calP(f)} f(x)\abs{p(x)} > B} \, .
    \]
    We now bound the right hand side in the above equality. Let \(\calP = \{p \in \calP_{d, \ell} : \ev{x \sim \calU(S)}{p(x)^2 } \leq 2\beta\}\) and recall that \(B = 2\sqrt{2R(d+1)^\ell \beta/\epsilon} \). Hence,
    if \(\abs{S} \gtrsim (d+1)^{2\ell} (4A\log(12/\delta) )^{4\ell+1}\), by \Cref{lemma:con1}, we have that with probability at least \(1 - \delta/3\) over \(S\), it holds
    \[
    \PR{x \sim \calD}{\max_{f \in \calF}\max_{p \in \calP(f)} f(x)\abs{p(x)} > B} \leq \PR{x \sim \calD}{\max_{p \in \calP} \abs{p(x)} > B} \leq \epsilon/(2R) \, ,
    \]
    which directly implies \Cref{item:selector2_} of \Cref{thm:icf_}.
    
    We now consider the case \(t \geq 1\). By definition of the selector, we have
    \[
        \pr{x \sim \calD}{s(x)=0} \leq \PR{x \sim \calD}{\max_{f \in \calF}\max_{p \in \calP(f)} f(x)\abs{p(x)} > B} + \sum_{i=1}^{t} \pr{x \sim \calD}{f_i^\star(x) \abs{p_i^\star(x)} > \tau_i^\star} \, .
    \]
    We now bound the right term of the right hand side in the above inequality. If \(\abs{S} \gtrsim (d+1)^\ell \ell \log(d) \log(k) \allowbreak \log(R/\Delta) \log(1/\delta)R^2/\Delta^2\), by \Cref{lemma:vc1} and \Cref{fact:uc0} (uniform convergence), we have that with probability at least \(1 - \delta/3\) over \(S\), for all \(i \in [t]\), it holds
    \[
    \pr{x \sim \calD}{f_i^\star(x) \abs{p_i^\star(x)} > \tau_i^\star} \leq  \pr{x \sim \calU(S)}{f_i^\star(x)\abs{p_i^\star(x)} > \tau_i^\star} + \dfrac{\Delta}{R} \, .
    \]
    Hence, summing over all \(i \in [t]\), we get that
    \begin{align*}
    \sum_{i=1}^{t} \pr{x \sim \calD}{f_i^\star(x) \abs{p_i^\star(x)} > \tau_i^\star} &\leq \sum_{i=1}^{t} \left( \pr{x \sim \calU(S)}{f_i^\star(x)\abs{p_i^\star(x)} > \tau_i^\star} + \dfrac{\Delta}{R} \right) \\
    &\leq \dfrac{1}{R}\sum_{i=1}^{t} \dfrac{\abs{S_{i-1}}}{\abs{S'}} \pr{x \sim \calU(S_{i-1})}{f_i^\star(x)\abs{p_i^\star(x)} > \tau_i^\star} \\
    &= \dfrac{1}{R\abs{S'}} \sum_{i=1}^{t} \sum_{x \in S_{i-1}} \indicator{f_i^\star(x)\abs{p_i^\star(x)} > \tau_i^\star} \\
    &= \dfrac{1}{R\abs{S'}} \sum_{i=1}^t \sum_{x \in S_{i-1} \setminus S_i} \indicator{f_i^\star(x)\abs{p_i^\star(x)} > \tau_i^\star} \, .
    \end{align*}
    In the second line, we used the fact that the condition of \cref{line:icf_:tau} of \Cref{alg:icf_} is satisfied. In the fourth line, we used the fact that for all \(i \in [t]\), we have \(\sum_{x \in S_i} \indicator{f_i^\star(x)\abs{p_i^\star(x)} > \tau_i^\star} = 0\) by definition of \(S_i\) (see \cref{line:icf_:Si} of \Cref{alg:icf_}). Using the fact that \(\indicator{f_i^\star(x)\abs{p_i^\star(x)} > \tau_i^\star} \leq \indicator{s(x)=0}\) for all \(i \in [t]\) and \(x \in \real^d\), we get
    {\allowdisplaybreaks
    \begin{align*}
    \sum_{i=1}^{t} \pr{x \sim \calD}{f_i^\star(x) \abs{p_i^\star(x)} > \tau_i^\star} &\leq
    \dfrac{1}{R\abs{S'}} \sum_{i=1}^t \sum_{x \in S_{i-1} \setminus S_i} \indicator{s(x)=0} \\
    &= \dfrac{1}{R\abs{S'}} \sum_{x \in S_0} \indicator{s(x)=0} \\
    &\leq \dfrac{1}{R\abs{S'}} \sum_{x \in S'} \indicator{s(x)=0} \\
    &= \dfrac{1}{R} \pr{x \sim \calU(S')}{s(x)=0} \, .
    \end{align*}
    }
    Trivially, the quantity above is upper bounded by \(1/R\), which implies that \(\pr{x \sim \calD}{s(x)=0} \leq (1 + \epsilon)/R\). We will now provide an additional upper bound for the same quantity. If \(\abs{S'} \gtrsim  \mathrm{VCdim}(\calH) \log(1/\epsilon) \log(1/\delta)/\epsilon^2 \), using \Cref{fact:uc0} (uniform convergence), we get that with probability at least \(1 - \delta/3\), it holds
    \[
        \pr{x \sim \calU(S')}{s(x)=0} \leq \pr{x \sim \calD'}{s(x)=0} + \epsilon/2
        \leq d_\mathrm{TV}( \calD, \calD' ) + \pr{x \sim \calD}{s(x)=0} + \epsilon/2 \, .
    \]
    Thus, we have
    \[
    \pr{x \sim \calD}{s(x)=0} \leq \dfrac{1}{R} \left( d_\mathrm{TV}( \calD, \calD') + \pr{x \sim \calD}{s(x)=0} + \epsilon \right) \, ,
    \]
    which implies (after rearranging) that
    \[
    \pr{x \sim \calD}{s(x)=0} \leq \dfrac{d_\mathrm{TV}(\calD, \calD') + \epsilon}{R-1} \, .
    \]
    Finally, a union bound implies that the condition of \Cref{item:selector2_} holds with probability at least \(1 - \delta\). This concludes the proof of \Cref{thm:icf_}.
\end{proof}

\section{PQ Learning} \label{appendix:pq}

In this section, we prove \Cref{thm:pq}, which we restate here as \Cref{thm:pq_} for convenience.

\begin{theorem}[PQ Learning via \(\calL_1\) Sandwiching] \label{thm:pq_} Let any \(\epsilon, \delta, \eta \in (0, 1)\) and \(A \geq 1\). Let \(\calX \subseteq \real^d \) and let \(\calD\) be any \(A\)-hypercontractive distribution on \(\calX\). Let \(\calC \subseteq \{0, 1\}^\calX\) be a concept class and let \(\ell=\calL_1(\calC, \calD, \epsilon\eta/8)\). Then, there exists an \((\epsilon, \delta, \eta)\)-PQ learner for \(\calC\) with respect to \(\calD\) with sample complexity and runtime \(\mathrm{poly} ((d+1)^\ell, A^\ell, 1/\epsilon, 1/\eta, \log(1/\delta)^\ell )\).
\end{theorem}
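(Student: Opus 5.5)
The learner runs the three-step algorithm sketched above. First, $\calL_1$ polynomial regression over $\calP_{d,\ell}$ on labeled samples from $\Dtrain$, followed by thresholding, produces $\hat f$. Second, \Cref{alg:icf} is run on $\hat f$, an unlabeled sample $S$ from $\Dtrain_\calX=\calD$, and an unlabeled sample $S'$ from $\Dtest_\calX=\calD'$. Third, the pair $(\hat f,s)$ is output. The analysis then chains the following results: the classical guarantee of \cite{doi:10.1137/060649057}, \Cref{thm:icf} (equivalently \Cref{thm:icf_} with $k=2$), and \Cref{lemma:chow_transfer}.

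\textbf{Learning $\hat f$.} Since sandwiching polynomials are in particular $\calL_1$ approximators ($\ev{\calD}{\abs{f-\pup}}\le \ev{\calD}{\pup-\pdown}$), the $\calL_1$ regression result of \cite{doi:10.1137/060649057} applies. With $\ell=\calL_1(\calC,\calD,\epsilon\eta/8)$, it gives, with probability $1-\delta/3$ and in time $\mathrm{poly}((d+1)^\ell,1/\epsilon,1/\eta,\log(1/\delta))$, a hypothesis with
\[
\errtrain(\hat f)\le \mathsf{opt}_\mathrm{train}+O(\epsilon\eta).
\]
Because $\Dtrain$ may be non-realizable and unbounded, this step needs the empirical $\calL_1$ error of polynomials to concentrate. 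I would handle that with hypercontractivity, or equivalently by restricting to polynomials of bounded second moment.

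\textbf{Running ICF and verifying its hypotheses.} Fix $f^\star\in\calC_\lambda$ attaining $\lambda_\mathrm{train}$, with degree-$\ell$ sandwiching polynomials $\pdown\le f^\star\le\pup$ satisfying $\ev{\calD}{\pup-\pdown}\le\epsilon\eta/8$. I apply \Cref{thm:icf} to the fixed pair $p=\bar p_\mathrm{down}$ and $q=p_\mathrm{up}$. These are nonnegative pointwise since $\pdown\le f^\star\le 1$ and $\pup\ge f^\star\ge 0$, so $\abs{p}=p$ and $\abs{q}=q$, which matches the hypotheses of \Cref{lemma:chow_transfer}.

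\textbf{Main obstacle: the bound $\beta$.} The key difficulty is exhibiting a $\beta$ with $\ev{\calD}{\pup^2},\ev{\calD}{\bar p_\mathrm{down}^2}\le\beta$, since no coefficient bounds are assumed. I would use hypercontractivity with $t=2$. Writing $\pup=f^\star+(\pup-f^\star)$ with $0\le \pup-f^\star\le\pup-\pdown$, we get
\[
\ev{\calD}{\abs{\pup}}\le 1+\epsilon\eta/8\le 2,
\qquad\text{so}\qquad
\ev{\calD}{\pup^2}\le (2A)^{2\ell}\,\ev{\calD}{\abs{\pup}}^2\le 4(2A)^{2\ell},
\]
and the same argument bounds $\bar p_\mathrm{down}$. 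Taking $\beta=4(2A)^{2\ell}$ keeps all sample sizes $\mathrm{poly}(A^\ell,\dots)$.

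\textbf{Parameters and conclusion.} Set $R=1/\eta$ (slightly adjusted) and choose the ICF accuracy parameter $\epsilon_0=c\,\epsilon\eta$ for a small constant $c$. Then \Cref{thm:icf} yields the hypotheses of \Cref{lemma:chow_transfer} with multiplicative factor $R'=(R+\epsilon_0)(1+\epsilon_0)$ and additive error $2\epsilon_0\le\epsilon/4$. It also yields the rejection bound $\pr{\calD}{s=0}\le(1+\epsilon_0)/R$. To make the rejection at most $\eta$, pick $R=(1+\epsilon_0)/\eta$. Then
\[
R'\le \frac{1}{\eta}\,(1+O(\epsilon_0))^2\le \frac{1}{\eta}+O(c\,\epsilon).
\]
The lemma needs sandwiching accuracy $\epsilon/(4R')$. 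Since $\epsilon\eta/8\le\epsilon/(4R')$ once $c$ is small, the choice of $\ell$ suffices. \Cref{lemma:chow_transfer} then gives
\[
\pr{\Dtest}{\hat f(x)\neq y\land s(x)=1}\le \lambda_\mathrm{test}+R'\bigl(\lambda_\mathrm{train}+\mathsf{opt}_\mathrm{train}+O(\epsilon\eta)\bigr)+\epsilon.
\]
Two excess terms remain to absorb. The first is $(R'-1/\eta)\cdot(\lambda_\mathrm{train}+\mathsf{opt}_\mathrm{train})\le O(c\epsilon)\cdot 2$, since both quantities are at most $1$. The second is $R'\cdot O(\epsilon\eta)=O(\epsilon)$ with a small constant. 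Rescaling $\epsilon$ by a constant at the start absorbs both into a single $\epsilon$. A union bound over the three failure events (regression and the two ICF guarantees, each with confidence $\delta/3$) completes the proof. The runtime and samples are polynomial in $(d+1)^\ell$, $A^\ell$, $1/\epsilon$, $1/\eta$ and $\log(1/\delta)^\ell$, as dictated by \Cref{thm:icf}.
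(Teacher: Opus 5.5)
Your proposal follows the paper's proof essentially step for step. It uses the same three-stage learner, the same $\beta=4(2A)^{2\ell}$ from hypercontractivity at $t=2$, and applies \Cref{thm:icf} to the fixed nonnegative pair $\bar p_\mathrm{down},\pup$ before concluding with \Cref{lemma:chow_transfer}. With $c=1/96$, your $R=(1+\epsilon_0)/\eta$, $\epsilon_0=c\epsilon\eta$ is exactly the paper's $R'=1/\eta+\epsilon/96$, $\epsilon'=\epsilon\eta/96$. The extra concentration argument you plan for the regression step is not needed. The paper simply invokes \Cref{thm:l1_regression}, whose guarantee holds for arbitrary distributions: only the fixed best approximator's empirical $\calL_1$ error must be controlled, and thresholded degree-$\ell$ polynomials generalize by VC dimension.

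\textbf{The final rescaling step.} This is the one step that does not go through as written. You invoke \Cref{lemma:chow_transfer} with its full error parameter $\epsilon$, which is what makes $\epsilon\eta/8\le\epsilon/(4R')$ hold. Your bound therefore ends at $\epsilon+O(\epsilon)$. Rescaling $\epsilon$ is not free, because $\ell=\calL_1(\calC,\calD,\epsilon\eta/8)$ is pinned to $\epsilon$. Replacing $\epsilon$ by $\epsilon/C$ requires degree $\calL_1(\calC,\calD,\epsilon\eta/(8C))$, which for a general class need not be polynomially related to $\ell$. As written, you prove the theorem only with a smaller constant in place of $1/8$.

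Relatedly, the regression guarantee available at degree $\ell$ via \Cref{thm:l1_regression} is about $\mathsf{opt}_\mathrm{train}+\epsilon\eta/4$. Its contribution $R'\cdot O(\epsilon\eta)$ is thus roughly $\epsilon/4$, a fixed fraction of $\epsilon$ rather than an arbitrarily small one.

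\textbf{How to fix it.} Budget the constants instead of rescaling. The paper does this as follows.
\begin{itemize}
\item It applies the lemma with $R=1/\eta$ and error parameter $\epsilon/2$. The required sandwiching accuracy is then $(\epsilon/2)\eta/4=\epsilon\eta/8$, exactly the given one.
\item It absorbs the excess of $(R'+\epsilon')(1+\epsilon')$ over $1/\eta$ into the additive term, using $\ev{\calD}{\hat f\,\bar p_\mathrm{down}}\le 2$ and $\ev{\calD}{\bar{\hat f}\,\pup}\le 2$. This yields additive error $\epsilon/8=(\epsilon/2)/4$.
\item It runs regression to accuracy $\epsilon\eta/2$, for a total of $\epsilon/2+\eta^{-1}\cdot\epsilon\eta/2=\epsilon$.
\end{itemize}
Your variant, which keeps $R'$ inside the lemma, also works if you take the lemma's error parameter to be $\epsilon/2+O(c\epsilon)$ and the regression accuracy to be $\epsilon\eta/4$. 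That leaves room for the $2(R'-1/\eta)=O(c\epsilon)$ term, with total $3\epsilon/4+O(c\epsilon)\le\epsilon$.
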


By combining the above result with existing sandwiching degree bounds from prior work (see \Cref{appendix:sandwiching-bounds}), we obtain the following corollaries.

\begin{corollary} Let any \(\epsilon, \delta, \eta \in (0, 1)\). Let \(\calC\) be the class of depth-\(t\), size-\(s\) \(\mathsf{AC}^0\) circuits in \(\{\pm 1\}^d\) and let \(\ell = O(\log(s))^{O(t)}\log(1/(\epsilon\eta))\). Then, there exists an \((\epsilon, \delta, \eta)\)-PQ learner for \(\calC\) with respect to \(\calU(\{\pm 1\}^d)\) with sample complexity and runtime \(\mathrm{poly} ((d+1)^\ell, 1/\epsilon, 1/\eta, \log(1/\delta)^\ell )\).
\end{corollary}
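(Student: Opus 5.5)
The corollary follows by plugging the $\calL_1$ sandwiching bound for $\mathsf{AC}^0$ into \Cref{thm:pq_}; only two conditions need checking. First, the training marginal $\calD = \calU(\{\pm 1\}^d)$ is $A$-hypercontractive for a universal constant $A$. This is the standard Bonami–Beckner hypercontractivity for product distributions over the hypercube noted after \Cref{def:hypercontractivity}: for $t \ge 2$, $\|p\|_t \le (t-1)^{\deg(p)/2}\|p\|_2$. We combine this with $\|p\|_2 \le e^{O(\deg p)}\|p\|_1$, which is itself a consequence of hypercontractivity via Hölder interpolation between $L_1$ and $L_4$. The result has the form $\ev{}{\abs{p}^t} \le (At)^{t\deg(p)}\ev{}{\abs{p}}^t$. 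Degree-1 monomials are bounded, so they have finite expectation. Since $A$ is a constant, the factor $A^\ell$ in \Cref{thm:pq_} is $2^{O(\ell)} \le (d+1)^{O(\ell)}$ and can be absorbed into the $\mathrm{poly}((d+1)^\ell)$ term.

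Second, we bound the sandwiching degree at the accuracy \Cref{thm:pq_} requires, namely $\ell^\ast = \calL_1(\calC, \calU(\{\pm1\}^d), \epsilon\eta/8)$. By \Cref{fact:l1_degree_AC0} with error parameter $\epsilon\eta/8$,
\[
\ell^\ast \le O(\log s)^{O(t)} \log(8/(\epsilon\eta)) = O(\log s)^{O(t)}\log(1/(\epsilon\eta)),
\]
where the additive $\log 8$ is absorbed by adjusting constants, using $\epsilon\eta < 1$ so that $\log(1/(\epsilon\eta))$ is bounded away from zero up to constants (if needed, replace $\log(1/(\epsilon\eta))$ by $1+\log(1/(\epsilon\eta))$). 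This gives $\ell^\ast \le \ell$ for the $\ell$ in the statement.

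The bound in \Cref{thm:pq_} is monotone in $\ell$, and a sandwiching pair of degree $\ell^\ast$ is also a pair of degree at most $\ell$. So we can either run the learner with degree parameter $\ell$ directly, or note that $\mathrm{poly}((d+1)^{\ell^\ast}, \dots, \log(1/\delta)^{\ell^\ast}) \le \mathrm{poly}((d+1)^{\ell}, \dots, \log(1/\delta)^{\ell})$ when $\log(1/\delta) \ge 1$; small $\delta$ can be handled by replacing $\delta$ with $\min\{\delta, 1/e\}$. Either way we obtain the claimed sample complexity and runtime.

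The only step needing real care is the hypercontractivity constant in the specific $L_1$-to-$L_t$ form of \Cref{def:hypercontractivity}, since the standard statement compares to the $L_2$ norm. To get the $L_1$ version, apply the log-convexity (Hölder) bound $\|p\|_2 \le \|p\|_1^{1/3}\|p\|_4^{2/3}$ together with $\|p\|_4 \le 3^{\deg p/2}\|p\|_2$. This yields $\|p\|_2 \le 3^{\deg p}\|p\|_1$, and hence the required form with a universal constant $A$.
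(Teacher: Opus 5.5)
Your proposal is correct and follows the paper's argument: instantiate \Cref{thm:pq_} with the $\mathsf{AC}^0$ bound of \Cref{fact:l1_degree_AC0} at accuracy $\epsilon\eta/8$, and use that $\calU(\{\pm1\}^d)$ is hypercontractive with a universal constant $A$, so the $A^\ell$ factor is absorbed into $\mathrm{poly}((d+1)^\ell)$. Your derivation of the $L_1$-to-$L_t$ constant ($A=3$, via Bonami together with the interpolation $\|p\|_2 \le \|p\|_1^{1/3}\|p\|_4^{2/3}$) checks out, and it fills in a step the paper only covers by the remark after \Cref{def:hypercontractivity}.
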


\begin{corollary} Let any \(\epsilon, \delta, \eta \in (0, 1)\). Let \(\calC\) be the class of depth-\(t\), size-\(s\) decision trees of halfspaces in \(\real^d\) and let \(\ell = \tilde O(t^4s^2/(\epsilon\eta)^2)\). Then, there exists an \((\epsilon, \delta, \eta)\)-PQ learner for \(\calC\) with respect to \(\calN_d\) (resp. \(\calU(\{\pm 1\}^d)\)) with sample complexity and runtime \(\mathrm{poly} ((d+1)^\ell, 1/\epsilon, 1/\eta, \log(1/\delta)^\ell )\).
\end{corollary}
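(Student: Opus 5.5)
The plan is to instantiate \Cref{thm:pq_} directly; all algorithmic work is already done there. That theorem needs two inputs for a training marginal \(\calD\): a constant \(A\) such that \(\calD\) is \(A\)-hypercontractive, and a bound on \(\ell = \calL_1(\calC, \calD, \epsilon\eta/8)\). The corollary's runtime then follows by substitution.

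\textbf{Hypercontractivity.} I will verify that both \(\calN_d\) and \(\calU(\{\pm 1\}^d)\) are \(A\)-hypercontractive for a universal constant \(A\), in the sense of \Cref{def:hypercontractivity}.
\begin{itemize}
\item For the uniform distribution on the hypercube, the Bonami lemma gives \(\|p\|_t \le (t-1)^{\deg(p)/2}\|p\|_2\). Combining this with \(\|p\|_2 \le e^{O(\deg(p))}\|p\|_1\) (again a consequence of hypercontractivity) yields \(\E|p|^t \le (At)^{t\deg(p)}(\E|p|)^t\) for some constant \(A\).
\item The Gaussian case follows from the analogous Gaussian hypercontractivity. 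Alternatively, it follows from the log-concave result of \cite{bobkov2001some} cited in the preliminaries.
\item In both cases every degree-\(1\) monomial has finite absolute expectation, so the second condition holds trivially.
\end{itemize}
Since \(A = O(1)\), the factor \(A^\ell\) is absorbed into \(\mathrm{poly}((d+1)^\ell)\).

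\textbf{Sandwiching degree.} Next, I apply \Cref{fact:l1_degree_dt} with approximation parameter \(\epsilon\eta/8\) in place of \(\epsilon\). This gives
\[
\calL_1(\calC, \calD, \epsilon\eta/8) \le \tilde O\bigl(t^4 s^2 \cdot 64/(\epsilon\eta)^2\bigr) = \tilde O\bigl(t^4 s^2/(\epsilon\eta)^2\bigr)
\]
for both \(\calD = \calN_d\) and \(\calD = \calU(\{\pm 1\}^d)\). The constant \(64\) and the logarithmic factors are absorbed into the \(\tilde O\).

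\textbf{Monotonicity.} \Cref{thm:pq_} is stated with \(\ell\) equal to the sandwiching degree exactly, whereas the corollary's \(\ell\) is an upper bound on it. This is harmless, since the definition in \Cref{def:l1_sandwiching} is monotone in the degree: polynomials of degree at most \(\ell_0\) are also of degree at most \(\ell\) for any \(\ell \ge \ell_0\). Thus running the learner with the larger \(\ell\) preserves correctness. Moreover, the runtime bound \(\mathrm{poly}((d+1)^\ell, 1/\epsilon, 1/\eta, \log(1/\delta)^\ell)\) is increasing in \(\ell\).

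\textbf{Main obstacle.} The only non-bookkeeping step is hypercontractivity, specifically the conversion from the usual \(L_2\)-based hypercontractive inequality to the \(L_1\)-based form in \Cref{def:hypercontractivity}. This is standard, and the paper already asserts it in its preliminaries for Gaussians and hypercube product distributions. I would therefore cite that assertion rather than reprove it.
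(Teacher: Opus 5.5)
Your proposal is correct and follows the same route as the paper, which gets this corollary by plugging \Cref{fact:l1_degree_dt} at accuracy $\epsilon\eta/8$ into \Cref{thm:pq_}. That works because $\calN_d$ and $\calU(\{\pm 1\}^d)$ are $O(1)$-hypercontractive, so the $A^\ell$ factor is absorbed into $\mathrm{poly}((d+1)^\ell)$. Your remarks on monotonicity in $\ell$ and on deriving the $\calL_1$-based form of hypercontractivity fill in details the paper leaves implicit.
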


\begin{corollary} Let any \(\epsilon, \delta, \eta \in (0, 1)\). Let \(\calC\) be the class of degree-\(2\) PTFs in \(\real^d\) and let \(\ell = \tilde O(1/(\epsilon\eta)^8)\) (resp. \(\ell = \tilde O(1/(\epsilon\eta)^9)\)). Then, there exists an \((\epsilon, \delta, \eta)\)-PQ learner for \(\calC\) with respect to \(\calN_d\) (resp. \(\calU(\{\pm1\}^d)\)) with sample complexity and runtime \(\mathrm{poly} ((d+1)^\ell, 1/\epsilon, 1/\eta, \log(1/\delta)^\ell )\).
\end{corollary}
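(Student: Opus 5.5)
The statement we must prove is the last one displayed: the corollary for degree-$2$ PTFs. The plan is a direct instantiation of \Cref{thm:pq_}. By \Cref{thm:pq_}, for any class $\calC$ and $A$-hypercontractive $\calD$, there is an $(\epsilon,\delta,\eta)$-PQ learner whose sample complexity and runtime are $\mathrm{poly}((d+1)^{\ell'},A^{\ell'},1/\epsilon,1/\eta,\log(1/\delta)^{\ell'})$ with $\ell'=\calL_1(\calC,\calD,\epsilon\eta/8)$. So it suffices to carry out two checks.

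The first check is hypercontractivity with a universal constant $A$. For $\calN_d$ this follows from Gaussian hypercontractivity, and more generally for log-concave distributions (as noted after \Cref{def:hypercontractivity}, via \cite{bobkov2001some}). For $\calU(\{\pm1\}^d)$ it follows from Bonami's inequality. The finiteness of degree-$1$ moments is immediate in both cases. Since $A=O(1)$, the factor $A^{\ell'}$ is at most $2^{O(\ell')}\le \mathrm{poly}((d+1)^{\ell'})$ and is absorbed.

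The second check bounds the sandwiching degree using \Cref{fact:l1_degree_degree2ptf} at accuracy $\epsilon\eta/8$. This gives $\ell'\le \tilde O((8/(\epsilon\eta))^8)=\tilde O(1/(\epsilon\eta)^8)$ under $\calN_d$, and $\tilde O(1/(\epsilon\eta)^9)$ under the uniform distribution on the cube; the constant $8$ is absorbed into the $\tilde O$. Note that \Cref{def:l1_sandwiching} is stated for $\{0,1\}$-valued $f$, while PTFs are written as $\pm1$-valued. The affine map $f\mapsto (1+f)/2$ preserves pointwise sandwiching and degree, and halves the $\calL_1$ gap, so the fact transfers at the same accuracy up to a constant.

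Finally, we take $\ell$ in the corollary to be this upper bound. The runtime bound from \Cref{thm:pq_} is monotone in the degree, and any degree-$\ell'$ sandwiching pair is also of degree at most $\ell\ge \ell'$. So the stated complexity $\mathrm{poly}((d+1)^\ell,1/\epsilon,1/\eta,\log(1/\delta)^\ell)$ follows. There is no real obstacle here; the only points needing care are the $\pm1$ versus $\{0,1\}$ convention and confirming that the hypercontractivity constants are universal, so that $A^\ell$ does not appear in the final bound.
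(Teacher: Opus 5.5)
Your proposal is correct and follows the paper's route: the paper obtains this corollary by plugging the sandwiching-degree bound of \Cref{fact:l1_degree_degree2ptf}, taken at accuracy $\epsilon\eta/8$, into \Cref{thm:pq_}. Your extra bookkeeping (absorbing $A^\ell$ because the hypercontractivity constant is universal, rescaling $\pm1$-valued PTFs to $\{0,1\}$, and running with any degree upper bound) is exactly what the paper leaves implicit.
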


\begin{corollary} Let any \(\epsilon, \delta, \eta \in (0, 1)\). Let \(\calC\) be the class of degree-\(k\) PTFs in \(\real^d\) and let \(\ell = O_k((\epsilon\eta)^{-4k \cdot 7^k})\). Then, there exists an \((\epsilon, \delta, \eta)\)-PQ learner for \(\calC\) with respect to \(\calN_d\) with sample complexity and runtime \(\mathrm{poly} ((d+1)^\ell, 1/\epsilon, 1/\eta, \log(1/\delta)^\ell )\).
\end{corollary}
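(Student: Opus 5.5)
The learner has three steps. First, run $\calL_1$ polynomial regression on a labeled training sample to get $\hat f$. Second, run \Cref{alg:icf} with $\calF=\{\hat f,\bar{\hat f}\}$, $R=1/\eta$, and a suitable $\beta$ and additive error, to get the selector $s$. Third, output $(\hat f,s)$. Correctness then follows by combining \Cref{thm:icf_} with \Cref{lemma:chow_transfer}.

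\emph{Step 1 (learning $\hat f$).} Since $\ell=\calL_1(\calC,\calD,\epsilon\eta/8)$, every $f\in\calC$ has a degree-$\ell$ polynomial $p$ (e.g.\ $\pup$) with $\ev{\calD}{\abs{p-f}}\le \epsilon\eta/8$. The standard $\calL_1$ regression argument of \cite{doi:10.1137/060649057} applies: minimize the empirical $\calL_1$ loss over $\calP_{d,\ell}$ by linear programming, then threshold at the best cutoff. This yields $\hat f$ with $\errtrain(\hat f)\le \min_{f\in\calC}\errtrain(f)+\epsilon\eta/4$ using $\mathrm{poly}((d+1)^\ell,1/(\epsilon\eta),\log(1/\delta))$ samples. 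Generalization of the thresholded polynomial follows from \Cref{fact:vc_ptf} and \Cref{fact:uc0}.

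\emph{Step 2 (filtering and transfer).} Fix $f^\star\in\calC$ attaining $\lambda_\mathrm{train},\lambda_\mathrm{test}$, and fix its sandwiching polynomials $\pup,\pdown$ with gap at most $\epsilon\eta/8\le \epsilon'/(4R')$, where $R'=(R+\epsilon_0)(1+\epsilon_0)$ for a small $\epsilon_0$. Apply item~1 of \Cref{thm:icf_} with $p_1=\bar p_\mathrm{down}$ for $\hat f$ and $p_2=\pup$ for $\bar{\hat f}$. Both are fixed polynomials independent of the samples, so the theorem applies even though they are unknown. Since $\bar p_\mathrm{down}\ge 1-f^\star\ge0$ and $\pup\ge f^\star\ge 0$, the absolute values in the conclusion drop, which gives exactly the hypotheses of \Cref{lemma:chow_transfer} with factor $R'$ and additive term $2\epsilon_0$. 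The lemma then gives selective error at most
\[
\lambda_\mathrm{test}+R'(\lambda_\mathrm{train}+\errtrain(\hat f))+O(\epsilon_0).
\]
Item~2 gives rejection rate at most $(1+\epsilon_0)/R$. To get exactly the bounds of \Cref{def:pq}, set $R=(1+\epsilon_0)/\eta$ and $\epsilon_0=\Theta(\epsilon\eta^2)$ so that $R'\le 1/\eta+O(\epsilon)$. The extra $O(\epsilon)\cdot(\lambda_\mathrm{train}+\errtrain)$ terms and $R\cdot\epsilon\eta/4$ are absorbed into $\epsilon$; the sandwiching precision can be halved if needed, at no cost beyond constants.

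\emph{Main obstacle: bounding $\beta$.} \Cref{thm:icf_} requires $\ev{\calD}{p^2}\le\beta$, but \Cref{def:l1_sandwiching} gives no coefficient or $\calL_2$ bound on $\pup,\pdown$. To handle this I will use hypercontractivity (\Cref{def:hypercontractivity} with $t=2$): $\ev{\calD}{p^2}\le (2A)^{2\ell}\ev{\calD}{\abs p}^2$. It remains to bound $\ev{\calD}{\abs{\pup}}$. Since $\pup\ge f^\star\ge 0$, we have $\ev{\abs{\pup}}=\ev{\pup}\le 1+\epsilon\eta/8$. For the other polynomial, $\abs{1-\pdown}\le 1+\abs{\pdown}$ and $\abs{\pdown}\le \pup+(\pup-\pdown)$ pointwise, so $\ev{\abs{\pdown}}\le 2$. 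Hence $\beta=(2A)^{2\ell}\cdot 9$ suffices, which is $\mathrm{poly}(A^\ell)$. The sample sizes and runtime from \Cref{thm:icf_}, with $k=2$, are then $\mathrm{poly}((d+1)^\ell,A^\ell,1/\epsilon,1/\eta,\log(1/\delta)^\ell)$. A union bound over the regression and filtering events, each at confidence $\delta/3$, completes the proof.
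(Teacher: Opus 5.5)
Your three-step argument is sound and is essentially the paper's own proof of \Cref{thm:pq_}. It uses $\calL_1$ regression for $\hat f$, ICF with $\calF=\{\hat f,\bar{\hat f}\}$, nonnegativity of $\pup$ and $\bar p_\mathrm{down}$ to drop the absolute values, hypercontractivity at $t=2$ to get $\beta=O((2A)^{2\ell})$, and then \Cref{lemma:chow_transfer}. Your choice $R=(1+\epsilon_0)/\eta$ is the same device as the paper's $R'=1/\eta+\epsilon/96$, $\epsilon'=\epsilon\eta/96$. (You should state that the ICF samples $S,S'$ are drawn fresh, independently of the sample used to fit $\hat f$, so that $\calF$ is fixed when \Cref{thm:icf_} is invoked.)

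The gap is that this proves the general theorem, not the corollary. Your argument is written for an arbitrary hypercontractive $\calD$ with $\ell=\calL_1(\calC,\calD,\epsilon\eta/8)$, ends with a bound containing $A^\ell$, and never uses that $\calC$ consists of degree-$k$ PTFs or that $\calD=\calN_d$. The statement instead fixes $\ell=O_k((\epsilon\eta)^{-4k\cdot 7^k})$ and has no $A^\ell$ factor, and nothing in your proposal justifies either.

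The missing ingredients are exactly what the paper's proof of this corollary consists of.

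First, the sandwiching bound of \Cref{fact:l1_degree_degreekptf} (from \cite{kanefoolingpfs,slot2024testably}) gives $\calL_1(\calC,\calN_d,\epsilon\eta/8)\le O_k((\epsilon\eta/8)^{-4k\cdot 7^k})=O_k((\epsilon\eta)^{-4k\cdot 7^k})$. The factor $8^{4k\cdot 7^k}$ depends only on $k$, and the same holds if you shrink the precision by a further constant. Since sandwiching polynomials of degree at most $\calL_1(\calC,\calN_d,\epsilon\eta/8)$ have degree at most the statement's $\ell$, the whole algorithm can be run at degree $\ell$. This is a deep external result that your argument cannot produce.

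Second, $\calN_d$ is $A$-hypercontractive for an absolute constant $A$, which you need even to apply \Cref{thm:icf_}. Then $A^\ell\le (d+1)^{\ell\log_2 A}$ is polynomial in $(d+1)^\ell$ and drops out of the complexity bound.

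With these two facts added, you can replace your entire re-derivation by a direct citation of \Cref{thm:pq_}.
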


\begin{corollary} Let any \(\epsilon, \delta, \eta \in (0, 1)\). Let \(\calC\) be the class of arbitrary functions of \(k\) halfspaces in \(\real^d\), let \(\calD\) be any isotropic log-concave distribution on \(\real^d\) and let \(\ell = \exp((\log(\log(k)/(\epsilon\eta)))^{O(k)}/(\epsilon\eta)^4)\). Then, there exists an \((\epsilon, \delta, \eta)\)-PQ learner for \(\calC\) with respect to \(\calD\) with sample complexity and runtime \(\mathrm{poly} ((d+1)^\ell, 1/\epsilon, 1/\eta, \log(1/\delta)^\ell )\).
\end{corollary}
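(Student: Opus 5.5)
The proof assembles three ingredients: an $\calL_1$ regression step that produces $\hat f$, a call to \Cref{alg:icf} (via \Cref{thm:icf}) that produces $s$, and \Cref{lemma:chow_transfer} to convert the ICF guarantee into selective accuracy.

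\textbf{Step 1: learn $\hat f$.} First draw a labeled training sample and run $\calL_1$ polynomial regression of degree $\ell$ in the style of \cite{doi:10.1137/060649057}, with threshold rounding. Since every $f\in\calC$ has an $\epsilon\eta/8$-approximate $\calL_1$ sandwiching pair, it has an $\calL_1$ approximator of error at most $\epsilon\eta/8$. Hence the returned $\hat f$ satisfies $\errtrain(\hat f)\le \mathsf{opt}_\mathrm{train}+O(\epsilon\eta)$ with probability $1-\delta/3$. The number of samples is $\mathrm{poly}((d+1)^\ell,1/(\epsilon\eta),\log(1/\delta))$; this uses uniform convergence for degree-$\ell$ PTFs via \Cref{fact:vc_ptf}.

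\textbf{Step 2: build $s$.} Draw fresh samples $S\sim\Dtrain_\calX$ and $S'\sim\Dtest_\calX$, and run \Cref{alg:icf} with this $\hat f$, degree $\ell$, and $R=1/\eta$ (slightly adjusted so that $(1+\epsilon')/R\le\eta$). Use a small accuracy parameter $\epsilon'=c\,\epsilon\eta$ and a variance bound $\beta$ specified below. Item 2 of \Cref{thm:icf} immediately gives a rejection rate of at most $\eta$.

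\textbf{Step 3: accuracy via the transfer lemma.} Fix $f^\star\in\calC_\lambda$ attaining $\lambda_\mathrm{train}$, and take its sandwiching pair $\pup,\pdown$ with $\ev{\calD}{\pup-\pdown}\le\epsilon\eta/8=\epsilon/(8R)$, which is at least as strong as the $\epsilon/(4R)$ needed in \Cref{lemma:chow_transfer}. These polynomials are fixed independently of $S,S'$ (after conditioning on $\hat f$), so item 1 of \Cref{thm:icf} can be applied with $p=\bar p_{\mathrm{down}}$ and $q=\pup$. Because $\pup\ge f^\star\ge0$ and $\bar p_\mathrm{down}\ge 1-f^\star\ge 0$, the absolute values in item 1 disappear. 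The conclusion is $\ev{\calD'}{\hat f\bar p_\mathrm{down}s}\le (R+\epsilon')(1+\epsilon')\ev{\calD}{\hat f\bar p_\mathrm{down}}+2\epsilon'$, and similarly for $\bar{\hat f}\pup$.

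It remains to absorb the multiplicative slack. The excess factor multiplies quantities of size at most $\ev{\calD}{\bar p_\mathrm{down}}\le 1+\epsilon$, and it is $O(R\epsilon')=O(\epsilon)$ for $\epsilon'=c\epsilon\eta$, so it folds into the additive $\epsilon/4$ of \Cref{lemma:chow_transfer}. The lemma then gives error at most $\lambda_\mathrm{test}+\frac1\eta(\lambda_\mathrm{train}+\errtrain(\hat f))+\epsilon$. Substituting $\errtrain(\hat f)\le\mathsf{opt}_\mathrm{train}+O(\epsilon\eta)$ gives the bound in \Cref{def:pq} after rescaling constants.

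\textbf{Main obstacle: choosing $\beta$.} We need $\ev{\calD}{\pup^2}$ and $\ev{\calD}{\bar p_\mathrm{down}^2}$ bounded, yet the paper stresses that no coefficient bounds are assumed. The plan is to bound the second moment through hypercontractivity from the $L_1$ norm. We have $\ev{\calD}{|\pup|}\le \ev{\calD}{f^\star}+\ev{\calD}{\pup-\pdown}\le 2$ because $\pup\ge f^\star\ge0$, and likewise $\ev{\calD}{|\bar p_\mathrm{down}|}\le 2$. \Cref{def:hypercontractivity} with $t=2$ then gives $\ev{\calD}{p^2}\le (2A)^{2\ell}\cdot 4$. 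So take $\beta=4(2A)^{2\ell}$, which is $\mathrm{poly}(A^\ell)$ and keeps all sample sizes and runtimes at $\mathrm{poly}((d+1)^\ell,A^\ell,1/\epsilon,1/\eta,\log(1/\delta)^\ell)$. A union bound over the three failure events finishes the proof.
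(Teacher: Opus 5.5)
\textbf{Gap: the class- and distribution-specific inputs are missing.} Your Steps 1--3 are correct and follow the paper's proof of \Cref{thm:pq_} essentially line by line:
$\calL_1$ regression for $\hat f$, then ICF with $R$ slightly above $1/\eta$ and $\epsilon'=\Theta(\epsilon\eta)$, then \Cref{lemma:chow_transfer} applied to $\bar p_{\mathrm{down}}$ and $p_{\mathrm{up}}$, with $\beta=4(2A)^{2\ell}$ obtained from hypercontractivity at $t=2$ rather than from coefficient bounds.

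The problem is that this proves only the generic theorem: ``if $\calD$ is $A$-hypercontractive and every $f\in\calC$ has $\epsilon\eta/8$-approximate $\calL_1$ sandwiching polynomials of degree at most $\ell$, there is a PQ learner with runtime $\mathrm{poly}((d+1)^\ell,A^\ell,1/\epsilon,1/\eta,\log(1/\delta)^\ell)$.'' The corollary is the instantiation of that theorem for functions of $k$ halfspaces under an isotropic log-concave $\calD$. Your proposal never uses anything about this class or this distribution. The sentence ``every $f\in\calC$ has an $\epsilon\eta/8$-approximate $\calL_1$ sandwiching pair'' is assumed rather than justified, and the hypercontractivity of $\calD$ is taken for granted.

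Two inputs close this.

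First, you need the sandwiching-degree bound of \cite{KKM13} (\Cref{fact:l1_degree_function_of_ltfs}). For any log-concave $\calD$, $\calL_1(\calC,\calD,\epsilon_0)\le\exp((\log(\log(k)/\epsilon_0))^{O(k)}/\epsilon_0^4)$. Applied with $\epsilon_0=\epsilon\eta/8$, it shows that the $\ell$ in the statement is at least $\calL_1(\calC,\calD,\epsilon\eta/8)$, up to the constants hidden in the $O(\cdot)$ notation. Running the whole algorithm at any degree at least the sandwiching degree is harmless. These approximators carry no known coefficient bounds, so your hypercontractivity-based choice of $\beta$ is exactly what makes the argument work for this class.

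Second, you must check that $\calD$ satisfies \Cref{def:hypercontractivity}. By \cite{bobkov2001some}, log-concave distributions are $A$-hypercontractive with $A$ a universal constant, and they have finite moments, which gives item 2 of the definition. This is needed both to invoke \Cref{thm:icf} at all and to remove the $A^\ell$ factor you left in your runtime, which does not appear in the corollary. With $A=O(1)$ and $d+1\ge 2$, we get $A^\ell\le((d+1)^\ell)^{\log_2 A}=\mathrm{poly}((d+1)^\ell)$.

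With these two citations added, your argument becomes the paper's proof, which is simply \Cref{thm:pq_} combined with \Cref{fact:l1_degree_function_of_ltfs}.
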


We will now prove \Cref{thm:pq_}.

\begin{proof}[Proof of \Cref{thm:pq_}]
    Let \(\calD^\mathrm{train}\) be any training distribution on \(\calX \times \{0, 1\}\) whose marginal on \(\calX\) is \(\calD\). Let \(\calD^\mathrm{test}\) be any test distribution on \(\calX \times \{0, 1\}\) and let \(\calD'\) denote its marginal on \(\calX\). Let \(\mathsf{opt}_\mathrm{train} = \inf_{f \in \calC}\pr{(x,y) \sim \calD^\mathrm{train}}{f(x) \neq y}\). The algorithm does the following:
    \begin{enumerate}
        \item It draws a set \(S_\mathrm{train}\) of \(\mathrm{poly} \left((d+1)^\ell, 1/\epsilon, 1/\eta, \log(1/\delta) \right)\) i.i.d. examples from \(\calD^\mathrm{train}\).
        \item It runs the degree-\(\ell\) \(\calL_1\) polynomial regression algorithm of \Cref{thm:l1_regression} with input \(S_\mathrm{train}\) to get a classifier \(\hat f \colon \real^d \to \{0, 1\}\).
        \item It draws sets \(S, S'\) of \(\mathrm{poly} ((d+1)^\ell, A^\ell, 1/\epsilon, 1/\eta, \log(1/\delta)^\ell )\) i.i.d. examples from \(\calD\) and \(\calD'\) respectively.
        \item It runs \Cref{alg:icf} with input \(\hat f, S, S', \ell, R'=1/\eta+\epsilon/96, \beta= 4(2A)^{2\ell}, \epsilon'=\epsilon\eta/96\) to get a selector \(s \colon \calX \to \{0, 1\}\).
        \item It returns \((\hat f, s)\).
    \end{enumerate}

    First, since \(\abs{S_\mathrm{train}}\) is sufficiently large, by \Cref{thm:l1_regression}, we can ensure that with probability at least \(1 - \delta/2\), it holds \(\pr{(x, y) \sim \calD^\mathrm{train}}{\hat f(x) \neq y} \leq \mathsf{opt}_\mathrm{train} + \epsilon\eta/2\).

    Now, let any \(f^\star \in \calC\) such that \(\errtrain(f^\star) = \lambda_\mathrm{train}\) and \(\errtest(f^\star) = \lambda_\mathrm{test}\) and let \(p_\mathrm{up}, p_\mathrm{down}\) be \(\epsilon''\)-approximate \(\calL_1\) sandwiching polynomials for \(f^\star\) under \(\calD\) of degree at most \(\ell\), where \(\epsilon'' = \epsilon\eta/8\). Then, \(p_\mathrm{up}\) and \(\bar p_\mathrm{down}\) are nonnegative. Moreover, using the fact that \(\calD\) is \(A\)-hypercontractive and the fact that \(\ev{x \sim \calD}{p_\mathrm{up}(x) - f^\star(x)} \leq \epsilon'' \leq 1\), we get that
    \begin{align*}
        \ev{x \sim \calD}{p_\mathrm{up}(x)^2} &\leq (2A)^{2\ell}\ev{x \sim \calD}{p_\mathrm{up}(x)}^2 \\ &\leq (2A)^{2\ell} \left(\ev{x \sim \calD}{p_\mathrm{up}(x) - f^\star(x)} + \ev{x \sim \calD}{f^\star(x)} \right)^2 \\
        &< 4(2A)^{2\ell}
    \end{align*}
    and, similarly, \(\ev{x \sim \calD}{\bar p_\mathrm{down}(x)^2} < 4(2A)^{2\ell}\). Since \(\abs{S}\) and \(\abs{S'}\) are sufficiently large, applying \Cref{thm:icf} (with additive error \(\epsilon'\), multiplicative slack \(R'\) and polynomial variance bound \(\beta = 4(2A)^{2\ell}\)), we get that with probability at least \(1 - \delta/2\) we have that \(\pr{x \sim \calD}{s(x)=0} \leq (1+\epsilon')/R'\) and
    \begin{align*}
        \ev{x \sim \calD'}{\hat f(x)\bar p_\mathrm{down}(x)s(x)} &\leq (R'+\epsilon')(1+\epsilon')\ev{x \sim \calD}{\hat f(x)\bar p_\mathrm{down}(x)} + 2\epsilon' \\
        &\leq R'\ev{x \sim \calD}{\hat f(x)\bar p_\mathrm{down}(x)} + 8R'\epsilon'
    \end{align*}
    and
    \begin{align*}
        \ev{x \sim \calD'}{\bar{\hat f}(x)\pup(x)s(x)} &\leq (R'+\epsilon')(1+\epsilon')\ev{x \sim \calD}{ \bar{\hat f}(x)\pup(x)} + 2\epsilon' \\
        &\leq R'\ev{x \sim \calD}{ \bar{\hat f}(x)\pup(x)} + 8R'\epsilon' \, ,
    \end{align*}
    where we used the fact that \(\ev{x \sim \calD}{\hat f(x)\bar p_\mathrm{down}(x)} \leq \ev{x \sim \calD}{\bar p_\mathrm{down}(x)} = \ev{x \sim \calD}{f^\star(x) - p_\mathrm{down}(x)} + \ev{x \sim \calD}{\bar f^\star(x)} \leq 2\) and, similarly, \(\ev{x \sim \calD}{\bar{\hat f}(x) p_\mathrm{up}(x)} \leq 2\) in the last inequality of each of the above lines. Thus, for \(R' = 1/\eta + \epsilon/96\) and \(\epsilon' = \epsilon\eta/96\), we have that \(\pr{x \sim \calD}{s(x)=0} \leq \eta\) and
    \begin{align*}
        \ev{x \sim \calD'}{\hat f(x)\bar p_\mathrm{down}(x)s(x)} &\leq \eta^{-1} \ev{x \sim \calD}{\hat f(x)\bar p_\mathrm{down}(x)} + \epsilon/8 \\
        \ev{x \sim \calD'}{\bar{\hat f}(x)\pup(x)s(x)} &\leq \eta^{-1} \ev{x \sim \calD}{ \bar{\hat f}(x)\pup(x)} + \epsilon/8 \, .
    \end{align*}
    Hence, by \Cref{lemma:chow_transfer} (with \(R = 1/\eta\)), given that \(\pr{(x, y) \sim \calD^\mathrm{train}}{\hat f(x) \neq y} \leq \mathsf{opt}_\mathrm{train} + \epsilon\eta/2\), we have
    \begin{align*}
        \pr{(x, y) \sim \calD^\mathrm{test}}{ \hat f(x) \neq y \land s(x)=1 } &\leq \lambda_\mathrm{test} + \eta^{-1} \left( \lambda_\mathrm{train} + \pr{(x, y) \sim \calD^\mathrm{train}}{ \hat f(x) \neq y } \right) + \epsilon/2 \\
        &\leq \lambda_\mathrm{test} + \eta^{-1} \left( \lambda_\mathrm{train} + \mathsf{opt}_\mathrm{train} \right) + \epsilon \, .
    \end{align*}
    
    A union bound implies that, with probability at least \(1 - \delta\), we have
    \begin{enumerate}
        \item \(\pr{(x,y) \sim \calD^\mathrm{test}}{ \hat f (x) \neq y \land s(x)=1 } \leq \lambda_\mathrm{test} + (\lambda_\mathrm{train} + \mathsf{opt}_\mathrm{train})/\eta + \epsilon\) and
        \item \(\pr{x \sim \calD}{s(x) = 0} \leq \eta\),
    \end{enumerate}
    which concludes the proof.
\end{proof}

\section{Tolerant TDS Learning} \label{appendix:tolerant_tds}

In this section, we provide our result for tolerant TDS learning. As before, we consider a feature space $\calX\subseteq \real^d$, binary labels $\calY=\{0,1\}$ and the learner is assumed to have access to labeled examples from a training distribution $\calD^{\mathrm{train}}$ over $\calX\times\calY$, as well as unlabeled examples from the marginal $\calD^{\mathrm{test}}_\calX$ of a test distribution $\calD^{\mathrm{test}}$. The formal definition of tolerant TDS learning of a concept class $\calC\subseteq\{\calX\to \calY\}$ with respect to a distribution $\calD$ over $\calX$ is as follows.
\begin{definition}
[Tolerant TDS Learning \cite{goel2024tolerantalgorithmslearningarbitrary, klivans2024testablelearningdistributionshift}]
\label{def:tolerant_tds}
Let $\delta\in(0,1)$ and \(\theta \in [0, 1]\). An algorithm $\calA$ is a TDS learner for $\calC$ with respect to $\calD$ up to error \(\gamma\), tolerance \(\theta\) and failure probability \(\delta\) if the following holds whenever $\Dtrain_\calX=\calD$. Given sufficiently many labeled examples from $\Dtrain$ and sufficiently many unlabeled examples from $\Dtest_\calX$, the algorithm either accepts and outputs a classifier \(h \colon \calX \to \calY\) or rejects. Moreover:
\begin{enumerate}
    \item \emph{Soundness.}
    If \(\calA\) accepts, then \(
    \pr{(x,y) \sim \calD^\mathrm{test}}{h(x) \neq y} \leq \gamma
    \) with probability at least \(1 - \delta\).
    \item \emph{Completeness.}
    If \(d_\mathrm{TV}(\calD_\calX^\mathrm{train}, \calD_\calX^\mathrm{test}) \leq \theta\), then \(\calA\) accepts with probability at least \(1 - \delta\).
\end{enumerate}
The error benchmark $\gamma$ may depend on the particular instance of the problem, i.e., the distributions $\Dtrain,\Dtest$, but its value is not necessarily known to the learner.
\end{definition}

We first note that our result for PQ learning already implies that \(\calL_1\) sandwiching suffices for tolerant TDS learning due to the following result of \cite{goel2024tolerantalgorithmslearningarbitrary}.

\begin{proposition}[PQ Learning Implies Tolerant TDS Learning \cite{goel2024tolerantalgorithmslearningarbitrary}] Let \(\gamma, \delta \in (0, 1)\). Let \(\calA\) be an algorithm that PQ-learns \(\calC\) with respect to  \(\calD\) up to error \(\gamma\), rejection rate \(\eta\) and failure probability \(\delta\).\footnote{Here, we mean that \(\calA\) outputs some \(h, s \colon \calX \to \{0, 1\}\) such that, with probability at least \(1 - \delta\), it holds \(\pr{(x, y) \sim \Dtest}{h(x) \neq y \land s(x)=1} \leq \gamma\) and \(\pr{x \sim \calD_\calX^\mathrm{train}}{s(x)=0} \leq \eta\).} Then, for any \(\epsilon \in (0, 1)\) and \(\theta \in [0, 1)\), there exists an algorithm that TDS-learns \(\calC\) with respect to \(\calD\) up to error \(\gamma+\eta+\theta+\epsilon\), tolerance \(\theta\) and failure probability \(2\delta\) that
calls \(\calA\) once and uses \(O(\log(1/\delta)/\epsilon^2)\) additional examples and evaluations of the selector given by \(\calA\).
\end{proposition}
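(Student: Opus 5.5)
The tolerant TDS learner will be built by a direct reduction. We run \(\calA\) once to obtain \((h,s)\). We then draw a fresh set \(T\) of \(m = O(\log(1/\delta)/\epsilon^2)\) unlabeled test examples from \(\Dtest_\calX\) and compute the empirical rejection rate \(\hat r = \pr{x\sim\calU(T)}{s(x)=0}\). We accept and output \(h\) if \(\hat r \le \eta+\theta+\epsilon/2\), and reject otherwise. The hypothesis \(h\) itself is used unchanged; only the selector is evaluated on the fresh sample.

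For \emph{completeness}, condition on the PQ guarantee of \(\calA\), which holds with probability at least \(1-\delta\). Then \(\pr{x\sim\Dtrain_\calX}{s(x)=0}\le\eta\). Since \(s\) is a fixed indicator given \(\calA\)'s output, the definition of total variation distance gives
\[
\pr{x\sim\Dtest_\calX}{s(x)=0}\le \eta + d_\mathrm{TV}(\Dtrain_\calX,\Dtest_\calX)\le \eta+\theta .
\]
Hoeffding's inequality on the fresh sample \(T\), which is independent of \(s\), gives \(\hat r\le \pr{\Dtest_\calX}{s=0}+\epsilon/2\) with probability at least \(1-\delta\) once \(m\gtrsim \log(1/\delta)/\epsilon^2\). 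A union bound then shows that we accept with probability at least \(1-2\delta\).

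For \emph{soundness}, suppose we accept. Again by Hoeffding, with probability at least \(1-\delta\) we have \(\pr{\Dtest_\calX}{s=0}\le \hat r+\epsilon/2\le \eta+\theta+\epsilon\). Splitting the test error according to the value of \(s\),
\[
\pr{\Dtest}{h(x)\ne y}\le \pr{\Dtest}{h(x)\ne y\land s(x)=1}+\pr{\Dtest_\calX}{s(x)=0}\le \gamma+\eta+\theta+\epsilon ,
\]
where the first term uses the PQ accuracy guarantee. A union bound over the PQ failure event and the concentration failure event gives overall failure probability at most \(2\delta\).

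The only delicate point is making sure the concentration step is applied to a selector that does not depend on \(T\). We handle this by drawing \(T\) after \(\calA\) has terminated and conditioning on its output, which justifies a single two-sided Hoeffding bound for the fixed indicator \(\indicator{s(x)=0}\). Beyond that, the bookkeeping is routine: the \(\epsilon/2\) slack in the threshold absorbs the estimation error in both directions.
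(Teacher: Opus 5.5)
Your reduction is correct and is essentially the argument the paper relies on. The paper defers this proposition to \cite{goel2024tolerantalgorithmslearningarbitrary}, but its own proof of \Cref{thm:tds} uses the same template: accept iff the selector's empirical rejection rate on a fresh test sample is below a threshold, get completeness from the total-variation bound plus Hoeffding, and get soundness by splitting the test error into the \(s(x)=0\) and \(s(x)=1\) parts. Your threshold \(\eta+\theta+\epsilon/2\), combined with a union bound over the PQ failure event and the concentration event, yields the claimed error \(\gamma+\eta+\theta+\epsilon\) and failure probability \(2\delta\).
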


\begin{corollary} \label{cor:tds_naive}
Let any \(\epsilon, \delta, \eta \in (0, 1)\) and \(\theta \in [0, 1)\). Let \(\calX \subseteq \real^d \) and let \(\calD\) be any \(A\)-hypercontractive distribution on \(\calX\). Let \(\calC \subseteq \{0, 1\}^\calX\) be a concept class and let \(\ell=\calL_1(\calC, \calD, \epsilon\eta/9)\). Then, there exists a TDS learner for \(\calC\) with respect to \(\calD\) up to error \(\lambda_\mathrm{test} + (\lambda_\mathrm{train} + \mathsf{opt}_\mathrm{train})/\eta +\eta + \theta + \epsilon\), tolerance \(\theta\) and failure probability \(\delta\) with sample complexity and runtime \(\mathrm{poly} ((d+1)^\ell, A^\ell, 1/\epsilon, 1/\eta, \log(1/\delta)^\ell )\).
\end{corollary}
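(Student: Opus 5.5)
The plan is to compose the PQ learner of \Cref{thm:pq_} with the PQ-to-tolerant-TDS reduction of \cite{goel2024tolerantalgorithmslearningarbitrary} stated just above.

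First, I run the PQ learner of \Cref{thm:pq_} with parameters $(\epsilon/2, \delta/2, \eta)$. Its degree is $\calL_1(\calC,\calD,\epsilon\eta/16)$. To match the stated degree $\ell=\calL_1(\calC,\calD,\epsilon\eta/9)$, I will instead split the error budget unevenly. Concretely, I give the PQ learner accuracy parameter $\epsilon_1=8\epsilon/9$, so that its degree is $\calL_1(\calC,\calD,\epsilon_1\eta/8)=\calL_1(\calC,\calD,\epsilon\eta/9)=\ell$. The remaining $\epsilon/9$ goes to the reduction.

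With probability at least $1-\delta/2$, this learner outputs $(h,s)$ with two guarantees:
\begin{itemize}
\item $\pr{(x,y)\sim\Dtest}{h(x)\neq y\land s(x)=1}\le \gamma:=\lambda_\mathrm{test}+(\lambda_\mathrm{train}+\mathsf{opt}_\mathrm{train})/\eta+8\epsilon/9$;
\item $\pr{x\sim\calD}{s(x)=0}\le\eta$.
\end{itemize}
Its runtime and sample complexity are $\mathrm{poly}((d+1)^\ell,A^\ell,1/\epsilon,1/\eta,\log(1/\delta)^\ell)$.

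Next, I invoke the reduction proposition with this $\gamma$, rejection rate $\eta$, additive slack $\epsilon/9$, tolerance $\theta$, and failure probability $\delta/2$. Its overall failure probability is therefore $\delta$. The resulting TDS learner has error $\gamma+\eta+\theta+\epsilon/9=\lambda_\mathrm{test}+(\lambda_\mathrm{train}+\mathsf{opt}_\mathrm{train})/\eta+\eta+\theta+\epsilon$, exactly as claimed.

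The reduction adds only $O(\log(1/\delta)/\epsilon^2)$ samples and the same number of selector evaluations. By \Cref{thm:icf_}, each evaluation of $s$ costs polynomial time in the same parameters, so the overall complexity bound is preserved.

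For intuition, I also recall why the reduction works. The TDS learner estimates $\pr{\Dtest_\calX}{s=0}$ on fresh samples and accepts when the estimate is at most $\eta+\theta+\epsilon/18$.
\begin{itemize}
\item Completeness: if $d_\mathrm{TV}\le\theta$, the true rejection mass under $\Dtest_\calX$ is at most $\eta+\theta$, so the learner accepts with high probability by a Hoeffding bound.
\item Soundness: the test error of $h$ is at most the error on selected points, $\gamma$, plus the rejected test mass. Upon acceptance, that mass is at most $\eta+\theta+\epsilon/9$.
\end{itemize}
Since the proposition is assumed, this serves only as a sanity check.

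The only delicate point is the bookkeeping of constants. The error parameter fed to \Cref{thm:pq_} must produce exactly the sandwiching accuracy $\epsilon\eta/9$. The leftover slack, together with the failure probabilities, must then sum to $\epsilon$ and $\delta$. The split $8\epsilon/9$ plus $\epsilon/9$ accomplishes this, and the two failure probabilities of $\delta/2$ each sum to $\delta$. Beyond this, there is no substantive obstacle.
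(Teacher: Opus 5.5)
Your proposal is correct and is the argument the paper intends: the paper states this corollary as an immediate consequence of composing \Cref{thm:pq_} with the PQ-to-tolerant-TDS reduction of \cite{goel2024tolerantalgorithmslearningarbitrary}, without writing out a separate proof. Your bookkeeping recovers the stated parameters exactly: you run the PQ learner with accuracy $8\epsilon/9$ and failure probability $\delta/2$, which gives degree $\calL_1(\calC,\calD,\epsilon\eta/9)$, and you give the reduction slack $\epsilon/9$, with its failure probability doubling to $\delta$.
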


Notice that setting \(\eta = \theta + \epsilon\) in \Cref{cor:tds_naive}, yields a TDS learner with error \(\lambda_\mathrm{test} + \frac{\lambda_\mathrm{train} + \mathsf{opt}_\mathrm{train}}{\theta +\epsilon} + 2\theta + 2\epsilon\) and tolerance \(\theta\). To obtain an error bound with improved dependence in \(\lambda_\mathrm{train} + \mathsf{opt}_\mathrm{train}\), we provide a TDS learner that does not use a PQ learner as a black box, but directly invokes the ICF algorithm.

Our TDS learner (see \Cref{thm:tds} for a detailed description) first learns a classifier \(\hat f\) via \(\calL_1\) polynomial regression on a sample from \(\Dtrain\), then runs ICF using \(\hat f\) and unlabeled data from \(\Dtrain_\calX\) and \(\Dtest_\calX\) as input to obtain a selector and, finally accepts or rejects depending on the selector's empirical rejection rate on unlabeled data from \(\Dtest_\calX\). A key technical ingredient in the analysis of our algorithm is the improved bound of \Cref{thm:icf_} on the rejection rate of the selector, which upper bounds the rejection rate by a multiple of the total variation distance between the training and test marginals.

Finally, we remark that our tolerant TDS learner is similar in spirit to the tolerant TDS
learner of \cite{goel2024tolerantalgorithmslearningarbitrary}. The difference is that \cite{goel2024tolerantalgorithmslearningarbitrary} use \(\calL_2\) polynomial regression to learn the classifier \(\hat f\). Second, instead of ICF, their algorithm uses their spectral outlier removal method (which enforces their \(\calL_2\) sandwiching assumptions) to obtain a selector.

\begin{theorem}[Tolerant TDS Learning via \(\calL_1\) Sandwiching] \label{thm:tds} Let any \(\epsilon, \delta \in (0, 1)\), \(\theta \in [0, 1)\), \(R > 1\) and \(A \geq 1\). Let \(\calX \subseteq \real^d \) and let \(\calD\) be any \(A\)-hypercontractive distribution on \(\calX\). Let \(\calC \subseteq \{0, 1\}^\calX\) be a concept class and let \(\ell = \calL_1(\calC, \calD, \epsilon/(16R))\). Then, there is an algorithm that TDS-learns \(\calC\) with respect to \(\calD\) up to error \(\lambda_\mathrm{test} + R(\lambda_\mathrm{train} + \mathsf{opt}_\mathrm{train}) + R(R-1)^{-1}\theta + \epsilon\), tolerance \(\theta\) and failure probability \(\delta\) in time \(\mathrm{poly} ((d+1)^\ell, A^\ell, R(R-1)^{-1}/\epsilon, \log(1/\delta)^\ell )\).
\end{theorem}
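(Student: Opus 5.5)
The algorithm has three stages. First, it draws a labeled sample from \(\Dtrain\) and runs degree-\(\ell\) \(\calL_1\) polynomial regression (\Cref{thm:l1_regression}) to get \(\hat f\) with \(\errtrain(\hat f) \leq \mathsf{opt}_\mathrm{train} + \epsilon/(4R)\). Second, it draws unlabeled samples \(S\sim\calD\) and \(S'\sim\Dtest_\calX\) and runs \Cref{alg:icf_} with \(\calF=\{\hat f,\bar{\hat f}\}\), slack \(R'\) slightly above \(R\) (e.g.\ \(R' = R + \epsilon'\)), variance bound \(\beta = 4(2A)^{2\ell}\), and additive error \(\epsilon' = c\,\epsilon\min\{1/R,(R-1)/R\}\) for a small constant \(c\). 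This produces a selector \(s\). Third, on a fresh test sample of size \(O(\log(1/\delta)/\epsilon^2)\) it estimates \(\hat r \approx \pr{x\sim\calD'}{s(x)=0}\). It accepts and outputs \(h=\hat f\) iff \(\hat r \leq R(R-1)^{-1}\theta + \epsilon/4\); otherwise it rejects.

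\emph{Completeness.} Use the second bound of \Cref{thm:icf_}(\ref{item:selector2_}), \(\pr{\calD}{s=0}\leq (d_\mathrm{TV}+\epsilon')/(R'-1)\). Since \(\pr{\calD'}{s=0}\leq \pr{\calD}{s=0}+d_\mathrm{TV}\), when \(d_\mathrm{TV}\leq\theta\) this gives
\[
\pr{\calD'}{s=0}\leq \theta\Bigl(1+\tfrac{1}{R'-1}\Bigr) + \tfrac{\epsilon'}{R'-1} = \tfrac{R'}{R'-1}\theta + \tfrac{\epsilon'}{R'-1}.
\]
Because \(R'\geq R\), we have \(R'/(R'-1)\leq R/(R-1)\). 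With \(\epsilon'\lesssim \epsilon(R-1)\) and the estimation error, the empirical rate stays below the threshold, so the algorithm accepts with probability \(1-\delta\). The polynomial dependence on \(R(R-1)^{-1}/\epsilon\) in the runtime comes from \(\epsilon'\) scaling with \((R-1)/R\).

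\emph{Soundness.} Fix \(f^\star\in\calC\) attaining \(\lambda_\mathrm{train},\lambda_\mathrm{test}\), and take its sandwiching polynomials with error \(\epsilon/(16R)\). Exactly as in the proof of \Cref{thm:pq_}, hypercontractivity gives \(\E_\calD[\pup^2],\E_\calD[\bar p_\mathrm{down}^2]<\beta\). \Cref{thm:icf_}(\ref{item:selector1_}) applied to \(p_1=\bar p_\mathrm{down}\) (paired with \(\hat f\)) and \(p_2=\pup\) (paired with \(\bar{\hat f}\)) then gives both hypotheses of \Cref{lemma:chow_transfer} with multiplier \(R\) and additive slack \(\epsilon/4\); the excess \((R'+\epsilon')(1+\epsilon')-R\) times the Chow expectations, which are at most \(2\), is absorbed into \(\epsilon'\). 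So
\[
\pr{\Dtest}{\hat f\neq y\wedge s=1}\leq \lambda_\mathrm{test}+R(\lambda_\mathrm{train}+\errtrain(\hat f))+\epsilon/2 .
\]
The total test error is at most this plus \(\pr{\calD'}{s=0}\). Upon acceptance, the latter is at most \(R(R-1)^{-1}\theta+\epsilon/4+\) estimation error. Adding up yields the claimed error \(\lambda_\mathrm{test}+R(\lambda_\mathrm{train}+\mathsf{opt}_\mathrm{train})+R(R-1)^{-1}\theta+\epsilon\).

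\emph{Main obstacle.} The delicate step is the bookkeeping of constants. The multiplicative slack \(R'\) must be large enough to produce the factor \(R/(R-1)\) in completeness, yet its \(\epsilon'\)-inflation must stay within the additive \(\epsilon\) budget in soundness. Choosing \(\epsilon'\) proportional to \(\epsilon\min\{1/R,(R-1)/R\}\) should handle both. A further subtlety is that soundness must hold whenever the algorithm accepts, regardless of \(d_\mathrm{TV}\). This works because the test-error bound uses only the acceptance threshold together with the Chow-transfer guarantee, and that guarantee holds unconditionally with probability \(1-\delta\).
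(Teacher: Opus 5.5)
Your proposal is correct and follows the paper's proof essentially step for step. The paper also learns $\hat f$ by $\calL_1$ regression, runs ICF on $\{\hat f,\bar{\hat f}\}$, and accepts iff the rejection rate on a fresh test sample is below $R(R-1)^{-1}\theta+\epsilon/4$. Soundness then comes from \Cref{lemma:chow_transfer} plus the acceptance threshold, and completeness from the $d_\mathrm{TV}$ branch of \Cref{thm:icf_}.

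The differences are cosmetic. The paper runs ICF with slack exactly $R$ (no need for $R'=R+\epsilon'$) and $\epsilon'=(R-1)\epsilon/(128R^2)$. Your constants also need one fix: invoke \Cref{lemma:chow_transfer} at scale $\epsilon/4$, i.e.\ slack $\epsilon/16$, which gives $+\epsilon/4$ rather than your $+\epsilon/2$. Then the transfer, regression, threshold and Hoeffding terms each contribute $\epsilon/4$ and sum to exactly $\epsilon$.
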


\begin{remark} \label{remark:tds}
    Setting \(R = 1 + \max\{ \sqrt{\theta/2}, \epsilon/9\}\) above, we get that for any class \(\calC\), there exists an algorithm that TDS-learns \(\calC\) with respect to \(\calD\) up to error \(\lambda + \mathsf{opt}_\mathrm{train} + 4\sqrt{\theta} + \epsilon\), tolerance \(\theta\) and failure probability \(\delta\) in time \(\mathrm{poly} ((d+1)^\ell, A^\ell, 1/\epsilon, \log(1/\delta)^\ell )\), where \(\ell = \calL_1(\calC, \calD, \epsilon/64)\).
\end{remark}

By combining \Cref{remark:tds} with existing sandwiching degree bounds from prior work (see \Cref{appendix:sandwiching-bounds}), we obtain the following corollaries.

\begin{corollary} Let any \(\epsilon, \delta \in (0, 1)\) and \(\theta \in [0, 1)\). Let \(\calC\) be the class of depth-\(t\), size-\(s\) \(\mathsf{AC}^0\) circuits in \(\{\pm 1\}^d\) and let \(\ell = O(\log(s))^{O(t)}\log(1/(\epsilon))\). Then, there is an algorithm that TDS-learns \(\calC\) with respect to \(\calU(\{\pm1\}^d)\) up to error \(\lambda + \mathsf{opt}_\mathrm{train} + 4\sqrt{\theta} + \epsilon\), tolerance \(\theta\) and failure probability \(\delta\) in time \(\mathrm{poly} ((d+1)^\ell, 1/\epsilon, \log(1/\delta)^\ell )\).
\end{corollary}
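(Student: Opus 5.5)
The corollary follows by plugging the $\calL_1$ sandwiching bound for $\mathsf{AC}^0$ circuits into \Cref{remark:tds}, which is itself \Cref{thm:tds} with a particular choice of $R$. The work is parameter bookkeeping; there is no new analysis.

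First, verify the hypotheses of \Cref{thm:tds} for $\calD=\calU(\{\pm1\}^d)$. The uniform distribution on the hypercube is a product distribution, so it is $A$-hypercontractive for a universal constant $A$ (Bonami--Beckner hypercontractivity, as noted after \Cref{def:hypercontractivity}). Degree-$1$ monomials are bounded, so they trivially have finite expectation. Since $A=O(1)$, the factor $A^\ell$ in the runtime is $2^{O(\ell)}\le \mathrm{poly}((d+1)^\ell)$ and can be absorbed.

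Second, bound the degree $\ell=\calL_1(\calC,\calU(\{\pm1\}^d),\epsilon/64)$ that \Cref{remark:tds} requires. By \Cref{fact:l1_degree_AC0}, with accuracy $\epsilon/64$ in place of $\epsilon$,
\[
\calL_1(\calC,\calU(\{\pm1\}^d),\epsilon/64)\le O(\log(s))^{O(t)}\log(64/\epsilon)=O(\log(s))^{O(t)}\log(1/\epsilon).
\]
The additive constant $\log 64$ is absorbed for $\epsilon<1$, since $\log(64/\epsilon)\le 7\log(2/\epsilon)$ and constants fold into the $O(\cdot)$; if needed, one can run with $\epsilon/2$. Because $\calL_1$ is defined as a minimum and the theorem's runtime is monotone in $\ell$, using this upper bound as the degree only makes the guarantees weaker, so it is legitimate. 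In particular, the sandwiching polynomials of degree at most the true $\calL_1$ value also have degree at most this $\ell$.

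Third, invoke \Cref{remark:tds}, i.e.\ \Cref{thm:tds} with $R=1+\max\{\sqrt{\theta/2},\epsilon/9\}$. This yields a TDS learner with tolerance $\theta$, failure probability $\delta$, error $\lambda+\mathsf{opt}_\mathrm{train}+4\sqrt{\theta}+\epsilon$, and runtime $\mathrm{poly}((d+1)^\ell,A^\ell,1/\epsilon,\log(1/\delta)^\ell)$. With $A=O(1)$ this reduces to the stated $\mathrm{poly}((d+1)^\ell,1/\epsilon,\log(1/\delta)^\ell)$.

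The only step needing care is the constant bookkeeping in the degree, making sure the $\epsilon/64$ accuracy and $A^\ell$ are absorbed. If one wanted to rederive the remark itself, the check would be that $R(R-1)^{-1}\theta\le \theta+\sqrt{2\theta}$ and that $R\lambda_\mathrm{train}$, $R\,\mathsf{opt}_\mathrm{train}$ exceed their $R=1$ values by at most $(R-1)\cdot 2\le 2\sqrt{\theta/2}+2\epsilon/9$. Those together give the $4\sqrt{\theta}+\epsilon$ term, but the remark can simply be taken as given.
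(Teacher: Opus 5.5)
Your proposal is correct and is exactly the paper's argument. The paper obtains the corollary by plugging the $\calL_1$ sandwiching bound of \Cref{fact:l1_degree_AC0} (at accuracy $\epsilon/64$) into \Cref{remark:tds}; since the uniform distribution on $\{\pm1\}^d$ is $O(1)$-hypercontractive, $A^\ell$ is absorbed into $\mathrm{poly}((d+1)^\ell)$. One small quibble on the degree bookkeeping: running with $\epsilon/2$ does not fix the degeneracy of $\log(1/\epsilon)$ as $\epsilon\to 1$. The $\log(1/\epsilon)$ in the statement, as in the cited fact, should simply be read as $\max\{1,\log(1/\epsilon)\}$, which absorbs the $\log 64$.
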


\begin{corollary} Let any \(\epsilon, \delta \in (0, 1)\) and \(\theta \in [0, 1)\). Let \(\calC\) be the class of depth-\(t\), size-\(s\) decision trees of halfspaces in \(\real^d\) and let \(\ell = \tilde O(t^4s^2/\epsilon^2)\). Then, there is an algorithm that TDS-learns \(\calC\) with respect to \(\calN_d\) (resp. \(\calU(\{\pm 1\}^d)\)) up to error \(\lambda + \mathsf{opt}_\mathrm{train} + 4\sqrt{\theta} + \epsilon\), tolerance \(\theta\) and failure probability \(\delta\) in time \(\mathrm{poly} ((d+1)^\ell, 1/\epsilon, \log(1/\delta)^\ell )\).
\end{corollary}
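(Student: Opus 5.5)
The corollary follows by instantiating \Cref{remark:tds}, which itself is \Cref{thm:tds} with $R = 1 + \max\{\sqrt{\theta/2}, \epsilon/9\}$, for the specific class and training marginals. Concretely, \Cref{remark:tds} states the following. For any concept class $\calC$ and any $A$-hypercontractive $\calD$, there is a TDS learner up to error $\lambda + \mathsf{opt}_\mathrm{train} + 4\sqrt{\theta} + \epsilon$, with tolerance $\theta$ and failure probability $\delta$. Its running time is $\mathrm{poly}((d+1)^\ell, A^\ell, 1/\epsilon, \log(1/\delta)^\ell)$, where $\ell = \calL_1(\calC, \calD, \epsilon/64)$. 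So only two quantities need to be controlled: the hypercontractivity constant $A$ of the training marginal, and the sandwiching degree $\ell$.

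\emph{Hypercontractivity.} Both $\calN_d$ and $\calU(\{\pm 1\}^d)$ are $A$-hypercontractive in the sense of \Cref{def:hypercontractivity} for a universal constant $A$. For the hypercube this is Bonami's hypercontractive inequality for product distributions. For the Gaussian it follows from Gaussian hypercontractivity, or from the log-concave result of \cite{bobkov2001some}. In each case one first gets the $\calL_t$ versus $\calL_2$ comparison $\|p\|_t \le (t-1)^{\deg(p)/2}\|p\|_2$, and then passes from $\calL_2$ to $\calL_1$ by a standard log-convexity (Hölder) interpolation argument. This interpolation is the only place where the constant $A$ degrades, and it stays a universal constant. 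Finite first moments of degree-$1$ monomials are immediate in both cases. Consequently $A^\ell = 2^{O(\ell)} \le (d+1)^{O(\ell)}$, and the $A^\ell$ factor is absorbed into $\mathrm{poly}((d+1)^\ell)$.

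\emph{Sandwiching degree.} By \Cref{fact:l1_degree_dt}, depth-$t$, size-$s$ decision trees of halfspaces satisfy $\calL_1(\calC, \calN_d, \epsilon') \le \tilde O(t^4 s^2/\epsilon'^2)$, and the same bound holds under $\calU(\{\pm1\}^d)$. Plugging in $\epsilon' = \epsilon/64$ gives $\ell = \tilde O(t^4 s^2/\epsilon^2)$, since the constant $64$ is absorbed. Substituting this $\ell$ into \Cref{remark:tds} yields the claimed error, tolerance, failure probability and running time $\mathrm{poly}((d+1)^\ell, 1/\epsilon, \log(1/\delta)^\ell)$.

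The only step that needs any care is the hypercontractivity constant, i.e., checking that the paper's $\calL_1$-based definition holds with an absolute $A$. This is standard, so the corollary is essentially a direct substitution.
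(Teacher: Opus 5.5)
Your proposal is correct and follows the paper's route exactly: you plug the sandwiching bound of \Cref{fact:l1_degree_dt}, taken with error $\epsilon/64$, into \Cref{remark:tds}. Your explicit check that $\calN_d$ and $\calU(\{\pm1\}^d)$ are hypercontractive with an absolute constant $A$, so that $A^\ell$ is absorbed into $\mathrm{poly}((d+1)^\ell)$, fills in a step the paper only asserts in its preliminaries.
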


\begin{corollary} Let any \(\epsilon, \delta \in (0, 1)\) and \(\theta \in [0, 1)\). Let \(\calC\) be the class of degree-\(2\) PTFs in \(\real^d\) and let \(\ell = \tilde O(1/\epsilon^8)\) (resp. \(\ell = \tilde O(1/\epsilon^9)\)). Then, there is an algorithm that TDS-learns \(\calC\) with respect to \(\calN_d\) (resp. \(\calU(\{\pm1\}^d)\)) up to error \(\lambda + \mathsf{opt}_\mathrm{train} + 4\sqrt{\theta} + \epsilon\), tolerance \(\theta\) and failure probability \(\delta\) in time \(\mathrm{poly} ((d+1)^\ell, 1/\epsilon, \log(1/\delta)^\ell )\).
\end{corollary}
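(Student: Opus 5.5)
The corollary is a direct instantiation of \Cref{remark:tds}, so the plan is to verify its hypotheses for degree-\(2\) PTFs and then compute the resulting degree parameter.

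\textbf{Step 1 (hypercontractivity).} Both training marginals are standard examples of hypercontractive distributions, with an absolute constant \(A\) (see the discussion after \Cref{def:hypercontractivity}):
\begin{itemize}
\item \(\calN_d\) is log-concave, hence \(A\)-hypercontractive;
\item \(\calU(\{\pm1\}^d)\) is a product distribution, hence \(A\)-hypercontractive.
\end{itemize}
Since \(A = O(1)\), the factor \(A^\ell\) in the runtime of \Cref{remark:tds} is \(2^{O(\ell)} \le (d+1)^{O(\ell)}\). It is therefore absorbed into \(\mathrm{poly}((d+1)^\ell)\).

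\textbf{Step 2 (the class).} PTFs are defined with \(\{\pm 1\}\) outputs, but \Cref{remark:tds} needs a class in \(\{0,1\}^\calX\). I identify each PTF \(\mathrm{sign}(p)\) with the \(\{0,1\}\)-valued function \(\indicator{p(x)\ge 0}\). This relabeling is an affine change of the output, so it preserves sandwiching degrees; the sandwiching error changes by at most a factor of \(2\), which is absorbed in the \(\tilde O\).

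\textbf{Step 3 (degree bound).} Apply \Cref{fact:l1_degree_degree2ptf} at accuracy \(\epsilon/64\), which is the accuracy required by \Cref{remark:tds}. This gives:
\begin{itemize}
\item \(\calL_1(\calC,\calN_d,\epsilon/64) \le \tilde O((64/\epsilon)^8) = \tilde O(1/\epsilon^8)\);
\item \(\calL_1(\calC,\calU(\{\pm1\}^d),\epsilon/64) \le \tilde O(1/\epsilon^9)\).
\end{itemize}
The runtime bound of \Cref{remark:tds} is monotone in \(\ell\). So it suffices that the chosen \(\ell\) is at least the true sandwiching degree, and we may use the upper bound for \(\ell\) directly.

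\textbf{Step 4 (conclusion).} Invoke \Cref{remark:tds} with this \(\ell\). It yields tolerant TDS learning up to error \(\lambda + \mathsf{opt}_\mathrm{train} + 4\sqrt{\theta} + \epsilon\), tolerance \(\theta\), failure probability \(\delta\), and time \(\mathrm{poly}((d+1)^\ell, 1/\epsilon, \log(1/\delta)^\ell)\).

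There is no real obstacle here. The only points needing care are the constant-factor adjustments in Steps 1 and 2: confirming that \(A\) is an absolute constant for both marginals, and that rescaling \(\epsilon\) and switching the output convention stay inside the \(\tilde O(\cdot)\).
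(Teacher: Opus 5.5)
Your proposal is correct and follows the paper's argument: it instantiates \Cref{remark:tds} with the degree-$2$ PTF sandwiching bound of \Cref{fact:l1_degree_degree2ptf} at accuracy $\epsilon/64$. Your additional bookkeeping fills in steps the paper leaves implicit, namely absorbing $A^\ell$ via the absolute hypercontractivity constants of $\calN_d$ and $\calU(\{\pm 1\}^d)$, the $\{\pm 1\}$-to-$\{0,1\}$ relabeling, and running with any degree at least the true sandwiching degree.
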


\begin{corollary} Let any \(\epsilon, \delta \in (0, 1)\) and \(\theta \in [0, 1)\). Let \(\calC\) be the class of degree-\(k\) PTFs in \(\real^d\) and let \(\ell = O_k(\epsilon^{-4k \cdot 7^k})\). Then, there is an algorithm that TDS-learns \(\calC\) with respect to \(\calN_d\) up to error \(\lambda + \mathsf{opt}_\mathrm{train} + 4\sqrt{\theta} + \epsilon\), tolerance \(\theta\) and failure probability \(\delta\) in time \(\mathrm{poly} ((d+1)^\ell, 1/\epsilon, \log(1/\delta)^\ell )\).
\end{corollary}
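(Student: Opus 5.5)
The plan is to obtain this corollary directly from \Cref{remark:tds}, exactly as the preceding corollaries are obtained. \Cref{remark:tds} applies to any concept class $\calC$ and any $A$-hypercontractive training marginal $\calD$. It yields a tolerant TDS learner up to error $\lambda + \mathsf{opt}_\mathrm{train} + 4\sqrt{\theta} + \epsilon$, with tolerance $\theta$ and failure probability $\delta$. Its running time is $\mathrm{poly}((d+1)^{\ell'}, A^{\ell'}, 1/\epsilon, \log(1/\delta)^{\ell'})$, where $\ell' = \calL_1(\calC, \calD, \epsilon/64)$. So I need two things: that $\calN_d$ is $A$-hypercontractive for a universal constant $A$, and a bound on $\ell'$ for degree-$k$ PTFs.

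For hypercontractivity, I will use that $\calN_d$ is log-concave. The paper already records, citing \cite{bobkov2001some}, that every log-concave distribution satisfies \Cref{def:hypercontractivity} with some universal constant $A$. Concretely, this combines the Gaussian hypercontractive (Bonami--Nelson) moment bound $\ev{}{\abs{p}^t}^{1/t} \le (t-1)^{\deg(p)/2}\ev{}{p^2}^{1/2}$ with the reverse comparison $\ev{}{p^2}^{1/2} \le C^{\deg(p)}\ev{}{\abs{p}}$. Degree-one monomials clearly have finite absolute moments. With $A = O(1)$, the factor $A^{\ell'}$ is at most $(d+1)^{O(\ell')}$ and is absorbed into the polynomial bound.

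For the degree, I will invoke \Cref{fact:l1_degree_degreekptf} with approximation parameter $\epsilon/64$. This gives $\ell' \le O_k((\epsilon/64)^{-4k\cdot 7^k}) = 64^{4k\cdot 7^k}\, O_k(\epsilon^{-4k\cdot 7^k})$. The factor $64^{4k\cdot 7^k}$ depends only on $k$, so it is absorbed into $O_k(\cdot)$. Hence $\ell' \le \ell$ for $\ell = O_k(\epsilon^{-4k\cdot 7^k})$ with a suitable choice of the hidden function of $k$.

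Finally, the runtime bound of \Cref{remark:tds} is monotone in $\ell'$, so replacing $\ell'$ by the larger $\ell$ gives runtime $\mathrm{poly}((d+1)^\ell, 1/\epsilon, \log(1/\delta)^\ell)$ together with the stated error and tolerance. The only step needing any care is the constant-$A$ hypercontractivity of the Gaussian with respect to the $\calL_1$ norm on the right-hand side of \Cref{def:hypercontractivity}. This is standard and is already asserted in the paper for log-concave distributions, so I expect no real obstacle.
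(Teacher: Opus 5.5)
Your proposal is correct and follows the paper's route: the paper obtains this corollary by plugging the degree-$k$ PTF sandwiching bound of \Cref{fact:l1_degree_degreekptf} at accuracy $\epsilon/64$ into \Cref{remark:tds}, with the $64^{4k\cdot 7^k}$ factor absorbed into $O_k(\cdot)$ and the $A^{\ell}$ factor absorbed because $\calN_d$ is $O(1)$-hypercontractive. Your explicit justification of Gaussian hypercontractivity (Bonami--Nelson combined with the reverse $\calL_2$--$\calL_1$ comparison $\ev{}{p^2}^{1/2}\le C^{\deg(p)}\ev{}{\abs{p}}$) correctly fills in a step the paper only cites.
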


\begin{corollary} Let any \(\epsilon, \delta \in (0, 1)\) and \(\theta \in [0, 1)\). Let \(\calC\) be the class of arbitrary functions of \(k\) halfspaces in \(\real^d\), let \(\calD\) be any isotropic log-concave distribution on \(\real^d\) and let \(\ell = \exp((\log(\log(k)/\epsilon))^{O(k)}/\epsilon^4)\). Then, there is an algorithm that TDS-learns \(\calC\) with respect to \(\calD\) up to error \(\lambda + \mathsf{opt}_\mathrm{train} + 4\sqrt{\theta} + \epsilon\), tolerance \(\theta\) and failure probability \(\delta\) in time \(\mathrm{poly} ((d+1)^\ell, 1/\epsilon, \log(1/\delta)^\ell )\).
\end{corollary}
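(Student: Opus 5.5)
The plan is to obtain the corollary by instantiating \Cref{remark:tds} with the class $\calC$ of arbitrary functions of $k$ halfspaces and $\calD$ an isotropic log-concave distribution. The degree bound then comes from \Cref{fact:l1_degree_function_of_ltfs}. So the work splits into three parts: check the hypotheses of \Cref{thm:tds}, justify the choice of $R$ made in \Cref{remark:tds}, and substitute the sandwiching degree.

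\emph{Hypotheses.} By \cite{bobkov2001some}, every log-concave distribution on $\real^d$ is $A$-hypercontractive with respect to polynomials for a universal constant $A$. Isotropy gives finite first moments of the coordinates, so the second condition of \Cref{def:hypercontractivity} also holds. Since $A = O(1)$, the factor $A^\ell$ in the runtime is absorbed into $\mathrm{poly}((d+1)^\ell)$.

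\emph{Choice of $R$.} I would verify \Cref{remark:tds} directly from \Cref{thm:tds}, run with accuracy $\epsilon/2$ in place of $\epsilon$, and with $R = 1+\max\{\sqrt{\theta/2},\epsilon/9\} \le 2$. The error of \Cref{thm:tds} is
\[
\lambda_\mathrm{test} + R(\lambda_\mathrm{train}+\mathsf{opt}_\mathrm{train}) + \tfrac{R}{R-1}\theta + \epsilon/2 .
\]
\begin{itemize}
\item For the middle term, write $R(\lambda_\mathrm{train}+\mathsf{opt}_\mathrm{train}) = (\lambda_\mathrm{train}+\mathsf{opt}_\mathrm{train}) + (R-1)(\lambda_\mathrm{train}+\mathsf{opt}_\mathrm{train})$. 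Since $\lambda_\mathrm{train}+\mathsf{opt}_\mathrm{train}\le 2$, the excess is at most $2(R-1) \le \sqrt{2\theta} + 2\epsilon/9$.
\item For the tolerance term, $\tfrac{R}{R-1}\theta = \theta + \theta/(R-1)$. Because $R-1 \ge \sqrt{\theta/2}$, we get $\theta/(R-1) \le \sqrt{2\theta}$. Since $\theta \le \sqrt{\theta}$ for $\theta\in[0,1)$, the whole term is at most $(1+\sqrt2)\sqrt\theta$.
\item Adding up, and using $\lambda_\mathrm{test}+\lambda_\mathrm{train}=\lambda$, the total error is at most
\[
\lambda + \mathsf{opt}_\mathrm{train} + (1+2\sqrt2)\sqrt\theta + \epsilon/2 + 2\epsilon/9 \;\le\; \lambda+\mathsf{opt}_\mathrm{train}+4\sqrt\theta+\epsilon .
\]
\end{itemize}
On the resource side, $R/(R-1) \le 18/\epsilon$, so the runtime of \Cref{thm:tds} is $\mathrm{poly}((d+1)^\ell, 1/\epsilon, \log(1/\delta)^\ell)$. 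Moreover, $\ell = \calL_1(\calC,\calD,\epsilon/(32R))$ with $R\le 2$, and this is at most $\calL_1(\calC,\calD,\epsilon/64)$ because $\calL_1$ is monotone in the accuracy parameter.

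\emph{Degree bound.} Finally, \Cref{fact:l1_degree_function_of_ltfs} applied at accuracy $\epsilon/64$ gives
\[
\ell \le \exp\bigl((\log(64\log(k)/\epsilon))^{O(k)}/(\epsilon/64)^4\bigr) = \exp\bigl((\log(\log(k)/\epsilon))^{O(k)}/\epsilon^4\bigr),
\]
where the constant $64$ is absorbed into the $O(k)$ exponent and the leading constants. Substituting this $\ell$ into the runtime bound above yields the corollary.

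The only nontrivial step is the case analysis behind the choice of $R$, namely controlling $\theta/(R-1)$ and $(R-1)(\lambda_\mathrm{train}+\mathsf{opt}_\mathrm{train})$ simultaneously; the rest is bookkeeping of constants.
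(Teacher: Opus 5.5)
Your proposal is correct and follows the paper's route: the corollary is \Cref{remark:tds} (an instantiation of \Cref{thm:tds}) combined with the sandwiching-degree bound of \Cref{fact:l1_degree_function_of_ltfs}, using that log-concave distributions are hypercontractive with a universal constant. Your explicit check of the choice $R = 1+\max\{\sqrt{\theta/2},\epsilon/9\}$ runs \Cref{thm:tds} at accuracy $\epsilon/2$, which is what produces the degree parameter $\epsilon/64$, and it correctly supplies the bookkeeping the paper leaves implicit.
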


We will now prove \Cref{thm:tds}.

\begin{proof}[Proof of \Cref{thm:tds}]
    Let \(\calD^\mathrm{train}\) be any training distribution on \(\calX \times \{0, 1\}\) whose marginal on \(\calX\) is \(\calD\). Let \(\calD^\mathrm{test}\) be any test distribution on \(\calX \times \{0, 1\}\) and let \(\calD'\) denote its marginal on \(\calX\). Let \(\mathsf{opt}_\mathrm{train} = \inf_{f \in \calC}\pr{(x,y) \sim \calD^\mathrm{train}}{f(x) \neq y}\). The algorithm does the following:
    \begin{enumerate}
        \item It draws a set \(S_\mathrm{train}\) of \(\mathrm{poly} \left((d+1)^\ell, R/\epsilon, \log(1/\delta) \right)\) i.i.d. examples from \(\calD^\mathrm{train}\) and a set \(X_\mathrm{test}\) of \(O(\log(1/\delta)/\epsilon^2)\) i.i.d. examples from \(\calD'\).
        \item It runs the degree-\(\ell\) \(\calL_1\) polynomial regression algorithm of \Cref{thm:l1_regression} with input \(S_\mathrm{train}\) to get a classifier \(\hat f \colon \real^d \to \{0, 1\}\).
        \item It draws sets \(S, S'\) of \(\mathrm{poly} ((d+1)^\ell, A^\ell, 1/\epsilon, R(R-1)^{-1}, \log(1/\delta)^\ell )\) i.i.d. examples from \(\calD\) and \(\calD'\) respectively.
        \item It runs \Cref{alg:icf} with input \(\hat f, S, S', \ell, R, \beta= 4(2A)^{2\ell}, \epsilon'=(R-1)\epsilon/(128R^2)\) to get a selector \(s \colon \calX \to \{0, 1\}\).
        \item If \(\pr{x \sim \calU(X_\mathrm{test})}{s(x)=0} \geq R(R-1)^{-1}\theta + \epsilon/4\) it returns \(\mathsf{REJECT}\), otherwise it returns \((\mathsf{ACCEPT}, \hat f)\).
    \end{enumerate}
    
    \paragraph{Proof of Soundness.} We assume that the algorithm accepts. So, from Hoeffding's inequality, since \( \abs{ X_\mathrm{test} } \gtrsim \log(1/\delta)/\epsilon^2 \), with probability at least \(1 - \delta/3\), it holds
    \[
    \pr{x \sim \calD'}{s(x) = 0} \leq
    \pr{x \sim \calU(X_\mathrm{test})}{s(x)=0} + \epsilon/4 \leq \dfrac{R\theta}{R-1} + \epsilon/2 \, .
    \]
    
    Moreover, since \(\abs{S_\mathrm{train}}\) is sufficiently large, by \Cref{thm:l1_regression}, we can ensure that with probability at least \(1 - \delta/3\), it holds \(\pr{(x, y) \sim \calD^\mathrm{train}}{\hat f(x) \neq y} \leq \mathsf{opt}_\mathrm{train} + \epsilon/(4R)\).

    Now, let any \(f^\star \in \calC\) such that \(\errtrain(f^\star) = \lambda_\mathrm{train}\) and \(\errtest(f^\star) = \lambda_\mathrm{test}\) and let \(p_\mathrm{up}, p_\mathrm{down}\) be \(\epsilon''\)-approximate \(\calL_1\) sandwiching polynomials for \(f^\star\) under \(\calD\) of degree at most \(\ell\), where \(\epsilon'' = \epsilon/(16R)\). Then, \(p_\mathrm{up}\) and \(\bar p_\mathrm{down}\) are nonnegative. Moreover, using the same steps as in the proof of \Cref{thm:pq_}, we get that
    \(\ev{x \sim \calD}{p_\mathrm{up}(x)^2} < 4(2A)^{2\ell}\)
    and \(\ev{x \sim \calD}{\bar p_\mathrm{down}(x)^2} < 4(2A)^{2\ell}\). Since \(\abs{S}\) and \(\abs{S'}\) are sufficiently large, applying \Cref{thm:icf} (with additive error \(\epsilon'\), multiplicative slack \(R\) and polynomial variance bound \(\beta = 4(2A)^{2\ell}\)), we get that with probability at least \(1 - \delta/3\) we have that
    \begin{align*}
        \ev{x \sim \calD'}{\hat f(x)\bar p_\mathrm{down}(x)s(x)} &\leq (R+\epsilon')(1+\epsilon')\ev{x \sim \calD}{\hat f(x)\bar p_\mathrm{down}(x)} + 2\epsilon' \\
        &\leq R\ev{x \sim \calD}{\hat f(x)\bar p_\mathrm{down}(x)} + 8R\epsilon'
    \end{align*}
    and
    \begin{align*}
        \ev{x \sim \calD'}{\bar{\hat f}(x)\pup(x)s(x)} &\leq (R+\epsilon')(1+\epsilon')\ev{x \sim \calD}{ \bar{\hat f}(x)\pup(x)} + 2\epsilon' \\
        &\leq R\ev{x \sim \calD}{ \bar{\hat f}(x)\pup(x)} + 8R\epsilon' \, ,
    \end{align*}
    where we used the fact that \(\ev{x \sim \calD}{\hat f(x)\bar p_\mathrm{down}(x)} \leq 2\) and \(\ev{x \sim \calD}{\bar{\hat f}(x) p_\mathrm{up}(x)} \leq 2\) in the last inequality of each of the above lines. Thus, for \(\epsilon' = (R-1)\epsilon/(128R^2)\), we have that
    \begin{align*}
        \ev{x \sim \calD'}{\hat f(x)\bar p_\mathrm{down}(x)s(x)} &\leq R \ev{x \sim \calD}{\hat f(x)\bar p_\mathrm{down}(x)} + \epsilon/16 \\
        \ev{x \sim \calD'}{\bar{\hat f}(x)\pup(x)s(x)} &\leq R \ev{x \sim \calD}{ \bar{\hat f}(x)\pup(x)} + \epsilon/16 \, .
    \end{align*}
    Hence, using \Cref{lemma:chow_transfer} (with additive error \(\epsilon/4\)), given that \(\pr{(x, y) \sim \calD^\mathrm{train}}{\hat f(x) \neq y} \leq \mathsf{opt}_\mathrm{train} + \epsilon/(4R)\), we have
    \begin{align*}
        \pr{(x, y) \sim \calD^\mathrm{test}}{ \hat f(x) \neq y \land s(x)=1 } &\leq \lambda_\mathrm{test} + R \left( \lambda_\mathrm{train} + \pr{(x, y) \sim \calD^\mathrm{train}}{ \hat f(x) \neq y } \right) + \epsilon/4 \\
        &\leq \lambda_\mathrm{test} + R \left( \lambda_\mathrm{train} + \mathsf{opt}_\mathrm{train} \right) + \epsilon/2 \, .
    \end{align*}
    
    Taking a union bound and applying the law of total probability, we conclude that with probability at least \(1 - \delta\), we have
    \begin{align*}
    \pr{(x,y) \sim \calD^\mathrm{test}}{ \hat f (x) \neq y } &\leq \pr{x \sim \calD'}{ s(x)=0 } + \pr{(x,y) \sim \calD^\mathrm{test}}{ \hat f (x) \neq y \land s(x)=1 } \\
    &\leq \lambda_\mathrm{test} + R( \lambda_\mathrm{train} + \mathsf{opt}_\mathrm{train} ) + \dfrac{R\theta}{R-1} + \epsilon \, .
    \end{align*}
    
    \paragraph{Proof of Completeness.} We assume that \(d_\mathrm{TV}(\calD, \calD') \leq \theta\). Since \(\abs{S}\) and \(\abs{S'}\) are sufficiently large, by \Cref{thm:icf_}, with probability at least \(1 - \delta/2\) we have that \(\pr{x \sim \calD}{s(x)=0} \leq (\theta+\epsilon')/(R-1)\). Also, from Hoeffding's inequality, since \( \abs{ X_\mathrm{test} } \gtrsim \log(1/\delta)/\epsilon^2 \), we have that with probability at least \(1 - \delta/2\), it holds \(\pr{x \sim \calU(X_\mathrm{test})}{s(x)=0} \leq \pr{x \sim \calD'}{s(x)=0} + \epsilon/8\). Hence, with probability at least \(1 - \delta\), it holds
    \[
        \pr{x \sim \calU(X_\mathrm{test})}{s(x)=0} \leq \pr{x \sim \calD}{s(x)=0} + \theta + \epsilon/8 \leq \theta + \dfrac{\theta + \epsilon'}{R-1} + \epsilon/8 < \dfrac{R\theta}{R-1} + \epsilon/4 \, .
    \]
    Thus, if \(d_\mathrm{TV}(\calD, \calD') \leq \theta\), then the algorithm accepts with probability at least \(1 - \delta\).
\end{proof}

\section{Additional Tools}

\begin{theorem}[Implicit in \cite{doi:10.1137/060649057}] \label{thm:l1_regression}
    Let any \(d \in \natural^\ast\), \(\ell \in \natural\), let \(\calX \subseteq \real^d\), let \(\calD\) be a distribution on \(\calX \times \{0, 1\}\) and let \(\calD_{\calX}\) be the marginal of \(\calD\) on \(\calX\). Let any \(\epsilon, \delta \in (0,1)\). Let \(\calC \subseteq \{0, 1\}^\calX\) such that for all \(f \in \calC\) there exists a polynomial \(p \in \calP_{d, \ell}\) such that \(\ev{x \sim \calD_{\calX}}{\abs{p(x) - f(x)}} \leq \epsilon/2\).
    Then, the degree-\(\ell\) \(\calL_1\) polynomial regression algorithm of \cite{doi:10.1137/060649057} agnostically learns \(\calC\) up to error \(\inf_{f \in \calC}\pr{(x,y) \sim \calD}{f(x) \neq y} + \epsilon\) and failure probability \(\delta\) with \(\mathrm{poly} \left((d+1)^\ell, 1/\epsilon, \log(1/\delta) \right)\) sample complexity and runtime.
\end{theorem}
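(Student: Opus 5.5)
The plan is to reduce \Cref{thm:l1_regression} to the standard analysis of $\calL_1$ polynomial regression from \cite{doi:10.1137/060649057}. We draw a sample $T$ of $m$ labeled examples from $\calD$. We then solve the convex (linear) program $\min_{p \in \calP_{d,\ell}} \frac{1}{m}\sum_{(x,y)\in T}\abs{p(x)-y}$ over coefficient vectors of dimension at most $(d+1)^\ell$. This takes time $\mathrm{poly}(m,(d+1)^\ell)$. Finally, we output the thresholded hypothesis $h_\theta(x)=\indicator{p(x)>\theta}$, where $\theta\in[0,1]$ is chosen to minimize empirical error. Only the $m+1$ thresholds at the values $p(x)$ for $x\in T$ need to be checked.

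The first key step is the pointwise thresholding identity. For $y\in\{0,1\}$ and $\theta$ uniform on $[0,1]$,
\[\pr{\theta}{\indicator{p(x)>\theta}\neq y}\le \abs{p(x)-y}.\]
Averaging over the empirical distribution shows there is a $\theta$ with empirical error at most $\frac1m\sum\abs{p(x)-y}$. The second step compares against the benchmark. Let $f\in\calC$ be (nearly) optimal and let $p_f$ satisfy $\ev{}{\abs{p_f-f}}\le\epsilon/2$. Then
\[\ev{(x,y)\sim\calD}{\abs{p_f(x)-y}}\le \ev{}{\abs{p_f(x)-f(x)}}+\ev{}{\abs{f(x)-y}}\le \epsilon/2+\pr{}{f(x)\neq y}.\]
Because $p_f$ is a fixed polynomial, its empirical $\calL_1$ loss concentrates around this expectation. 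Chaining the inequalities, the empirical error of $h_\theta$ is at most $\mathsf{opt}+\epsilon/2$ plus a small fluctuation.

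The third step is generalization, which I expect to be the main obstacle. The class $\{\indicator{p(x)>\theta}\}$ consists of degree-$\ell$ PTFs, so its VC dimension is at most $(d+1)^\ell$ by \Cref{fact:vc_ptf}. \Cref{fact:uc0} therefore transfers the empirical error of $h_\theta$ to the true error up to $\epsilon/8$ once $m\gtrsim (d+1)^\ell\log(m)\log(1/\delta)/\epsilon^2$.

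The delicate part is concentration of the empirical loss of $p_f$. The statement assumes no hypercontractivity, so $\abs{p_f(x)-y}$ may be heavy-tailed. The first thing I would try is to avoid two-sided concentration entirely: the optimizer only needs an upper bound on the empirical loss of $p_f$. Markov's inequality on the nonnegative quantity $\frac1m\sum\abs{p_f(x)-y}-\pr{}{f(x)\neq y}$ gives only constant-probability guarantees, and these can be amplified by independent repetitions. To amplify, we run the procedure $O(\log(1/\delta))$ times and select the best hypothesis on a fresh validation set using Hoeffding's inequality. This yields error $\mathsf{opt}+\epsilon$ with probability $1-\delta$ and $\mathrm{poly}((d+1)^\ell,1/\epsilon,\log(1/\delta))$ samples and time. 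Alternatively, following \cite{doi:10.1137/060649057}, one can clip $p_f$ to $[0,1]$, which only decreases $\abs{p_f-y}$. The clipped function is no longer a polynomial, but the argument only needs the existence of a good competitor in the LP's feasible set, so this route requires more care.
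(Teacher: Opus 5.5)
The paper gives no proof of this statement and simply defers to \cite{doi:10.1137/060649057}. Your outline is exactly that standard analysis: an LP for the empirical $\calL_1$ loss, random-threshold rounding, the VC bound for degree-$\ell$ PTFs, and Markov plus boosting to handle the unbounded competitor $p_f$.

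One step fails as written. The quantity $\frac1m\sum\abs{p_f(x)-y}-\pr{}{f(x)\neq y}$ is not nonnegative: take $p_f=f\equiv 0$ and $y$ a fair coin independent of $x$. So Markov's inequality cannot be applied to it.

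The repair is as follows.
\begin{itemize}
\item Use the pointwise bound $\abs{p_f(x)-y}\le\abs{p_f(x)-f(x)}+\indicator{f(x)\neq y}$.
\item Apply Markov to the genuinely nonnegative average $\frac1m\sum\abs{p_f(x)-f(x)}$. Its mean is at most $\epsilon/2$, so it is at most $3\epsilon/4$ with probability at least $1/3$.
\item Apply Hoeffding to the bounded average of $\indicator{f(x)\neq y}$.
\end{itemize}
Your chain then gives empirical error $\mathsf{opt}+3\epsilon/4$ plus fluctuations, rather than $\mathsf{opt}+\epsilon/2$. The remaining $\epsilon/4$ must absorb the near-optimality of $f$, the uniform-convergence error and the validation error, so your constants (e.g.\ the $\epsilon/8$ for uniform convergence) need slight retuning. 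Alternatively, Markov on the whole nonnegative average $\frac1m\sum\abs{p_f(x)-y}$ also works, but only with success probability $\Omega(\epsilon)$, which costs $O(\log(1/\delta)/\epsilon)$ repetitions.

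You can drop the clipping aside. The LP's optimality only lets you compare against polynomials, so a clipped competitor never enters the argument.
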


\begin{lemma}[Lemma A.1 from \cite{chandrasekaran2025learning}] \label{lemma:loewner_conc}
    Let any \(d \in \natural^\ast, \ell \in \natural\) and \(A \geq 1\). Let \(\calD\) be an \(A\)-hypercontractive distribution on \(\real^d\). Let \(S\) be a multiset of \(m\) independent examples from \(\calD\). Then, for any \(\delta \in (0, 1)\), if \(m \gtrsim (d+1)^{2\ell}   \left(4A\log_2(4/\delta) \right)^{4\ell+1}\), then with probability at least \(1-\delta\), we have that for all \(p \in \calP_{d, \ell}\), it holds
    \[
    \dfrac{1}{2} \ev{x \sim \calD}{p(x)^2} \leq  \ev{x \sim \calU(S)}{p(x)^2} \leq
    2 \ev{x \sim \calD}{p(x)^2} \, .
    \]
\end{lemma}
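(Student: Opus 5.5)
Since the statement concerns the polynomial class $\calP_{d,\ell}$, which is a linear space of dimension at most $(d+1)^\ell$, I will reduce the uniform two-sided bound to a matrix concentration statement. Fix an orthonormal basis $\phi_1,\dots,\phi_N$ of $\calP_{d,\ell}$ with respect to $L_2(\calD)$, where $N\le (d+1)^\ell$. Finite first moments of monomials together with hypercontractivity ensure all moments are finite, so the basis exists; if the Gram matrix is degenerate, I first quotient by polynomials vanishing $\calD$-a.s. Writing $p=\sum_j c_j\phi_j$, we have $\ev{x\sim\calD}{p(x)^2}=\|c\|^2$ and $\ev{x\sim\calU(S)}{p(x)^2}=c^\top \hat M c$, where $\hat M=\frac1m\sum_{x\in S}\phi(x)\phi(x)^\top$. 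It then suffices to show $\|\hat M-I\|_{\mathrm{op}}\le 1/2$ with probability $1-\delta$. This gives both inequalities simultaneously for all $p$.

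I would bound the operator norm by the Frobenius norm, $\|\hat M-I\|_F^2=\sum_{j,k}(\hat M_{jk}-\delta_{jk})^2$, and control each entry separately. Each entry is an average of $m$ i.i.d. copies of $\phi_j(x)\phi_k(x)$, which has mean $\delta_{jk}$. It suffices that every entry deviates by at most $1/(2N)$, followed by a union bound over $N^2$ entries. For a single entry I use a high-moment Markov bound, i.e.\ Marcinkiewicz--Zygmund/Rosenthal:
\[
\ev{}{\bigl|\hat M_{jk}-\delta_{jk}\bigr|^t}\lesssim (Ct/m)^{t/2}\,\ev{x\sim\calD}{|\phi_j\phi_k|^t}.
\]
The polynomial $\phi_j\phi_k$ has degree at most $2\ell$ and $L_1$ norm at most $1$ by Cauchy--Schwarz. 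Hypercontractivity then gives $\ev{}{|\phi_j\phi_k|^t}\le (At)^{2\ell t}$.

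Choosing $t\approx\log(N^2/\delta)$ yields a tail probability of at most $\bigl((Ct)^{1/2}(At)^{2\ell}\cdot 2N/\sqrt m\bigr)^t$ per entry. This is at most $\delta/N^2$ once
\[
m\gtrsim N^2 (At)^{4\ell+1}.
\]
With $N\le (d+1)^\ell$ and $t\approx\log(4/\delta)$, this matches the claimed threshold $(d+1)^{2\ell}(4A\log_2(4/\delta))^{4\ell+1}$. The $\log N$ term inside $t$ needs care: I would absorb it by taking $t=O(\log(1/\delta)+\ell\log(d+1))$, or by letting the $N^2$ factor already in $m$ cover it. Matching the exact constants is the fiddliest step.

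The main obstacle is carrying out this moment calculation cleanly with the stated exponents; everything else is linear algebra. A secondary subtlety is that hypercontractivity in the paper is stated in terms of the $L_1$ norm, so I apply it to $\phi_j\phi_k$ directly, whose $L_1$ norm is at most $1$, rather than to squares.
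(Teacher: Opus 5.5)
Your architecture is sound: orthonormalize in $L_2(\calD)$, reduce to $\|\hat M-I\|_{\mathrm{op}}\le\|\hat M-I\|_F\le 1/2$, and bound each entry via Marcinkiewicz--Zygmund plus hypercontractivity of the degree-$2\ell$ polynomial $\phi_j\phi_k$ with $\ev{x\sim\calD}{\abs{\phi_j(x)\phi_k(x)}}\le 1$. Your per-entry moment estimate is also correct. The gap is the union bound over the $N^2$ entries, and this step is not a matter of constants: it is exactly what decides whether you reach the stated threshold.

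To make each entry fail with probability $\delta/N^2$ from a base of at most $1/2$, you need $t\ge\log_2(N^2/\delta)$. Your condition $m\gtrsim N^2(At)^{4\ell+1}$ then becomes $m\gtrsim(d+1)^{2\ell}\bigl(cA(\log(1/\delta)+\log N)\bigr)^{4\ell+1}$. For constant $\delta$ this exceeds $(d+1)^{2\ell}(4A\log_2(4/\delta))^{4\ell+1}$ by a factor of order $(\log N)^{4\ell+1}$. Since $\log N\ge\ell\log(1+d/\ell)$ is unbounded, this is not a universal constant, so your first patch proves only a weaker lemma.

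Your second patch fails because the $N^2$ already in $m$ is spent on the per-entry accuracy $1/(2N)$. Paying for the union bound instead by shrinking the base to $(\delta/N^2)^{1/t}$ with $t\approx\log_2(4/\delta)$ costs an extra factor $N^{4/t}$ in $m$. That factor is polynomial in $(d+1)^\ell$ when $\delta$, and hence $t$, is constant.

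The fix is to drop the union bound and take moments of the Frobenius norm itself. Let $X_{jk}=\hat M_{jk}-\delta_{jk}$ and fix an integer $i\ge 1$. Convexity of $u\mapsto u^i$, together with your per-entry bound at moment $2i$, gives
\[
\ev{S}{\|\hat M-I\|_F^{2i}}=\ev{S}{\Bigl(\sum_{j,k}X_{jk}^2\Bigr)^{i}}\le N^{2(i-1)}\sum_{j,k}\ev{S}{X_{jk}^{2i}}\le\Bigl(\frac{C N^2 i\,(2Ai)^{4\ell}}{m}\Bigr)^{i}.
\]
Markov's inequality then yields $\pr{S}{\|\hat M-I\|_F>1/2}\le\bigl(4CN^2 i(2Ai)^{4\ell}/m\bigr)^{i}\le 2^{-i}$ once $m\gtrsim N^2 i(2Ai)^{4\ell}$. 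Taking $i=\lceil\log_2(1/\delta)\rceil$ gives failure probability $\delta$ under $m\gtrsim(d+1)^{2\ell}(4A\log_2(4/\delta))^{4\ell+1}$. The factor $N^2$ now enters only the base of the tail bound and never the moment order. That is why the stated threshold has no $\log(d+1)$ inside the power $4\ell+1$.
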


\begin{lemma}[Lemma A.2 from \cite{klivans2026poweriterativefilteringsupervised}] \label{lemma:con1}
    Let any \(d \in \natural^\ast, \ell \in \natural\), \(A \geq 1\) and \(\beta > 0\). Let \(\calD\) be an \(A\)-hypercontractive distribution on \(\real^d\). Let \(S\) be a multiset of \(m\) independent examples from \(\calD\) and let \(\calP = \{p \in \calP_{d, \ell} : \ev{x \sim \calU(S)}{p(x)^2 } \leq 2\beta\}\). Then, for any \(\epsilon, \delta \in (0, 1)\), if \(m \gtrsim (d+1)^{2\ell}   \left(4A\log_2(4/\delta) \right)^{4\ell+1}\), then with probability at least \(1-\delta\), it holds
    \[
    \PR{x \sim \calD}{\max_{p \in \calP} \abs{p(x)} > 2\sqrt{\dfrac{\beta(d+1)^\ell}{\epsilon}} } \leq \epsilon \, .
    \]
\end{lemma}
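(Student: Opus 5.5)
The statement to prove is \Cref{lemma:con1}: a high-probability bound on the $\calD$-mass of points where some polynomial in $\calP$ (empirical second moment at most $2\beta$) exceeds $2\sqrt{\beta(d+1)^\ell/\epsilon}$ in absolute value. The plan is to reduce the supremum over $\calP$ to a single quadratic form, and to use \Cref{lemma:loewner_conc} to pass from empirical to true second moments.

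First, invoke \Cref{lemma:loewner_conc} with the stated sample size. With probability at least $1-\delta$, every $p\in\calP_{d,\ell}$ satisfies $\ev{x\sim\calD}{p(x)^2}\le 2\ev{x\sim\calU(S)}{p(x)^2}$. Hence every $p\in\calP$ has $\ev{x\sim\calD}{p(x)^2}\le 4\beta$. We condition on this event for the rest of the argument.

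Next, fix an orthonormal basis $\phi_1,\dots,\phi_N$ of $\calP_{d,\ell}$ in $L_2(\calD)$, where $N\le (d+1)^\ell$. The finiteness of moments from hypercontractivity guarantees this basis exists; if the Gram matrix is degenerate, we work with its image. Write $p=\sum_j c_j\phi_j$, so that $\ev{\calD}{p^2}=\|c\|_2^2\le 4\beta$. By Cauchy--Schwarz,
\[
\max_{p\in\calP}\abs{p(x)}\le 2\sqrt{\beta}\,\Phi(x),\qquad \text{where } \Phi(x)=\Bigl(\sum_{j=1}^N\phi_j(x)^2\Bigr)^{1/2}.
\]
Moreover $\ev{x\sim\calD}{\Phi(x)^2}=N\le (d+1)^\ell$.

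Finally, apply Markov's inequality to $\Phi^2$:
\[
\PR{x\sim\calD}{\max_{p\in\calP}\abs{p(x)}>2\sqrt{\beta(d+1)^\ell/\epsilon}}\le \PR{x\sim\calD}{\Phi(x)^2>(d+1)^\ell/\epsilon}\le\epsilon .
\]
This holds on the conditioning event, so it holds with probability at least $1-\delta$.

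The main subtlety is the middle step. A naive union bound over $\calP$ is impossible because $\calP$ is infinite and depends on $S$. The reduction to the single sample-independent function $\Phi$ resolves both issues: once the Loewner-type comparison holds, the randomness of $S$ enters only through the inclusion $\calP\subseteq\{p:\ev{\calD}{p^2}\le 4\beta\}$. Degeneracy of the basis, for example on the hypercube where $x_i^2=1$, is handled by choosing a basis of the quotient space, which only decreases $N$.
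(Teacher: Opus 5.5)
Your proof is correct. You use \Cref{lemma:loewner_conc}, whose sample-size requirement is exactly the one assumed here, to get $\ev{x\sim\calD}{p(x)^2}\le 4\beta$ on $\calP$. You then bound $\max_{p\in\calP}\abs{p(x)}\le 2\sqrt{\beta}\,\Phi(x)$ by Cauchy--Schwarz in an $L_2(\calD)$-orthonormal basis, and apply Markov using $\ev{x\sim\calD}{\Phi(x)^2}=N\le\binom{d+\ell}{\ell}\le(d+1)^\ell$.

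This is the standard leverage-score (Christoffel-function) argument, and the matching sample-size hypothesis suggests it is the route taken in the cited source. The paper itself gives no proof and only imports the statement. One detail is worth making explicit in the degenerate case: the Cauchy--Schwarz bound needs every polynomial $n$ with $\ev{x\sim\calD}{n(x)^2}=0$ to vanish at $x$. This holds simultaneously for the whole finite-dimensional null space for $\calD$-almost every $x$, so it does not affect the probability bound.
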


\begin{lemma} \label{lemma:con2}
    Let any \(d \in \natural^\ast, \ell \in \natural\), \(A \geq 1\), \(\epsilon, \beta > 0\) and \(R \geq 1\). Let \(\calD\) be an \(A\)-hypercontractive distribution on \(\real^d\). Let any \(p \in \calP_{d, \ell}\) and \(f \colon \real^d \to \{0, 1\}\) such that \(\ev{x \sim \calD}{f(x) \abs{p(x)}} \leq \epsilon/(R+\epsilon)\) and \(\ev{x \sim \calD}{p(x)^2} \leq \beta\). Let \(S\) be a multiset of \(m\) independent examples from \(\calD\). Then, for any \(\delta \in (0, 1)\), if \(m \gtrsim R^4\beta(d+1)^{2\ell} \left(4A\log_2(8/\delta) \right)^{4\ell+1}/\epsilon^4\), then with probability at least \(1 - \delta\), we have that \(\ev{x \sim \calU(S)}{f(x) \abs{p(x)}} \leq 2\epsilon/(2R+\epsilon)\) and \(\ev{x \sim \calU(S)}{p(x)^2} \leq 2\beta\).
\end{lemma}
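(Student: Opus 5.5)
We need to prove Lemma \ref{lemma:con2}: with high probability the empirical $\calL_1$-Chow quantity $\ev{x \sim \calU(S)}{f(x)\abs{p(x)}}$ is at most $2\epsilon/(2R+\epsilon)$, and the empirical second moment is at most $2\beta$.

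The second-moment claim is immediate from \Cref{lemma:loewner_conc}. Apply it with failure probability $\delta/2$. The sample-size requirement $(d+1)^{2\ell}(4A\log_2(8/\delta))^{4\ell+1}$ is dominated by the assumed $m$. It gives $\ev{x\sim\calU(S)}{p(x)^2}\le 2\ev{x\sim\calD}{p(x)^2}\le 2\beta$.

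For the first claim, let $Z = f(x)\abs{p(x)}$ with $x\sim\calD$, a fixed nonnegative random variable. Its mean is $\mu \le \epsilon/(R+\epsilon)$ and its variance satisfies $\mathrm{Var}(Z) \le \E[Z^2]\le \E[p^2]\le\beta$. The target slack is
\[
\frac{2\epsilon}{2R+\epsilon}-\frac{\epsilon}{R+\epsilon}=\frac{\epsilon^2}{(2R+\epsilon)(R+\epsilon)} \gtrsim \frac{\epsilon^2}{R^2},
\]
using $\epsilon<1\le R$ for the lower bound. By Chebyshev applied to the empirical mean of $m$ i.i.d. copies, the deviation exceeds this slack with probability at most $\beta R^4/(c\,\epsilon^4 m)$. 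This is at most $\delta/2$ once $m\gtrsim R^4\beta/(\epsilon^4\delta)$.

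That bound has a $1/\delta$ dependence rather than the polylogarithmic one in the statement, and removing it is the only real obstacle. I would handle it in one of two ways. The first is median-of-means style amplification via higher moments. Hypercontractivity bounds $\E[\abs{p}^t]\le (At)^{t\ell}\E[\abs p]^t\le (At)^{t\ell}\beta^{t/2}$. Taking $t\approx\log(1/\delta)$ and applying a Marcinkiewicz–Zygmund/Rosenthal-type moment bound to the centered sum, then Markov on the $t$-th moment, gives failure probability $\delta$ with $m \gtrsim R^4\beta (CA\log(1/\delta))^{O(\ell)}/\epsilon^4$. The second, simpler, way reduces to \Cref{lemma:loewner_conc}-style uniform concentration: write $\E_{\calU(S)}[f\abs{p}]$ and truncate $\abs{p}$ at a level $M$. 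The tail contribution $\E_{\calU(S)}[\abs p\,\indicator{\abs p>M}]\le \E_{\calU(S)}[p^2]/M\le 2\beta/M$ is controlled by the already-established second-moment event, with $M\asymp R^2\beta/\epsilon^2$. The truncated bounded part concentrates by Hoeffding with $m\gtrsim M^2R^4\log(1/\delta)/\epsilon^4$. That sample count has extra polynomial factors in $R,\beta,1/\epsilon$, so it does not quite match the stated $m$. The first, moment-based route is therefore the one matching the stated $(4A\log_2(8/\delta))^{4\ell+1}$ form, and I would present it.

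Finally, a union bound over the two events, each of probability at least $1-\delta/2$, finishes the proof.
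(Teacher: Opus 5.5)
Your proposal is correct and is essentially the paper's proof: the second-moment claim comes from \Cref{lemma:loewner_conc} at failure probability $\delta/2$. The Chow claim comes from Markov's inequality on the $2i$-th moment of the centered empirical mean, bounded via Marcinkiewicz--Zygmund and hypercontractivity ($\ev{x\sim\calD}{\abs{p(x)}^{2i}} \le (2Ai)^{2\ell i}\beta^{i}$) against the slack $\epsilon^2/((2R+\epsilon)(R+\epsilon)) > \epsilon^2/(6R^2)$ with $i = \lceil \log_2(4/\delta)\rceil$, and a union bound finishes. The one step you leave implicit is replacing the centered variable $f\abs{p}-\ev{x\sim\calD}{f(x)\abs{p(x)}}$ by $\abs{p}$ at the cost of a factor $2\cdot 4^{i}$; this is routine, and the paper does it with $(a+b)^{2k}\le 4^k\max\{a^{2k},b^{2k}\}$.
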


\begin{proof}
    If \(m\) is sufficiently large, by \Cref{lemma:loewner_conc}, we have that with probability at least \(1 - \delta/2\), it holds \(\ev{x \sim \calU(S)}{p(x)^2} \leq 2\beta\). It remains to show that \(\ev{x \sim \calU(S)}{f(x) \abs{p(x)}} \leq 2\epsilon/(2R+\epsilon)\) with probability at least \(1 - \delta/2\). For any \(z \in \real^d\), let \(\psi(z) = f(z)\abs{p(z)} - \ev{x \sim \calD}{f(x)\abs{p(x)}}\). Then, for any \(i \geq 1\) and for any \(\varepsilon > 0\), it holds\footnote{The notation \(S \sim \calD^m\) means that \(S\) is a multiset of \(m\) independent samples from \(\calD\).}
    {
    \allowdisplaybreaks
    \begin{align*}
        \PR{S \sim \calD^{m}}{\ABS{\ev{x \sim \calU(S)}{\psi(x)}} > \varepsilon}
    &\leq \dfrac{1}{\varepsilon^{2i}} \EV{S \sim \calD^{m}}{\ABS{ \ev{x \sim \calU(S)}{\psi(x)} }^{2i}} \\
    &\leq \left(\dfrac{4i}{\varepsilon^2 m} \right)^{i} \ev{x \sim \calD}{\abs{ \psi(x) }^{2i}} \\
    &\leq 2\left(\dfrac{16i}{\varepsilon^2 m} \right)^{i} \ev{x \sim \calD}{\abs{p(x)}^{2i}} \\
    &\leq 2 (2Ai)^{2\ell i}\left(\dfrac{16i}{\varepsilon^2 m} \right)^{i} \ev{x \sim \calD}{\abs{p(x)}}^{2i} \\
    &\leq 2 \left( \dfrac{16 (2A)^{2\ell}i^{2\ell+1}\beta}{\varepsilon^2m} \right)^i
    \, .
    \end{align*}
    }
    In the first line, we used Markov's inequality. In the second line, we used the Marcinkiewicz–Zygmund inequality and Hölder's inequality (see \cite{FERGER201496}). In the third line, we used the fact that \((a+b)^{2k} \leq 4^k \max\{a^{2k}, b^{2k}\}\) for any \(a, b, k \geq 0\). Let \(\varepsilon = 2\epsilon/(2R+\epsilon)-\epsilon/(R+\epsilon)>\epsilon^2/6R^2\). If \(m \gtrsim R^4\beta(d+1)^{2\ell} \left(4A\log_2(8/\delta) \right)^{4\ell+1}/\epsilon^4\), then, for \(i = \lceil\log_2(4/\delta)\rceil\), we get that the term in the last line is upper bounded by \(\delta/2\), which concludes the proof.
\end{proof}

\begin{lemma} \label{lemma:vc1}
    Let any \(d, \ell \in \natural^\ast\). Let \(\calF \subset \{\real^d \to \{0, 1\}\}\) be a finite class and let \(\calH = \{x \mapsto \indicator{f(x) \abs{p(x)} > \tau} : p \in \calP_{d, \ell}, \tau \geq 0, f \in \calF\}\). Then, \(\mathrm{VCdim}(\calH) \lesssim (d+1)^\ell \ell \log(d) + \log(\abs{\calF})\).
\end{lemma}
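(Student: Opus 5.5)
The plan is to write the class as a finite union of simpler classes and then bound the VC dimension of each piece separately. For each fixed \(f \in \calF\), set
\[
\calH_f = \{x \mapsto \indicator{f(x)\abs{p(x)} > \tau} : p \in \calP_{d,\ell}, \tau \geq 0\}.
\]
Then \(\calH = \bigcup_{f \in \calF} \calH_f\). By \Cref{fact:vc_union}, if each \(\calH_f\) has VC dimension at most \(V\), we get \(\mathrm{VCdim}(\calH) \lesssim V\log V + \log\abs{\calF}\). So it suffices to show \(V \lesssim (d+1)^\ell\), or at least \(V\log V \lesssim (d+1)^\ell \ell\log d\).

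For a fixed \(f\), I would write each function in \(\calH_f\) in a simpler form. Since \(f\) is \(\{0,1\}\)-valued and \(\tau \geq 0\), we have
\[
\indicator{f(x)\abs{p(x)}>\tau} = f(x)\cdot\indicator{\abs{p(x)}>\tau}.
\]
Multiplying by a fixed function \(f\) cannot increase VC dimension. A point with \(f(x)=0\) is labeled \(0\) by every function in \(\calH_f\), so it can never belong to a shattered set. Hence \(\mathrm{VCdim}(\calH_f) \leq \mathrm{VCdim}(\calG)\), where \(\calG = \{x \mapsto \indicator{\abs{p(x)} > \tau}\}\).

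Next I would express the condition as an OR of two polynomial threshold functions:
\[
\indicator{\abs{p(x)}>\tau} = \indicator{p(x)-\tau>0} \lor \indicator{-p(x)-\tau>0}.
\]
Both disjuncts are degree-\(\ell\) PTFs, and their VC dimension is at most \((d+1)^\ell\) by \Cref{fact:vc_ptf}. \Cref{fact:vc_and} with \(r=2\) then gives \(\mathrm{VCdim}(\calG) \lesssim (d+1)^\ell\).

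A sharper alternative also works. Note that \(\indicator{p(x)^2 - \tau^2 > 0}\) is a degree-\(2\ell\) PTF. By the linearization argument, its VC dimension is at most the number of degree-\(\le 2\ell\) monomials, which is \(\leq (d+1)^{2\ell}\). Either way \(V \lesssim (d+1)^\ell\).

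Plugging into \Cref{fact:vc_union}, \(V\log V \lesssim (d+1)^\ell \cdot \ell\log(d+1)\), since \(\log V \lesssim \ell\log(d+1) \lesssim \ell\log d\) for \(d \geq 2\); the case \(d=1\) is absorbed into constants. This yields \(\mathrm{VCdim}(\calH) \lesssim (d+1)^\ell \ell\log d + \log\abs{\calF}\).

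The main point to check is that the restriction argument used for multiplying by \(f\) is valid. It holds because any shattered set must avoid the zero set of \(f\), and on the support of \(f\) each function in \(\calH_f\) agrees with the corresponding function in \(\calG\). Everything else is a routine application of the stated facts.
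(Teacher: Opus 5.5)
Your main argument is correct and is essentially the paper's own proof. You decompose $\calH=\bigcup_{f\in\calF}\calH_f$ and note $\calH_f=\{fg : g\in\calG\}$, so $\mathrm{VCdim}(\calH_f)\le\mathrm{VCdim}(\calG)$; your justification via the zero set of $f$ is exactly the right one, and the paper only asserts this step. You then write $\indicator{\abs{p(x)}>\tau}$ as an OR of two degree-$\ell$ PTFs to get $\mathrm{VCdim}(\calG)\lesssim(d+1)^\ell$, and conclude with \Cref{fact:vc_union}.

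Two corrections. First, drop the ``sharper alternative'': the degree-$2\ell$ PTF $p^2-\tau^2$ only yields $(d+1)^{2\ell}$, which is weaker than $(d+1)^\ell$, not $\lesssim(d+1)^\ell$. Second, the case $d=1$ is not absorbed into constants, because $(d+1)^\ell\ell\log d=0$ there. The statement as written actually fails for $d=1$, $\calF=\{x\mapsto 1\}$: the right side is $0$, but a single point is shattered. What your argument, like the paper's, really proves is the bound with $\log(d+1)$ in place of $\log d$.
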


\begin{proof}
    For any \(f \in \calF\), let \(\calH_f = \{x \mapsto \indicator{f(x) \abs{p(x)} > \tau} : p \in \calP_{d, \ell}, \tau \geq 0\}
    \). We will first show that \(\mathrm{VCdim}(\calH_f) \lesssim (d+1)^\ell\) for any \(f \in \calF\). Let \(\calG = \{x \mapsto \indicator{\abs{p(x)} > \tau} : p \in \calP_{d, \ell}, \tau \geq 0\}\). Notice that for any \(x \in \real^d\), \(p \in \calP_{d, \ell}\) and \(\tau \geq 0\), it holds \(\indicator{ \abs{p(x)} > \tau} = \indicator{ p(x) > \tau} \lor \indicator{ p(x) < -\tau}\), namely, the function  \(x \mapsto \indicator{ \abs{p(x)} > \tau}\) is the logical OR of two PTFs of degree at most \(\ell\). Thus, by \Cref{fact:vc_and}, it holds \(\mathrm{VCdim}(\calG) \lesssim (d+1)^\ell\). Finally, notice that \(\calH_f = \{ fg : g \in \calG \}\), which implies that \(\mathrm{VCdim}(\calH_f) \leq \mathrm{VCdim}(\calG) \lesssim (d+1)^\ell\).
    Since \(\calH = \bigcup_{f \in \calF} \calH_f\), by \Cref{fact:vc_union}, we deduce that \(\mathrm{VCdim}(\calH) \lesssim (d+1)^\ell \ell \log(d) + \log(\abs{\calF})\).
\end{proof}

\begin{lemma} \label{lemma:uc1}
    Let any \(M > 0\). Let any set \(\calX\), any distribution \(\calD\) on \(\calX\), any \(f \colon \calX \to \real\) and any \(\calG \subseteq \{0, 1\}^\calX\) such that \(\sup_{g \in \calG}\sup_{x \in \calX} g(x) \abs{f(x)} \leq M\) and \(\mathrm{VCdim}(\calG) < \infty\). Let \(S\) be a multiset of \(n\) i.i.d. samples drawn from \(\calD\). Then, there exists some universal constant \(C > 0\) such that for any \(\delta \in (0, 1)\), with probability at least \(1 - \delta\) over \(S\), it holds
    \[
    \sup_{g \in \calG} \ABS{ \ev{x \sim \calU(S)}{f(x)g(x)} - \ev{x \sim \calD}{f(x)g(x)} } \leq CM\left( \dfrac{\mathrm{VCdim}(\calG) \log(n) + \log(1/\delta)}{n}  \right)^{1/4} \, .
    \]
\end{lemma}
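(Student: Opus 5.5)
This lemma is a uniform convergence bound for the unbounded-by-itself function \(f\), truncated by indicators \(g\). The plan is to reduce it to \Cref{fact:uc0} through the layer-cake (tail-sum) representation. First write \(f = f^+ - f^-\) and treat each part separately; the class \(\calG\) is unchanged and the envelope bound \(g\abs{f}\le M\) still holds for each part. For a nonnegative \(h\in\{f^+,f^-\}\) and every \(x\),
\[
g(x)h(x) = \int_0^M \indicator{g(x)h(x) > u}\,\mathrm d u = \int_0^M g(x)\indicator{h(x)>u}\,\mathrm d u .
\]
The upper limit \(M\) suffices because \(g(x)h(x)\le M\).

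Taking expectations under \(\calU(S)\) and under \(\calD\) and subtracting gives
\[
\ABS{\ev{\calU(S)}{gh}-\ev{\calD}{gh}} \le M \sup_{u\in[0,M]} \ABS{\ev{\calU(S)}{g\,\indicator{h>u}} - \ev{\calD}{g\,\indicator{h>u}}} .
\]
So it suffices to have uniform convergence over the class \(\calG' = \{x\mapsto g(x)\indicator{h(x)>u} : g\in\calG,\ u\ge 0\}\). This is the AND of \(\calG\) with the class of thresholds of the single fixed function \(h\). That threshold class is a nested chain, so its VC dimension is \(1\). By \Cref{fact:vc_and} with \(r=2\), \(\mathrm{VCdim}(\calG')\lesssim \mathrm{VCdim}(\calG)+1\).

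\Cref{fact:uc0} applied to \(\calG'\) then gives, with probability \(1-\delta/2\), a deviation of order \(M\sqrt{(\mathrm{VCdim}(\calG)\log n+\log(1/\delta))/n}\) for each of \(f^\pm\). A union bound over the two parts yields the same order of bound for \(f\). This is at most \(CM(\cdot)^{1/4}\) whenever the bracketed quantity is at most \(1\). In the complementary case the claim is trivial, since both expectations are bounded by \(M\) in absolute value and so the left side is at most \(2M\).

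The only step needing care is the layer-cake exchange of \(g\) with the threshold. One must check that \(g\in\{0,1\}\) makes \(\indicator{gh>u}=g\,\indicator{h>u}\) for \(u\ge 0\), and that the supremum over the uncountable family of \(u\) is legitimately handled by the VC bound. Both hold because \(\calG'\) is a single VC class. I expect no real obstacle; the stated \(1/4\) exponent is looser than what this argument gives, so there is slack.
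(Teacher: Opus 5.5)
Your proof is correct, and it follows a genuinely different and sharper route than the paper's. The paper bins the values of \(f\) via \(I_i(x)=\indicator{i\Delta\le f(x)<(i+1)\Delta}\) for \(\abs{i}\le M/\Delta\), replacing \(fg\) by the step function \(\sum_i i\Delta\, gI_i\) at a cost of \(2\Delta\). It then applies the triangle inequality bin by bin. Each of the roughly \(2M/\Delta\) bins contributes up to \(M\) times the uniform deviation over \(\{gI_i\}\), which gives \(O(\Delta+(M^2/\Delta)\sqrt{Q})\) with \(Q=(\mathrm{VCdim}(\calG)\log n+\log(1/\delta))/n\). Balancing \(\Delta\) is what produces the fourth-root rate.

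Your layer-cake identity \(gh=\int_0^M g\,\indicator{h>u}\,\mathrm du\) is exact, so there is no discretization error. Bounding the integrated deviation by \(M\) times its supremum also avoids paying once per bin; it is the continuous analogue of the summation by parts that the paper's bin-by-bin triangle inequality forgoes. This buys the stronger bound \(O(M\sqrt{Q})\), with no parameter to tune and no integrality or endpoint bookkeeping for \(M/\Delta\). Used in the proof of \Cref{thm:icf_}, it would relax the requirement \(\abs{S'}\gtrsim B^4\,\mathrm{VCdim}(\calH)\cdots/\epsilon^8\) to one scaling with \(B^2/\epsilon^4\). The paper's binning buys little in exchange, beyond being a purely finite-sum argument that treats signed \(f\) without splitting into \(f^\pm\).

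Both arguments rest on the same two facts: \Cref{fact:vc_and}, for an AND of \(\calG\) with an \(O(1)\)-VC class of level sets of \(f\), and then \Cref{fact:uc0}. Both also quietly absorb three small offsets into \(C\): the additive \(+1\) in the VC dimension, \(\log(en)\) versus \(\log n\), and \(\log(2/\delta)\) versus \(\log(1/\delta)\). Your sentence that the bound is ``at most \(CM(\cdot)^{1/4}\) whenever the bracketed quantity is at most \(1\)'' relies on this absorption. It only breaks in degenerate regimes (e.g.\ \(\mathrm{VCdim}(\calG)=0\) or \(n=1\), with \(\delta\) near \(1\)), where the lemma as literally stated can fail anyway. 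So this is a shared convention, not a gap in your argument.
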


\begin{proof}
    Let any \(\Delta > 0\) such that \(k:=M/\Delta\) is an integer. For any \(i \in \integer\) and \(x \in \calX\), we define \(I_i(x) := \indicator{i \Delta \leq f(x) < (i+1)\Delta}\). Then, for any \(g \in \calG\), we have
    {
    \allowdisplaybreaks
    \begin{align*}
        \ABS{ \ev{x \sim \calU(S)}{f(x)g(x)} - \ev{x \sim \calD}{f(x)g(x)} } &= 
        \ABS{ \sum_{i=-k}^{k} \ev{x \sim \calU(S)}{f(x)g(x) I_i(x)} - \ev{x \sim \calD}{f(x)g(x) I_i(x)} } \\&\leq 
        \sum_{i=-k}^{k} \ABS{ \ev{x \sim \calU(S)}{f(x)g(x) I_i(x)} - \ev{x \sim \calD}{f(x)g(x) I_i(x)} } \\&\leq 
        \sum_{i=-k}^{k}  \ABS{ i\Delta \left(  \ev{x \sim \calU(S)}{g(x) I_i(x)} - \ev{x \sim \calD}{g(x) I_i(x)} \right) } \\&+
        \Delta\ev{x \sim \calU(S)}{g(x) I_i(x)} + \Delta\ev{x \sim \calD}{g(x) I_i(x)}
        \\&\leq 
        2\Delta +  \sum_{i=-k}^{k}  \ABS{ i\Delta \left(  \ev{x \sim \calU(S)}{g(x) I_i(x)} - \ev{x \sim \calD}{g(x) I_i(x)} \right) } \\&\leq 
        2\Delta + M\sum_{i=-k}^{k} \ABS{ \ev{x \sim \calU(S)}{g(x) I_i(x)} - \ev{x \sim \calD}{g(x) I_i(x)} } .
    \end{align*}
    }
    Since the VC dimension of the class \(\{x \mapsto \indicator{a \leq f(x) < b} : a, b \in \real \}\) is at most \(2\), by \Cref{fact:vc_and}, we have that for any \(g \in \calG\), integer \(i\) and \(\Delta > 0\), the function \(g(x) I_i(x)\) belongs to a class of VC dimension at most \(O\left(\mathrm{VCdim}(\calG)\right)\).
    Thus, by \Cref{fact:uc0}, we have that with probability at least \(1 - \delta\), it holds
    \[
    \sup_{g \in \calG} \ABS{ \ev{x \sim \calU(S)}{f(x)g(x)} - \ev{x \sim \calD}{f(x)g(x)} } \leq  O \left( \Delta +  \dfrac{M^2}{\Delta}\sqrt{\dfrac{\mathrm{VCdim}(\calG) \log(n) + \log(1/\delta)}{n}} \right) \, .
    \]
    Taking \(\Delta\) to minimize the right hand side, concludes the proof.
\end{proof}

\end{document}